\documentclass[11pt]{article}

\usepackage{fullpage}
\usepackage[utf8]{inputenc}
\usepackage[T1]{fontenc}
\usepackage{textcomp}
\usepackage{parskip}
\usepackage{natbib}
\usepackage{hyperref}
\usepackage{url}
\usepackage{booktabs}
\usepackage{amsfonts}
\usepackage{amsmath}
\usepackage{verbatim}
\usepackage{nicefrac}
\usepackage{microtype} 
\usepackage[table]{xcolor} 
\usepackage{amsmath}
\usepackage{amsthm}
\usepackage{bm}
\usepackage{cleveref}
\usepackage{tikz}
\usepackage{enumitem}
\usepackage[ruled]{algorithm2e}
\usepackage{setspace}
\usepackage{subcaption}
\usepackage{multirow}
\usepackage{autonum}
\usepackage{enumitem}
\usepackage{float}

\def\Z{\mathcal{Z}}
\def\F{\mathcal{F}}

\def\G{\mathcal{G}}

\def\R{\mathbb{R}}
\def\P{\mathcal{P}}
\def\P2{\mathcal{P}_2}
\def\W2{\mathcal{W}_2}

\def\E{\mathbb{E}}
\def\V{\mathbb{V}}

\newtheorem{theorem}{Theorem}
\newtheorem{proposition}{Proposition}

\theoremstyle{definition}

\newtheorem{remark}{Remark}

\newcounter{mntcomm}

\newcommand{\lr}[1]{\left(#1 \right)}

\newcommand{\lrsb}[1]{\left[#1 \right]}

\title{Bures-Wasserstein Importance-Weighted Evidence Lower Bound: Exposition and Applications\footnote{Acknowledgement: We are grateful to the Editor, Associate Editor and the three anonymous reviewers for their careful reading of the manuscript and for their constructive comments and suggestions, which have substantially improved the paper.
}}
\author{Peiwen Jiang, Takuo Matsubara, and Minh-Ngoc Tran}
\date{The University of Sydney Business School, Australia}

\begin{document}

\maketitle

\begin{abstract}
The importance-weighted evidence lower bound (IW-ELBO) has emerged as a compelling objective for variational inference (VI), providing a tighter bound than the standard ELBO and mitigating its pathological mode-seeking behavior.
However, optimizing the IW-ELBO in Euclidean space can be inefficient:~as the number of importance samples $K$ increases, standard Euclidean gradient estimators suffer from a vanishing signal-to-noise ratio (SNR).
This paper reformulates the optimization of the IW-ELBO within the Bures-Wasserstein (BW) space, the manifold of Gaussian distributions equipped with the 2-Wasserstein metric.
We derive the infinite-dimensional Wasserstein gradient of the IW-ELBO and project it onto the BW space, yielding a computationally tractable BW gradient for Gaussian VI.
While the SNR of the Euclidean gradient is known to vanish at a rate of $\mathcal{O}(1/\sqrt{K})$, we prove that the SNR of the Wasserstein gradient scales favorably at a rate of $\Omega(\sqrt{K})$.
Crucially, the BW gradient maintains computational tractability while inheriting the stability of the Wasserstein geometry.
We establish that its SNR interpolates between the Euclidean and Wasserstein regimes, achieving a non-vanishing rate of $\Omega(1)$.
We extend this geometric analysis to the variational R\'{e}nyi importance-weighted autoencoder bound, proving analogous stability guarantees.
Empirical evaluations demonstrate that the proposed algorithm achieves superior mass-covering and algorithmic performance compared to established baselines.

\vspace{5pt}
\noindent
{\bf Keywords:} Variational Inference, Wasserstein Space, Optimal Transport, Wasserstein Gradient Descent 
\end{abstract}

\section{Introduction}

Driven by the growing sophistication of statistical models, posterior distributions in modern applications have increasingly become high-dimensional and complex.
This computational burden has motivated a shift from classical sampling techniques toward scalable alternatives. 
Two prominent paradigms have emerged at the forefront of this transition: variational inference (VI) and Wasserstein gradient flow (WGF).
VI \citep{Jordan:ML1999} frames posterior approximation as an optimization problem over the parameter of a tractable density family, a task typically achieved by maximizing the evidence lower bound (ELBO).
Gaussian VI, for example, is widely employed to optimize the mean and covariance within a Gaussian family.
In contrast, WGF \citep{Jordan1998} offers a non-parametric, geometric framework, under which the target-density approximation is formally characterized as an infinite-dimensional optimization over the space of probability distributions.

Recent literature has made substantial progress in unifying the two distinct paradigms of parametric VI and non-parametric WGF. 
Notably, \cite{Lambert2022} established that Gaussian VI can be recast as a WGF restricted to the manifold of Gaussian distributions. 
This unification relies on equipping the space of Gaussian measures with the 2-Wasserstein metric, giving rise to the Bures-Wasserstein (BW) space \citep{Bures1969}.
The BW geometry elegantly bridges the differential calculus in the parameter space with the infinite-dimensional optimization landscape over probability distributions. 
This geometric perspective facilitates novel theoretical insights and algorithmic developments by importing convex optimization techniques from optimal transport into VI. 
Leveraging this framework, \cite{Diao2023} developed an efficient algorithm for Gaussian VI that adapts a forward-backward scheme originally designed for WGF.
They demonstrated that exploiting the BW geometry yields the fastest known convergence rates for Gaussian VI.

Parallel to these developments, the importance-weighted evidence lower bound (IW-ELBO) has emerged as a critical advancement within parametric VI \citep{Burda2016,Domke2018}.
A well-documented limitation of the standard ELBO is its inherent mode-seeking behavior, which frequently causes the variational approximation to underestimate the tails of the target posterior. 
The IW-ELBO mitigates this deficiency by constructing a strictly tighter lower bound on the marginal log-likelihood, utilizing multiple samples from the variational distribution.
As the number of samples increases, the IW-ELBO monotonically approaches the true marginal likelihood. 
The statistical properties of the IW-ELBO have been extensively investigated in recent literature \citep{rainforth2018tighter,tan2020conditionally,daudel2023alpha,daudel2026importance}.

Despite the empirical and theoretical successes of the IW-ELBO within parametric VI, its geometric interpretation and potential connection to the WGF framework remain entirely unestablished. 
Establishing this link is of significant methodological interest, as synthesizing the tightened evidence bounds of the IW-ELBO with the geometric machinery of WGF provides a pathway to robust, scalable inference algorithms that circumvent the pathologies of standard Gaussian VI. This paper formally bridges this gap by advancing the intersection of importance-weighted variational bounds and the BW geometry. Our main contributions are summarized as follows:

\paragraph{Wasserstein and BW Gradient} We derive the infinite-dimensional Wasserstein gradient of the IW-ELBO, characterizing the steepest ascent direction over the space of probability measures and elucidating its structural distinction from Euclidean gradients. To address the inherent intractability of the Wasserstein gradient, we subsequently derive its projection onto the BW manifold.
The resulting BW gradient yields a closed-form, computationally tractable gradient of the IW-ELBO for Gaussian VI under the Wasserstein geometry.

\paragraph{Non-Degenerate Signal-to-Noise Ratio} A well-documented pathology in optimizing the IW-ELBO is that the signal-to-noise ratio (SNR) of standard Euclidean gradient estimators degenerates as the number of copies, $K$, increases \citep{rainforth2018tighter}.
The SNR decays as $\mathcal{O}(1 / \sqrt{K})$, meaning that the standard deviation of the gradient estimator diverges to infinity, relative to the magnitude of the exact gradient, as $K$ increases.

We formally prove that the Wasserstein and BW gradient of the IW-ELBO circumvents this degeneration. 
We establish that the SNR of the Wasserstein gradient scales favorably at a rate of $\Omega(\sqrt{K})$, guaranteeing stabilized gradient estimation in the large-$K$ regime.
The BW gradient enjoys the computational tractability while inheriting this stability.
Its SNR interpolates those of the Wasserstein and Euclidean gradients, yielding a rate of $\Omega(1)$ constant across all values of $K$.

\paragraph{BW IW-ELBO Optimization} We develop an efficient and accurate Gaussian VI based on the IW-ELBO, leveraging the derived BW gradients and the Euler discretization of WGF. We demonstrate strong mass-covering capabilities of our variational approximation, effectively capturing target density tails and mitigating the pathological underestimation common for standard Gaussian VI. We assess the efficiency of our algorithm through comprehensive empirical evaluations. 

\paragraph{Extension to Generalized Bounds} Finally, we extend our methodology and its analysis to the variational R\'{e}nyi importance-weighted autoencoder (VR-IWAE) bound \citep{daudel2023alpha}, another recent generalization of the ELBO. We derive the corresponding Wasserstein and BW gradients for the VR-IWAE bound, which exhibit the same SNR scaling rates as those for the IW-ELBO. 
We then provide a complementary BW Gaussian VI algorithm for the VR-IWAE bound.

Our paper sits within the rapidly growing body of literature exploiting optimal transport (OT) in statistics and machine learning. One of the earliest and most influential applications of OT is in distributional data analysis, where each observation is a probability distribution and Wasserstein geometry provides a natural metric for comparing, averaging, and performing statistical inference on such objects \citep{chen2023wasserstein,ghodrati2022distribution,matsubara2026wasserstein}. Beyond distributional data, OT has become a fundamental tool in statistical theory, where Wasserstein distances have been used to establish identifiability, convergence rates, and asymptotic theory for latent structured models, mixture models, and Bayesian nonparametric procedures \citep{nguyen2026optimal,catalano2021measuring}. In computational statistics and machine learning, OT has also found widespread use in applications including generative modeling and Bayesian computation, owing to its ability to capture the underlying geometry of probability distributions. More recently, there has been increasing interest in exploiting the differential geometry of Wasserstein space as a mathematical framework for optimization over probability distributions \citep{salim2020wasserstein,Diao2023,Lambert2022}. Our work contributes to this latter direction by further developing optimization algorithms that exploit the intrinsic geometry of BW space for statistical learning.

The remainder of the paper is organized as follows.
\Cref{sec:preliminaries} reviews the necessary background on VI and WGF.
\Cref{eq:BW-IW-ELBO} derives the Wasserstein and Bures-Wasserstein gradients of the IW-ELBO, formulating the objective as a functional over probability distributions.
We then investigate the theoretical properties of these gradients,
before deriving the practical Gaussian VI algorithm, leveraging the WGF of IW-ELBO with BW geometry.
\Cref{sec:Numerical applications} presents empirical experiments validating the performance of our framework.
Finally, \Cref{sec:VR-IWAE} extends our analysis to the VR-IWAE bound, followed by concluding remarks in \Cref{sec:conclusion}.
Proofs and technical details are included in the Appendix.

\section{Preliminaries} \label{sec:preliminaries}

This section provides some preliminaries of VI, the IW-ELBO, and Wasserstein geometry required for our methodological development.
We first introduce a set of standard notations.

\paragraph{Setup and Notation}
Let $\| \cdot \|$ and $\langle\cdot,\cdot\rangle$ denote the standard norm and inner-product in $\mathbb{R}^d$. 
We denote by $\mathbf{S}^d$ the space of $d \times d$ symmetric matrices.
Let $\mathcal{P}_2(\mathbb{R}^d)$ be the space of probability measures on $\mathbb{R}^d$ with a finite second moment. 
For a measurable map $T:\mathbb{R}^d\to\mathbb{R}^d$, $T_{\#}\mu$ denotes the push-forward measure of $\mu$ through $T$. 
For sequences of positive numbers $a_n$ and $b_n$, we write $a_n=O(b_n)$ ($a_n=\Omega(b_n)$) if there exists a constant $c>0$ such that $a_n/b_n\leq c$ ($a_n/b_n\geq c$, respectively) for all $n$ sufficiently large.
Similarly, we write $a_n=o(b_n)$ if $a_n/b_n\to 0$ as $n\to\infty$.
 
\subsection{Variational Inference and Importance-Weighted ELBO}

Many statistical models involve unobserved latent variables $z\in\mathcal{Z}$ to explain the structure within observed data $x$, postulating a joint distribution $p(x,z)$. 
VI offers a scalable framework to approximate the analytically intractable posterior $p(z|x)$ using a tractable variational distribution $q_\psi(z)$ parameterized by $\psi$. 
Standard VI optimizes the variational parameter $\psi$ by maximizing the ELBO, derived as a lower bound on the log marginal likelihood $\log p(x)$:
\begin{align}
    \operatorname{ELBO}(q_\psi) := \mathbb{E}_{z \sim q_\psi}\left[ \log \left( \frac{ p(x, z) }{ q_\psi(z) } \right) \right] \leq \log p(x) .
\end{align}
This lower bound induces the fundamental decomposition $\log p(x) = \operatorname{ELBO}(q_\psi) + \text{KL}(q_\psi(z) || p(z | x))$, where the latter term is the Kullback-Leibler (KL) divergence from the variational distribution to the posterior. Because the log marginal likelihood $\log p(x)$ is constant with respect to $\psi$, maximizing the ELBO is equivalent to minimizing this KL divergence. However, because the KL divergence is the expectation of $\log q_\psi(z)-\log p(z|x)$ taken with respect to $q_\psi(z)$, it places disproportionate importance on the log-density mismatch in high-probability regions of the variational distribution. Consequently, maximizing the ELBO might not enforce accurate estimation of the posterior tail probabilities; rather, it frequently prioritizes finding a single high-probability mode of a potentially multimodal posterior $p(z|x)$, leading to a well-documented systematic underestimation of the posterior variance.

To address these limitations of the standard ELBO, several advanced methods have been developed, among which the IW-ELBO of \cite{Burda2016} remains one of the most prominent.
Given a specified variational distribution, the IW-ELBO constructs a more accurate estimate of the marginal log-likelihood by leveraging importance weights. 
Utilizing $K$ independent and identically distributed random variables $z_1, \dots, z_K \sim q_\psi(z)$, the objective is defined as:
\begin{align}
    \operatorname{IW-ELBO}_K(q_\psi) := \mathbb{E}_{z_1, \dots, z_K \sim q_\psi}\left[ \log \left(\frac{1}{K}\sum_{k=1}^K \frac{p(x,z_k)}{q_\psi(z_k)} \right) \right] . \label{eq:parametric_IW_ELBO}
\end{align}
When $K=1$, this formulation reduces to the standard ELBO. By Jensen's inequality, the IW-ELBO is shown to interpolate monotonically between the ELBO and the marginal log-likelihood as a function of the sample size $K$ \citep{Burda2016}:
\begin{align}
    \log p(x) \ge \operatorname{IW-ELBO}_{K+1}(q_\psi) \ge \operatorname{IW-ELBO}_{K}(q_\psi) \ge \operatorname{ELBO}(q_\psi),
\end{align}
and converge to the log marginal likelihood $\log p(x)$ as $K \to \infty$. 
Consequently, given sufficient computational resources, the IW-ELBO provides an arbitrarily tight lower bound, a property that underpins its documented superior performance over standard VI across a range of applications \citep{Burda2016, mnih2016variational, Tucker2019DoublyRG, daudel2023alpha, daudel2026importance}.

\subsection{Wasserstein and Bures-Wasserstein Geometry} \label{sec:bures_wasserstein_geometry}

The Wasserstein space provides a natural geometric framework for modeling the dynamics of probability distributions. The 2-Wasserstein metric between two distributions $\mu, \nu \in \mathcal{P}_2(\mathbb{R}^d)$ is defined as:
\begin{align}
    W_2(\mu,\nu) := \left\{\inf_{T:T_{\#}\mu=\nu}\int_{\mathbb{R}^d}\|x-T(x)\|^2\mu(dx)\right\}^{1/2}.
\end{align}
This metric space, denoted $\mathbb{W}_2(\mathbb{R}^d)$, is endowed with a rich differential structure \citep{villani2009optimal,Ambrosio2005}. 
Under its geometry, the notion of gradient for a functional $\mathcal{F}$ defined on $\mathbb{W}_2(\mathbb{R}^d)$ can be derived using the first variation \citep{Santambrogio:OTbook}. 
The first variation of $\mathcal{F}$ at a distribution $\mu$ is a function $\delta \mathcal{F}(\mu) / \delta \mu: \mathbb{R}^d \to \mathbb{R}$ satisfying the following equation:
\begin{align}
    \lim_{\epsilon \to 0^+} \frac{\mathcal{F}(\mu + \epsilon \nu ) - \mathcal{F}(\mu) }{\epsilon} = \int_{\mathbb{R}^d} \frac{\delta \mathcal{F}(\mu)}{\delta \mu} (x) \nu(dx) \label{eq:first_variation}
\end{align}
for all signed measures $\nu$ such that $\mu + \epsilon \nu \in \mathcal{P}_2(\mathbb{R}^d)$ for sufficiently small $\epsilon > 0$. 
The Wasserstein gradient of $\mathcal{F}$ at $\mu$, denoted $\nabla^\mathrm{W} \mathcal{F}(\mu): \mathbb{R}^d \to \mathbb{R}^d$, is defined as the Euclidean gradient of this first variation:
\begin{align}
    [\nabla^\mathrm{W} \mathcal{F}(\mu)](x) = \nabla_x \frac{\delta \mathcal{F}(\mu)}{\delta\mu}(x) .
\end{align}
The Wasserstein gradient formally represents the direction in which the functional $\mathcal{F}$ increases most rapidly under the intrinsic geometry of $\mathbb{W}_2(\mathbb{R}^d)$ \citep{Ambrosio2005}.

Albeit the elegant theory, performing optimization within the Wasserstein space presents significant computational challenges. 
For example, the Wasserstein gradient can be intractable even for common functionals $\mathcal{F}$ such as the KL divergence.
A powerful strategy to ensure tractability is to restrict focus to the space of non-degenerate Gaussian distributions equipped with the 2-Wasserstein metric, formally termed the Bures-Wasserstein (BW) space \citep{Lambert2022}. 
Denoted by $\operatorname{BW}(\mathbb{R}^d)$, this space constitutes a Riemannian submanifold of $\mathbb{W}_2(\mathbb{R}^d)$ wherein calculus operations become highly intuitive and computationally efficient. 
In this manifold, the tangent space at a given Gaussian distribution $q = \mathcal{N}(m, \Sigma)$ is explicitly identified as a set of affine maps \citep{Diao2023}:
\begin{align}
    \mathcal{T}_q \operatorname{BW}(\mathbb{R}^d) := \Big\{ x \mapsto a + S (x - m) \mid a \in \mathbb{R}^d, S \in \mathbf{S}^d \Big\},
\end{align}
with $\mathbf{S}^d$ the set of symmetric matrices. 

Because $\operatorname{BW}(\mathbb{R}^d)$ is an embedded submanifold of $\mathbb{W}_2(\mathbb{R}^d)$, the Riemannian gradient of a functional $\mathcal{F}$ on $\operatorname{BW}(\mathbb{R}^d)$, referred to as the BW gradient, can be analytically computed by projecting the full Wasserstein gradient onto the BW tangent space.
Specifically, at a given Gaussian distribution $q = \mathcal{N}(m, \Sigma)$, the BW gradient $\nabla^{\operatorname{BW}} \mathcal{F}(q)$ is defined via the orthogonal projection:
\begin{align}
    \nabla^{\operatorname{BW}} \mathcal{F}(q) := \text{arg}\min_{v \in \mathcal{T}_q \operatorname{BW}(\mathbb{R}^d)} \| v - \nabla^\mathrm{W} \mathcal{F}(q) \|_{L_q^2(\mathbb{R}^d)} . \label{eq:BW gradient derivation}
\end{align}
Because elements of the tangent space $\mathcal{T}_q \operatorname{BW}(\mathbb{R}^d)$ are strictly affine maps, the resulting BW gradient has the affine form $[\nabla^{\operatorname{BW}} \mathcal{F}(q)](x) = a_* + S_*(x - m)$, where 
\begin{align}
    a_* := \mathbb{E}_{X \sim q}[ G(X) ] \qquad \text{and} \qquad S_* := \mathbb{E}_{X \sim q}[ \nabla G(X) ] . \label{eq:BW gradient derivation}
\end{align}
with $G(x) := [\nabla^\mathrm{W} \mathcal{F}(q)](x)$.
See also \cite{Diao2023} for more detail.

\section{Bures-Wasserstein IW-ELBO} \label{eq:BW-IW-ELBO}

To date, standard optimization of the IW-ELBO relies on Euclidean gradient descent over the parameters $\psi$. 
This, however, introduces a fundamental ``objective-mechanism mismatch.''
While the VI objective is essentially a functional over probability distributions, Euclidean gradient descent does not leverage their intrinsic geometry.
The resulting optimization dynamics over distributions can often be unstable and inefficient. 
Aligning the optimization mechanism with the meaningful geometry of probability distributions directly motivates the adoption of the Wasserstein space.

This section presents our main results concerning the IW-ELBO objective for Gaussian VI under the Wasserstein geometry.
Proofs of all the theoretical results are provided in \Cref{apx:proofs}.

\subsection{Wasserstein and BW Gradient of IW-ELBO}\label{sec:WG_IW_ELBO} 

We begin by providing a rigorous derivation of the Wasserstein and BW gradient for the IW-ELBO.
Recall that the Wasserstein gradient is defined over functionals of probability distributions.
To this end, we reformulate the IW-ELBO as a functional of $K$ probability distributions $q_1, \dots, q_K$: 
\begin{equation}
	\operatorname{IW-ELBO}_K (q_1, \dots, q_K) := \mathop{\E}_{z_1 \sim q_1, \dots, z_K \sim q_K}\left[ \log \left( \frac{1}{K} \sum_{i=1}^{K} \frac{p(x, z_i)}{q_i(z_i)} \right) \right] . \label{eq:IW_ELBO_Functional}
\end{equation}
The standard IW-ELBO is recovered by substituting a parametric variational distribution $q_\psi$ for all the $K$ arguments $q_1, \dots, q_K$.
In contrast, we consider a configuration where all $K$ arguments $q_1, \dots, q_K$ are set to a common arbitrary distribution $q$.

Our first aim is to derive the coordinatewise Wasserstein gradient of the IW-ELBO with respect to the $n$-th argument $q_n$ for an arbitrary index $n\in\{1,...,K\}$.
Subsequently, we demonstrate that the functional form of this gradient does not depend on the choice of index $n$.
Given an index $n$, fixing the remaining $K-1$ arguments $\{ q_i \}_{i \ne n}$ to the common distribution $q$, we define a functional $\F_n(q_n)$ on $\mathbb{W}_2(\R^d)$ by
\begin{equation}
    \F_n(q_n) := \mathop{\E}_{z_n \sim q_n} \left[\mathop{\E}_{ \{ z_1, \dots, z_K \} \setminus z_n \overset{i.i.d.}{\sim} q } \left[ \log \left( \frac{1}{K} \sum_{i = 1}^{K} \frac{p(x, z_i)}{q_i(z_i)} \right) \right] \right] ,
\end{equation}
where the subscript $\{ z_1, \dots, z_K \} \setminus z_n \overset{i.i.d.}{\sim} q$ denotes that the variables $\{ z_i \}_{i \neq n}$ are i.i.d.~random variables following $q$.
This expression represents the IW-ELBO viewed as a functional of the $n$-th argument $q_n$.
The coordinatewise Wasserstein gradient at $q_n = q$ is established as below.

\begin{proposition}[Coordinatewise Wasserstein Gradient] \label{pro: proposition1}
Assume standard regularity conditions that justify interchanging differentiation and expectation.
The coordinatewise Wasserstein gradient of $\F_n(q_n)$ at $q_n = q$ is given by
\begin{equation}\label{eq:Wasserstein grad IW-ELBO}
	\nabla^W \left[ \F_n(q) \right](z_n)  = \mathop{\E}_{ \{ z_1, \dots, z_K \} \setminus z_n \overset{i.i.d.}{\sim} q } \left[ \left( \frac{ w(z_n) }{ \sum_{i=1}^{K} w(z_i)} \right)^2 \nabla_{z_n} \log w(z_n) \right] \in \R^d ,
\end{equation}    
where we define $w(z) := p(x, z) / q(z)$.
\end{proposition}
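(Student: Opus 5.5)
The plan is to compute the first variation of $\F_n$ at $q_n = q$ directly from the definition \eqref{eq:first_variation}, and then take its Euclidean gradient in $z_n$. Write $S(z_n) := \frac{p(x,z_n)}{q_n(z_n)} + \sum_{i \ne n} w(z_i)$. Here the $K-1$ fixed arguments $q_i = q$ enter only through $w(z_i) = p(x,z_i)/q(z_i)$, and these terms are \emph{not} perturbed. The functional then reads
\begin{equation}
\F_n(q_n) = \int \E_{\text{rest}}\left[ \log \frac{1}{K}\left( \frac{p(x,z_n)}{q_n(z_n)} + \sum_{i\neq n} w(z_i) \right) \right] q_n(z_n)\, dz_n .
\end{equation}
The argument $q_n$ therefore appears in two places: as the integrating measure, and inside the logarithm through the density ratio. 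I will perturb $q_n = q + \epsilon \nu$, with $\nu$ a signed measure having a density (also denoted $\nu$), and differentiate at $\epsilon = 0$, using the assumed regularity to interchange the limit with the expectations.

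The derivative splits into two pieces. The first comes from the change of the integrating measure: $\int \E_{\text{rest}}[\log(S(z)/K)]\,\nu(dz)$, with $S$ evaluated at $q_n = q$. The second comes from the change of the density inside the logarithm. Since $\partial_\epsilon \frac{p}{q+\epsilon\nu} = -\frac{p\,\nu}{q^2}$ at $\epsilon=0$, this piece equals
\begin{equation}
\int q(z)\, \E_{\text{rest}}\left[ -\frac{p(x,z)\,\nu(z)}{q(z)^2 S(z)} \right] dz = -\int \E_{\text{rest}}\left[ \frac{w(z)}{S(z)} \right] \nu(dz).
\end{equation}
Reading off the integrand gives the first variation
\begin{equation}
\frac{\delta \F_n}{\delta q_n}(z) = \E_{\text{rest}}\left[ \log \frac{S(z)}{K} \right] - \E_{\text{rest}}\left[ \frac{w(z)}{S(z)} \right], \qquad S(z) = w(z) + \sum_{i\ne n} w(z_i).
\end{equation}
It is defined up to an additive constant, which does not affect the gradient.

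Next I take $\nabla_z$ inside the expectations, again by the regularity assumption, using $\nabla_z S(z) = \nabla w(z) = w(z)\nabla\log w(z)$. The first term gives $\E[(w/S)\nabla \log w]$. For the second term, the quotient rule gives $\nabla(w/S) = \nabla w/S - w\,\nabla w/S^2$, so its contribution is $-\E[(w/S)\nabla\log w] + \E[(w/S)^2\nabla \log w]$. The first-order terms cancel exactly, leaving $\E_{\text{rest}}[(w(z_n)/S)^2 \nabla_{z_n}\log w(z_n)]$, which is \eqref{eq:Wasserstein grad IW-ELBO}.

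The main obstacle is the bookkeeping in the first step: recognising that $q_n$ enters both as the sampling measure and inside the weight. Only the $n$-th weight is perturbed, since the other $q_i$ are held fixed at $q$ in the definition of $\F_n$. Dropping the second, "score-like" contribution would wrongly give the gradient $\E[(w/S)\nabla\log w]$. It is precisely this contribution that cancels the first-order term and produces the squared normalized weight.

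I will also note that nothing in the computation depends on $n$, because the $z_i$, $i\ne n$, are exchangeable. This justifies the subsequent claim that the functional form of the gradient is index-independent.
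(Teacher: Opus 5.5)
Your proposal is correct and follows essentially the same route as the paper. Both arguments perturb $q_n = q + \epsilon\nu$ and apply the product rule to the measure term and the in-log density term, which gives the first variation $\log\bigl(\frac{1}{K}\sum_i w(z_i)\bigr) - w(z_n)/\sum_i w(z_i)$; differentiating in $z_n$ then cancels the first-order terms and leaves the squared normalized weight. The only cosmetic difference is that the paper fixes the other $K-1$ samples, works with the inner functional $\mathcal{F}_n^*$, and averages at the end, whereas you carry the expectation over the remaining samples throughout.
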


The common intractability of the Wasserstein gradient stems from the weight function $w(z) = p(x, z) / q(z)$ depending on the density $q$, which is not necessarily available in nonparametric settings.
This intractability can be resolved by restricting focus to the Gaussian variational distributions.

Building upon the above derivation, we now establish the BW gradient of the IW-ELBO.
To this end, we first substitute a Gaussian distribution $\mathcal{N}(m, \Sigma)$ into the argument $q$ of the coordinatewise Wasserstein gradient established in \eqref{eq:Wasserstein grad IW-ELBO}.
We then take a projection of the resulting Wasserstein gradient \eqref{eq:Wasserstein grad IW-ELBO} onto the tangent space of the BW space, yielding the following proposition.

\begin{proposition}[BW Gradient] \label{pro: proposition2}
Under the same condition as \Cref{pro: proposition1}, the BW gradient of $\mathcal{F}_n(q)$ at a Gaussian density $q = \mathcal{N}(m, \Sigma)$ is given by the following affine map
\begin{align}
    \nabla^{\emph{BW}}[ \mathcal{F}_n(q) ](z) = a_* + S_* (z - m) . \label{eq:bw_iw_elbo0}
\end{align}	
The vector $a_*$ and matrix $S_*$ are defined as, respectively,
\begin{align}
    a_* & := \mathop{\E}_{z_1, \dots, z_K \overset{i.i.d.}{\sim} \mathcal{N}(m,\Sigma)} \left[ \left( \frac{ w(z_n) }{ \sum_{i=1}^{K} w(z_i)} \right)^2 \nabla_{z_n} \log w(z_n) \right] ; \label{eq:bw_iw_elbo1} \\
    S_* & := \mathop{\E}_{z_1, \dots, z_K \overset{i.i.d.}{\sim} \mathcal{N}(m,\Sigma)} \left[ \nabla_{z_n}\left\{ \left(\frac{ w(z_n) }{ \sum_{i=1}^{K} w(z_i)} \right)^2 \nabla_{z_n} \log w(z_n)\right\} \right] , \label{eq:bw_iw_elbo2}
\end{align}	
where we define $w(z) := p(x, z) / q(z)$ with the Gaussian density $q = \mathcal{N}(m, \Sigma)$.
\end{proposition}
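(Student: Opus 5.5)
The plan is to combine \Cref{pro: proposition1} with the projection formula \eqref{eq:BW gradient derivation} from \Cref{sec:bures_wasserstein_geometry}. Set $q=\mathcal{N}(m,\Sigma)$ and write
\begin{align}
G(z) := \nabla^W[\F_n(q)](z) = \E_{\{z_i\}_{i\neq n}\overset{i.i.d.}{\sim} q}\left[ \left( \frac{w(z)}{w(z)+\sum_{i\neq n} w(z_i)}\right)^2 \nabla_z \log w(z)\right],
\end{align}
which is \eqref{eq:Wasserstein grad IW-ELBO} evaluated at $z_n=z$. \Cref{pro: proposition1} applies verbatim here, since $q$ was arbitrary there. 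The BW gradient is the $L^2_q$-orthogonal projection of $G$ onto the affine tangent space $\mathcal{T}_q\operatorname{BW}(\R^d)$. By the characterisation recalled in \Cref{sec:bures_wasserstein_geometry}, this projection equals $a_*+S_*(z-m)$ with $a_*=\E_{z\sim q}[G(z)]$ and $S_*=\E_{z\sim q}[\nabla G(z)]$.

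To justify that characterisation rather than only cite it, I would check the normal equations. Minimising $\E_q\|a+S(z-m)-G(z)\|^2$ over $a\in\R^d$ and $S\in\mathbf{S}^d$ gives $a=\E_q[G]$, because $\E_q[z-m]=0$. For $S$, the condition is $S\Sigma+\Sigma S = \E_q[(G(z)-a)(z-m)^\top]+\E_q[(z-m)(G(z)-a)^\top]$. Gaussian integration by parts (Stein's lemma), $\E_q[G(z)(z-m)^\top]=\E_q[\nabla G(z)]\Sigma$, turns the right-hand side into $\E_q[\nabla G]\Sigma+\Sigma\E_q[\nabla G]^\top$. Since $G$ is a gradient field, $\nabla G$ is symmetric: it is the Hessian of the first variation, which is itself an expectation over the other samples of a scalar function of $z$. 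Hence $S=\E_q[\nabla G]$ solves the Lyapunov equation, and the solution is unique because $\Sigma\succ 0$.

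What remains is to rewrite these expectations as in \eqref{eq:bw_iw_elbo1}--\eqref{eq:bw_iw_elbo2}.
\begin{itemize}
\item For $a_*$: by Fubini, averaging $G(z_n)$ over $z_n\sim q$ is the joint expectation over $z_1,\dots,z_K$ i.i.d.\ $\mathcal{N}(m,\Sigma)$, which gives \eqref{eq:bw_iw_elbo1}.
\item For $S_*$: I would move $\nabla_{z_n}$ inside the inner expectation over $\{z_i\}_{i\neq n}$, then apply Fubini again, which gives \eqref{eq:bw_iw_elbo2}. The derivative $\nabla_{z_n}$ acts on both occurrences of $z_n$, namely $w(z_n)$ in the numerator and the $i=n$ term of the denominator.
\end{itemize}

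The main obstacle is the interchange of differentiation and integration, together with the integrability needed for Stein's lemma: $G$ must be weakly differentiable and $\E_q\|\nabla G\|$ finite. The ratio $(w(z_n)/\sum_i w(z_i))^2$ lies in $[0,1]$ and is smooth whenever $p(x,\cdot)>0$ and smooth, so I would dominate the integrand by polynomial-growth bounds on $\nabla\log p(x,\cdot)$ and $\nabla^2\log p(x,\cdot)$, which are integrable under a Gaussian. These bounds fall under the "standard regularity conditions" already assumed in \Cref{pro: proposition1}.
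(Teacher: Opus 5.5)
Your proposal is correct and follows essentially the same route as the paper, whose proof consists only of substituting the Wasserstein gradient from \Cref{pro: proposition1} into the projection formula $a_*=\E_q[G]$, $S_*=\E_q[\nabla G]$. Your extra verification of that formula is sound (the normal equations, Gaussian integration by parts, symmetry of $\nabla G$ as the Hessian of the first variation, and uniqueness of the Lyapunov solution for $\Sigma\succ 0$), and it supplies a derivation the paper only cites from \cite{Diao2023}.
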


The BW gradient is readily computable via the Monte Carlo estimation as in \eqref{eq:bw_iw_elbo1}--\eqref{eq:bw_iw_elbo2}, because the weight function $w(z) = p(x, z) / q(z)$ is entirely tractable for the Gaussian density $q = \mathcal{N}(m, \Sigma)$.

\begin{remark}[Coordinate Invariance] \label{rm: invariance}
    The expressions for the Wasserstein gradient \eqref{eq:Wasserstein grad IW-ELBO} and the BW gradient \eqref{eq:bw_iw_elbo0} are invariant to the coordinate index $n$.
    Consider the BW gradient \eqref{eq:bw_iw_elbo0} first.
    Since $z_1, \dots, z_K$ are i.i.d.~random variables and the weight function $w(z)$ does not depend on the index $n$, the expected values for $a_*$ and $S_*$ are invariant under permutations of $z_1, \dots, z_K$.
    A similar argument holds for the Wasserstein gradient, where the notation $z_n$ represents the location that the gradient is evaluated and its subscript $n$ serves merely as a notational placeholder.
\end{remark}

By permutation symmetry, the full Wasserstein gradient of the joint objective is $K$ times the coordinatewise gradient.
The SNR, which we analyze next, remains invariant to constant scaling of the gradient since it is defined via a ratio.
Furthermore, constant scaling does not affect the subsequent algorithmic procedure as it is absorbed into the learning rate.
It therefore suffices to evaluate the coordinatewise gradient for a single arbitrary index, setting $n = K$ in what follows.

\subsection{Non-Degenerate Signal-to-Noise Ratio of Wasserstein and BW Gradient}\label{eq:Euclidean vs Wasserstein}

The IW-ELBO provides an arbitrarily tight lower bound of the marginal likelihood as $K$ increases.
However, \cite{rainforth2018tighter} observed that using large $K$ can computationally hinder the optimization of the variational parameters $\psi$.
We first recap their argument.

Recall the definition of the IW-ELBO in \eqref{eq:parametric_IW_ELBO}.
Here, the variational family $q_\psi$ is indexed by an $s$-dimensional parameter vector $\psi=(\psi_1, \dots, \psi_s)$.
Assume that the variational distribution $q_\psi$ can be expressed as the push-forward distribution $q_\psi = ( T_\psi )_{\#} q_\epsilon$ from some base distribution $q_\epsilon$ and transform $T_\psi$ parameterized by $\psi$.
Under this reparameterization, the Euclidean gradient of the IW-ELBO with respect to the $r$-th coordinate $\psi_r$ is given by
\begin{align}\label{eq:Euclidean grad IW-ELBO}
	\frac{\partial}{\partial\psi_r}\operatorname{IW-ELBO}_K(q_\psi) = \mathop{\E}_{\epsilon_1, \dots, \epsilon_K \overset{i.i.d.}{\sim}  q_\epsilon(\cdot)} \left[\sum_{i=1}^K \frac{ w_\psi\big(T_\psi(\epsilon_i)\big) }{ \sum_{k=1}^{K} w_\psi\big(T_\psi(\epsilon_k)\big)}  \frac{\partial}{\partial\psi_r} \log w_\psi\big(T_\psi(\epsilon_i)\big) \right],
\end{align}
where we define $w_\psi(z) := p(x,z)/q_\psi(z)$ for brevity.
Let $g^r_{M,K}(\psi)$ denote the Monte Carlo estimator of the gradient \eqref{eq:Euclidean grad IW-ELBO} at $\psi$ constructed from $M$ independent samples of the $K$ random variables $\epsilon_1, \dots, \epsilon_K$.
\cite{rainforth2018tighter} showed that the SNR of this estimator scales as:
\begin{equation}
    \text{SNR}\big(g^{r}_{M,K}(\psi)\big)=\frac{|\E\big( g^{r}_{M,K}(\psi)\big)|}{\sqrt{\V(g^{r}_{M,K}(\psi))}} = O\left( \frac{\sqrt{M}}{\sqrt{K}} \right) .
\end{equation}
The SNR degenerates as $K$ increases, meaning that the relative standard deviation of the Monte Carlo estimator with respect to the magnitude of the exact gradient diverges to infinity.
This implies that accurate estimation of the gradient \eqref{eq:Euclidean grad IW-ELBO} becomes increasingly challenging for large $K$.

Observe that the Wasserstein gradient \eqref{eq:Wasserstein grad IW-ELBO} of the IW-ELBO features a squared normalized weight term $w(z_n)^2 / ( \sum_{i=1}^{K} w(z_i) )^2$.
This is in stark contrast to the linear term appearing in the Euclidean gradient \eqref{eq:Euclidean grad IW-ELBO}.
The squared expression bears a close resemblance to the \emph{doubly-reparameterized gradient} of the IW-ELBO, derived in \cite{Tucker2019DoublyRG}.
They demonstrated that the Monte Carlo estimator for the doubly-reparameterized gradient exhibits an SNR scaling of $O(\sqrt{MK})$, effectively mitigating the deterioration shown for the standard gradient.
Remarkably, the Wasserstein gradient \eqref{eq:Wasserstein grad IW-ELBO} of the IW-ELBO exhibits this same favorable SNR scaling.
This implies that the precision of the Wasserstein gradient estimator improves as $K$ increases.

We now formally establish the SNR scaling rate of the Wasserstein gradient \eqref{eq:Wasserstein grad IW-ELBO}.
As in \Cref{rm: invariance}, since this Wasserstein gradient is invariant to the coordinate index $n$, it suffices to focus on $n = K$.
Consider the Monte Carlo estimation of this gradient constructed from $M$ independent samples of the $K-1$ variables $z_1, \dots, z_{K-1}$.
Here, the coordinatewise Wasserstein gradient \eqref{eq:Wasserstein grad IW-ELBO} evaluated at an arbitrary location $z$ is a $d$-dimensional vector.
Similarly to \cite{rainforth2018tighter}, we consider the SNR of the $r$-th coordinate of the Wasserstein gradient at $z$ to streamline our analysis.
By abuse of notation, let $g^{r}_{M,K}(z)$ denote the $r$-th coordinate of this Monte Carlo estimator.

\begin{theorem}\label{the: Wass grad SNR} Assume that (i) $\E_{Z\sim q}(w(Z)^4)<\infty$ and that (ii) $\E_{Z\sim q}(w(Z)^{-12})<\infty$. Then, for any location $z$ at which $w(z)>0$, we have
\begin{equation}
\text{SNR}\big(g^{r}_{M,K}(z)\big) = \Omega(\sqrt{MK}) .
\end{equation}
\end{theorem}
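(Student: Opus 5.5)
Fix $n=K$ and a location $z$ with $w(z)>0$. Write $c:=\nabla_{z}\log w(z)$ with $r$-th coordinate $c_r$, which is a deterministic constant because $z$ is fixed. Set $S:=\sum_{i=1}^{K-1} w(z_i)$ with $z_i$ i.i.d.~from $q$, and $f_K(S):=w(z)^2/(w(z)+S)^2$. A single Monte Carlo draw is then $c_r f_K(S)$. The estimator $g^r_{M,K}(z)$ averages $M$ independent copies, so
\begin{equation}
\text{SNR}\big(g^r_{M,K}(z)\big)=\sqrt{M}\,\frac{\E[f_K(S)]}{\sqrt{\V(f_K(S))}},
\end{equation}
with $c_r$ cancelling (assume $c_r\neq 0$, otherwise the SNR is undefined or trivial). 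It therefore suffices to show $\E[f_K]\gtrsim K^{-2}$ and $\V(f_K)\lesssim K^{-5}$. The ratio is then $\Omega(K^{-2}/K^{-5/2})=\Omega(\sqrt{K})$.

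\emph{Mean.} Let $\mu:=\E_q[w(Z)]=p(x)$. Since $f_K$ is convex in $S$, Jensen gives $\E[f_K(S)]\ge w(z)^2/(w(z)+(K-1)\mu)^2$, which is $\Omega(K^{-2})$.

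\emph{Variance.} Write $f_K=w(z)^2 h(S)$ with $h(s)=(w(z)+s)^{-2}$. Rescale as $h(S)=K^{-2}\phi(\bar S)$, where $\bar S:=S/K$ and $\phi(u)=(w(z)/K+u)^{-2}$. Then $\V(h(S))=K^{-4}\V(\phi(\bar S))$, and the task reduces to $\V(\phi(\bar S))=O(1/K)$. The plan is a delta-method bound made non-asymptotic. Set $m:=\E[\bar S]=(K-1)\mu/K$ and use $\V(\phi(\bar S))\le \E[(\phi(\bar S)-\phi(m))^2]$. By the mean value theorem, $|\phi(\bar S)-\phi(m)|\le 2|\bar S-m|\,\xi^{-3}$ for some $\xi$ between $\bar S$ and $m$, so $\xi^{-3}\le \bar S^{-3}+m^{-3}$. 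Cauchy--Schwarz then gives
\begin{equation}
\E[(\phi(\bar S)-\phi(m))^2]\le 4\,\E[(\bar S-m)^4]^{1/2}\,\E[(\bar S^{-3}+m^{-3})^4]^{1/2}.
\end{equation}
By assumption (i) and the standard fourth-moment bound for i.i.d.~sums, $\E[(\bar S-m)^4]=O(K^{-2})$, so its square root is $O(1/K)$.

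The main obstacle is controlling the negative moment $\E[\bar S^{-12}]$ uniformly in $K$, and this is exactly where assumption (ii) enters. By the AM--HM inequality $\frac{K-1}{\sum_i w_i}\le \frac{1}{K-1}\sum_i w_i^{-1}$, so $\bar S^{-1}\le \frac{K}{K-1}\cdot\frac1{K-1}\sum_i w(z_i)^{-1}$. Jensen's inequality for the convex map $t\mapsto t^{12}$ then gives $\E[\bar S^{-12}]\le (K/(K-1))^{12}\,\E_q[w^{-12}]$, which is bounded in $K$. Combining the pieces yields $\V(\phi(\bar S))=O(1/K)$, hence $\V(f_K)=O(K^{-5})$ and $\text{SNR}=\Omega(\sqrt{MK})$. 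The only points needing care are the case $K=2$ and making sure all constants are uniform in $K$.
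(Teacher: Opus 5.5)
Your proof is correct and follows essentially the same route as the paper. You reduce to $M=1$ and to the scalar factor $(\text{const}+\bar S)^{-2}$, apply the mean value theorem, and then use Cauchy--Schwarz, with the fourth-moment bound from (i) and a harmonic-mean/Jensen bound on the $(-12)$-th moment from (ii), to obtain a variance of order $1/K$ after rescaling. The one small difference is that you lower-bound the mean directly by Jensen's inequality (convexity of $S\mapsto (w(z)+S)^{-2}$), whereas the paper uses the expansion $X_K=c_K^{-2}-\eta_K\bar\xi_K$ with a Cauchy--Schwarz bound on $\E(\eta_K\bar\xi_K)$. Your version is slightly cleaner and needs no negative moments for the mean.
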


We emphasize that this SNR enjoys a {\it Big-Omega} rate of $\sqrt{MK}$, strictly lower-bounded by a term that grows with $K$.
This result sheds a light on the appealing property of the Wasserstein gradient, although its computation is typically intractable.
We now turn our attention to the BW gradient.

Remarkably, the SNR of the BW gradient interpolates between these of the Euclidean and Wasserstein gradients.
Consider the Monte Carlo estimation of the BW gradient \eqref{eq:bw_iw_elbo0}, where the corresponding terms $a_*$ and $S_*$ are estimated from $M$ independent samples of the $K$ random variables $z_1, \dots, z_K$.
Let $g_{M,K}$ denote an arbitrary coordinate of the Monte Carlo estimator of either $a_*$ or $S_*$.
In what follows, we denote by $\partial_{z_r}\log w(z)$ the partial derivative w.r.t.~the $r$-th coordinate $z_r$ of a location $z$.

\begin{proposition}\label{pro: BW grad SNR} 
Assume that
\begin{itemize}
    \item[(i)] $\E_{Z\sim q}(w(Z)^4)<\infty$ and $\E_{Z\sim q}(w(Z)^{-16})<\infty$;
    \item[(ii)] $\E_{Z\sim q}\big(w(Z)^s\partial_{z_i}\log w(Z)\big) \ne 0 <\infty$ for all $i$ and $s=2,3$;
    \item[(iii)] $\E_{Z\sim q}\big(w(Z)^s\partial^2_{z_iz_j}\log w(Z)\big) \ne 0 <\infty$, $\E_{Z\sim q}\big(w(Z)^s\partial_{z_i}\log w(Z)\partial_{z_j}\log w(Z)\big) \ne 0 <\infty$ for all $i,j$ and $s=2,3,4$.
\end{itemize} 
Then, we have
\begin{equation}
\text{SNR}\big(g_{M,K}\big) = \Omega(\sqrt{M}) .
\end{equation}
\end{proposition}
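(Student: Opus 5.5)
Since the $M$ Monte Carlo replicates are i.i.d., $\E(g_{M,K}) = \E(g_{1,K})$ and $\V(g_{M,K}) = \V(g_{1,K})/M$. Hence $\text{SNR}(g_{M,K}) = \sqrt{M}\,|\E(g_{1,K})|/\sqrt{\V(g_{1,K})}$, and it suffices to show that the single-replicate ratio $|\E(g_{1,K})|/\sqrt{\V(g_{1,K})}$ is bounded below by a positive constant for all large $K$. To do so we would compute the exact leading-order asymptotics in $K$ of both numerator and denominator and show that they scale identically.

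Set $n=K$ and write $\bar w_{K-1} = \frac{1}{K}\sum_{i<K} w(z_i)$, so that $\sum_i w(z_i) = K\bar w_{K-1} + w(z_K)$. For $a_*$, the single-sample estimator of the $r$-th coordinate is $h(z_K)\,w(z_K)^2/(K\bar w_{K-1}+w(z_K))^2$ with $h = \partial_{z_r}\log w$. Differentiating the product in \eqref{eq:bw_iw_elbo2} gives the coordinate of $S_*$ as a sum of terms of the same structure:
\begin{itemize}
\item $w^2\,\partial^2_{z_rz_j}\log w/(\cdot)^2$;
\item $2w^2\,\partial_r\log w\,\partial_j\log w/(\cdot)^2$;
\item $-2w^3\,\partial_r\log w\,\partial_j\log w/(\cdot)^3$.
\end{itemize}
The last term comes from differentiating the denominator in $z_K$. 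Since $\E_q w = p(x) =: Z$, the law of large numbers gives $\bar w_{K-1}\to Z$, so each term behaves like $K^{-2}$ (or $K^{-3}$) times a function of $z_K$ alone.

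The plan is a Taylor/delta-method expansion of $(K\bar w_{K-1}+w(z_K))^{-s}$ around $(KZ)^{-s}$. This yields
\begin{align}
\E(g_{1,K}) = \frac{\E_q\big(w^2 h\big)}{K^2Z^2} + O(K^{-3}),
\qquad
\E(g_{1,K}^2) = \frac{\E_q\big(w^4 h^2\big)}{K^4Z^4} + O(K^{-5}),
\end{align}
and analogously for the $S_*$ terms. There, assumptions (ii) and (iii) with $s=2$ (and $s=3,4$ for the cubic and squared terms) make the leading coefficients finite and nonzero. The variance is therefore $\V(g_{1,K}) = K^{-4}Z^{-4}\V_q(w^2h)+o(K^{-4})$, so the ratio converges to $|\E_q(w^2h)|/\sqrt{\V_q(w^2h)}>0$, a constant independent of $K$. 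This gives $\Omega(\sqrt{M})$. (Note the contrast with \Cref{the: Wass grad SNR}: there $z_K$ is fixed and only the $K-1$ other variables fluctuate, giving the extra $\sqrt{K}$. Here $z_K$ is also sampled, so its $O(1)$ relative fluctuation dominates.)

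The main obstacle is making the expansion rigorous, i.e.\ controlling the remainders uniformly. The denominator is random, and $\bar w_{K-1}$ can be small with small probability. We would handle this in two steps. First, on the event $\{|\bar w_{K-1}-Z|\le Z/2\}$, use a second-order Taylor bound. Second, on its complement, use Chebyshev/Markov bounds via $\E w^4<\infty$, together with negative-moment bounds on $\bar w_{K-1}$. The negative-moment bounds come from $\E w^{-16}<\infty$ by Jensen/AM–HM: $\bar w_{K-1}^{-1}\le \frac{K}{(K-1)^2}\sum_{i<K} w(z_i)^{-1}$. Combining these with Hölder's inequality bounds the remainder by $o(K^{-2})$ in mean and $o(K^{-4})$ in second moment. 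The high exponent $16$ is what survives the Hölder steps for the squared, fourth-power-in-$w$ terms. Finally, we would verify that each of the finitely many summands of the $S_*$ coordinate obeys these bounds. Cross terms only affect the constant, and the leading coefficient is nonzero by (iii), so the constant stays positive.
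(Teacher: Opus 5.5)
Your route is the paper's. You reduce to $M=1$, factor out $(w(z_K)/K)^2$, expand the inverse normalized weight sum around $p(x)$, and show that the mean and standard deviation of the single-replicate estimator are both of order $K^{-2}$, with leading coefficients $\E_q(f)/p(x)^2$ and $\V_q(f)^{1/2}/p(x)^2$. For $S_*$ you split into the $\overline W_K^2$ and $\overline W_K^3$ pieces exactly as the paper does. The only technical difference is remainder control: the paper uses the one-term expansion $X_K=c_K^{-2}-\eta_K\bar\xi_K$, Cauchy--Schwarz and the convexity bound on $\eta_K^2$, where you use a good/bad-event split; either works.

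The step that fails is the last one for entries of $S_*$. There the leading coefficient is $A+2B$, with $A=\E_q[w^2\partial^2_{z_iz_j}\log w]$ and $B=\E_q[w^2\partial_{z_i}\log w\,\partial_{z_j}\log w]$. Assumption (iii) only asserts $A\neq 0$ and $B\neq 0$, which does not give $A+2B\neq0$. The paper's proof makes the identical inference (``$\E(f_1(z))$ is non-zero by the assumption (iii)''), so you have reproduced it faithfully. However, the gap cannot be filled under the stated hypotheses.

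Counterexample: take $d=1$, $q=\mathcal N(0,1)$ and $p(x,z)=c\,q(z)e^{az+bz^2/2}$, with $b\in(-1/16,0)$ and $a^2=-b(1-2b)/2$.
\begin{itemize}
\item Then $\partial\log w=a+bz$ and $\partial^2\log w=b$.
\item Every positive moment of $w$ is finite because $b<0$, and $\E_q(w^{-16})<\infty$ because $1+16b>0$.
\item $\E_q(w^s(a+bz))=\E_q(w^s)\,a/(1-sb)\neq0$, $\E_q(w^sb)\neq 0$ and $\E_q(w^s(a+bz)^2)>0$, so (i)--(iii) all hold.
\item But $w^2q$ is proportional to the $\mathcal N(2a/(1-2b),1/(1-2b))$ density, which gives $A+2B=\E_q(w^2)\,[\,b+2a^2/(1-2b)^2+2b^2/(1-2b)\,]=0$.
\end{itemize}
Your expansion then gives a mean of $o(K^{-2})$ for this $S_*$ entry. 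Its variance is still $K^{-4}p(x)^{-4}\V_q(f_1)+o(K^{-4})$, with $f_1=w^2(b+2(a+bz)^2)$ non-constant. Hence $\text{SNR}(g_{1,K})\to 0$, and no $K$-uniform bound $c\sqrt M$ holds.

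The fix is to assume $\E_q[w^2(\partial^2_{z_iz_j}\log w+2\partial_{z_i}\log w\,\partial_{z_j}\log w)]\neq 0$ directly; with that, your argument goes through. The $a_*$ case is unaffected, since its leading coefficient is exactly the $s=2$ quantity in (ii).

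A minor point: your stated $O(K^{-5})$ remainder for $\E(g_{1,K}^2)$ would require $\E_q(w^5h^2)<\infty$, which is not assumed. Only the $o(K^{-4})$ control you actually use is needed, and that is available under the hypotheses.
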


The estimator of the affine map \eqref{eq:bw_iw_elbo0} shares the same scaling rate, as it linearly depends on the estimators of $a_*$ and $S_*$.
While this rate is slower than that of the Wasserstein gradient, the BW gradient favorably resolves the SNR degeneracy issue, while maintaining the tractability.
As it eliminates the dependency on $K$ from the rate, the BW-gradient estimator enjoys constant stability across $K$, allowing the use of large $K$ in practice.
The assumptions in the proposition involve the moment properties of the weight function $w(z)$.
While technical, these are typically required in the study of the SNR of the IW-ELBO gradient estimators; see, e.g., \cite{daudel2023alpha,daudel2026importance}.

\subsection{Optimization of IW-ELBO under BW Geometry} \label{sec:algorithm}

A few practical optimization algorithms for the standard ELBO on the BW space have been established in \cite{Lambert2022} and \cite{Diao2023}.
However, relying on the standard ELBO, these frameworks inherit the well-known limitation of underestimating the tail probability.
To address this limitation, we develop an optimization algorithm for the IW-ELBO within the BW geometry.
We call our algorithm the {\it Bures-Wasserstein Importance-Weighted ELBO (BW-IW-ELBO)}. 
Rather than treating the IW-ELBO as a function over the Euclidean space of variational parameters $(m, \Sigma)$, we formulate it as a functional over the BW manifold of Gaussian distributions $\mathcal{N}(m,\Sigma)$.
This reframing naturally lifts the optimization problem from the Euclidean geometry to the Wasserstein geometry, shifting the focus from parameter estimation to distributional optimization.

We formulate the gradient ascent scheme for maximizing the IW-ELBO on the BW space.
On the BW space, the IW-ELBO acts as a functional of $K$ arguments $q_1, \dots, q_K$, while these multiple arguments are all set to one common distribution $q$.
Consider a coordinate descent, where at each iterate we pick a coordinate index $n$ to compute the BW gradient, and update the common distribution $q$.
However, because the BW gradient remains invariant for any index $n$ (see \Cref{rm: invariance}), the choice of the index $n$ does not impact the algorithm.
Therefore, we simply set $n = K$ at every iterate.
The following remark provides an intuitive explanation on why the BW gradient of the IW-ELBO improves the optimization in comparison with the BW gradient of the standard ELBO.

\begin{remark}
    For $K=1$, the self-normalized weight is trivially $1$, meaning $a_*$ and $S_*$ reduce to the simple expectations of the gradient $\nabla \log w(z)$ and Hessian $\nabla^2 \log w(z)$, respectively. 
    For $K > 1$, these terms are modulated by the squared self-normalized importance weight $( w(z_K) / \sum_{i=1}^K w(z_i) )^2$. 
    This factor deprioritizes the gradient and Hessian when a sample $z_K$ yields a small weight $w(z_K) = p(x, z_K) / q(z_K)$ relative to the other $K-1$ samples exploring the domain simultaneously.
    Intuitively, a small weight indicates that the variational family $q$ approximate the target $p$ well at that location.
    This mechanism, therefore, prioritizes the exploration of the domain where $q$ currently underfits $p$, enforcing a mass-covering behavior.
\end{remark}

Let $q^{(k)}=\mathcal{N}(m_k, \Sigma_k)$ denote the iterate at step $k$, initialized at $q^{(0)} = \mathcal{N}(m_0, \Sigma_0)$. 
Adopting the standard discretization scheme for WGF \citep[e.g.][]{Ambrosio2005}, we update the Gaussian density $q^{(k)}$ at step $k$ via the following push-forward operation:
\begin{align}
    q^{(k+1)} = \big( \text{Id} + \eta \nabla^{\mathrm{BW}} \mathcal{F}_K(q^{(k)}) \big)_\# q^{(k)} \label{eq:bw_fb_f}
\end{align}
where $\eta$ denotes the step size of the update.
Since the BW gradient is affine, the update in \eqref{eq:bw_fb_f} corresponds to the following updates for the associated mean and covariance:
\begin{align}
    m_{k+1} = m_{k} + \eta a_* \quad \text{and} \quad \Sigma_{k+1} = (I + \eta S_* ) \Sigma_{k} (I + \eta S_* )  \label{eq:bw_f}
\end{align}
with the characteristic terms $a_*$ and $S_*$ of the BW gradient $\nabla^{\mathrm{BW}} \mathcal{F}_K(q^{(k)})$ at step $k$.
In practice, their Monte Carlo estimates, denoted $\hat{a}_*$ and $\hat{S}_*$, using $M$ independent samples of the $K$ random variables $z_1, \dots, z_K$ are used at each iterate.
The complete procedure is summarized in Algorithm \ref{alg:wgb}.

The covariance update rule in Algorithm \ref{alg:wgb}, given by
\[\Sigma\gets (I + \eta \widehat{S}_* ) \Sigma (I + \eta \widehat{S}_* ),\]
takes a congruence form, guaranteeing that $\Sigma$ remains symmetric and positive semi-definite. 
Furthermore, it remains strictly positive definite provided that $I + \eta \widehat{S}_*$ is invertible. 
To strictly enforce this invertibility despite Monte Carlo noise in practice, we apply an eigenvalue-clipping safeguard to $I + \eta \widehat{S}_*$, bounding its spectrum from below by a positive threshold.
We defer detailed implementation to Appendix \ref{sec:Additional Experimental Details}.

\begin{algorithm}[t]
\caption{BW IW-ELBO VI} \label{alg:wgb}
\KwIn{target density $p$, learning rate $\eta$.}
\KwOut{A Gaussian distribution that approximates the target $p$.}
Initialize the mean and covariance parameters $m, \Sigma$\\
\While {not stop}{
    compute Monte Carlo estimates $\widehat{a}_*$ and $\widehat{S}_*$ from \eqref{eq:bw_iw_elbo1} and \eqref{eq:bw_iw_elbo2}\\
    Update $m\gets m + \eta \widehat{a}_*$\\
    Update $\Sigma\gets (I + \eta \widehat{S}_* ) \Sigma (I + \eta \widehat{S}_* )$.
}
\end{algorithm}

In contrast to the forward-backward approach of \cite{Diao2023} designed for optimizing the ELBO, our approach omits the backward step.
While the backward step is required to ensure convergence due to the non-smoothness of the entropy term in the ELBO, it imposes a substantial computational burden.
Crucially, the IW-ELBO does not admit the decomposition into a convex potential and a non-smooth entropy term exploited in \cite{Diao2023}. 
Moreover, as the smoothness properties of this objective remain ambiguous, the theoretical necessity of the backward step is indeterminate.
We leave the rigorous analysis of this theoretical aspect to future research.

\subsection{Mass-covering Property of BW-IW-ELBO} \label{sec:property}

This section provides a discussion on the mass-covering property of BW-IW-ELBO VI.
Maximizing the ELBO is equivalent to minimizing the Kullback--Leibler divergence
\begin{align}\label{eq:KL div 1}
    \mathrm{KL}(q \,\|\, p) 
    = \int_{\mathcal Z} q(z)\,\log \frac{q(z)}{p(z\mid x)}\,dz,
\end{align}
from the variational approximation \(q(z)\) to the posterior \(p(z\mid x)\), commonly referred to as the reverse KL divergence.  

A key structural property of this divergence is its asymmetry: it penalizes heavily any mass that \(q(z)\) assigns to regions where \(p(z\mid x)\) is negligible or zero, since the integrand becomes unbounded whenever \(q(z) > 0\) but \(p(z\mid x) = 0\).  
As a consequence, any optimizer of \(\mathrm{KL}(q\|p)\) effectively constraints $q(z)$ to place mass only in regions where the posterior density is non-negligible.
This asymmetry has important implications for the behaviour of variational inference.  
In particular, the objective discourages \(q(z)\) from covering low-density regions of \(p(z\mid x)\), even if such regions lie between well-separated modes.  

When \(p(z\mid x)\) is multimodal, this leads to the well-known mode-seeking behaviour of the reverse KL divergence \citep{Li2016,zenn2024differentiable}: the optimal variational approximation often collapses onto a single dominant mode rather than spreading mass across multiple modes.

There is now substantial empirical evidence that the IW-ELBO helps alleviate the mode-seeking behaviour of the standard ELBO \citep{Burda2016,cremer2017reinterpreting,Domke2018,tan2020conditionally,zenn2024differentiable}. While the standard ELBO minimizes the reverse divergence $\mathrm{KL}(q\|p)$ and therefore tends to favour mode-seeking approximations, the IW-ELBO uses multiple importance-weighted samples to define a tighter variational objective. As shown by \cite{Domke2018}, maximizing the IW-ELBO can be interpreted as performing variational inference on an augmented sample space, where the induced variational distribution corresponds to a self-normalized importance sampling approximation.

More precisely, at the inference stage, a sample $z$ from this implicit importance-weighted variational distribution is generated by
\begin{align}
    z_1,\ldots,z_K \sim q(\cdot), \qquad
    z \gets z_i, \qquad\text{with}\quad
    \mathbb{P}(i)=\frac{w(z_i)}{\sum_{k=1}^K w(z_k)} .
\end{align}
This is precisely a sampling-importance-resampling procedure, which is typically more accurate than using samples directly from the standard variational approximation $q$.

Furthermore, \cite{Domke2018} and \cite{maddison2017filtering} show that the IW-ELBO admits the asymptotic expansion
\[
\operatorname{IW\text{-}ELBO}_K(q)
=
\log p(x)
-
\frac{\mathbb{V}_{Z\sim q}(w(Z))}{Kp(x)^2}+o(\frac{1}{K}).
\]
This expansion makes explicit that maximizing the IW-ELBO approximately minimizes the variance of the importance weights. Consequently, variational approximations with insufficient tail coverage are penalized, since light-tailed proposals typically lead to unstable or high-variance importance weights. In this sense, the IW-ELBO promotes a more mass-covering behaviour than the standard ELBO.

The BW-IW-ELBO method combines two complementary ingredients: the mass-covering behaviour induced by the IW-ELBO objective and the intrinsic geometry of the BW manifold. 
The synergy between these two components is crucial, and yields a principled and practically effective variational framework.
While the IW-ELBO objective determines \emph{where} the variational distribution should allocate mass, the BW geometry determines \emph{how} this redistribution should occur in an optimal manner.  
For instance, in the presence of a bimodal posterior, the IW-ELBO objective encourages \(q\) to expand its support to cover both modes.  
The BW geometry then facilitates this expansion through coherent updates of the covariance structure, effectively transporting mass in a way that is consistent with the geometry of the space of distributions. The mass-covering property of the BW-IW-ELBO method is systematically demonstrated in the next section. 

\cite{dhaka2021challenges} study the direct optimization of the {\it inclusive} divergence \(\mathrm{KL}(p\,\|\,q)\), which is well known to encourage mass-covering behaviour. Their experiments show that, particularly in higher dimensions, such objectives may produce posterior approximations with poor mode fidelity and degraded predictive performance. As acknowledged in the literature, however, directly optimizing the inclusive KL objective is itself computationally challenging and can lead to unstable optimization behaviour. Our setting differs in an important way. We do not directly optimize the inclusive divergence but the IW-ELBO objective, which remains rooted in the reverse KL variational framework, while employing the BW optimization geometry to improve optimization efficiency and posterior exploration. Consequently, the improved mass-covering behaviour does not arise from directly enforcing an inclusive-KL objective, but rather from obtaining higher-quality variational approximations within the IW-ELBO framework through a more effective optimization geometry.

\section{Numerical Experiments and Applications}\label{sec:Numerical applications}
We empirically assess the performance of the BW-IW-ELBO method through a series of experiments and applications presented in this section.

Recall that the BW-IW-ELBO method consists of two key components: the IW-ELBO objective and the BW geometry. To disentangle the effects of the objective function and the optimization geometry, we adopt a \(2 \times 2\) factorial design to compare our Gaussian approximation with three baselines, as shown in \Cref{tab:design}.

The first baseline is FB-GVI \citep{Diao2023}, which optimizes the standard ELBO under the BW geometry. Comparing BW-IW-ELBO with FB-GVI isolates the effect of the IW-ELBO objective, as both methods share the same BW geometry.
The second baseline is the Euclidean IW-ELBO, which optimizes the IW-ELBO objective under the standard Euclidean parameterization \((m, L)\), where \(\Sigma = LL^\top\), using the ADAM optimizer. This comparison isolates the effect of the BW geometry.
The third baseline is the Euclidean ELBO, which optimizes the standard ELBO under the same Euclidean parameterization \((m, L)\). This comparison captures the combined gains of the BW-IW-ELBO method arising from both the IW-ELBO objective and the BW geometry.

\begin{table}[h]
\centering
\caption{A $2\times 2$ factorial design to compare various Gaussian approximation methods. The variational approximation is a full-covariance Gaussian $q(z) = \mathcal{N}(m, \Sigma)$.}
\label{tab:design}
\begin{tabular}{lll}
\toprule
 & \textbf{ELBO objective} & \textbf{IW-ELBO objective} \\ \midrule
\textbf{BW geometry} & FB-GVI & BW-IW-ELBO \\
\textbf{Euclidean geometry} & Euclidean ELBO & Euclidean IW-ELBO \\ \bottomrule
\end{tabular}
\end{table}    

\subsection{Simulation Study: Mass-covering Property} \label{sec:Simulation Study: Mass-covering Property}
In this section, we assess the mass-covering property of BW-IW-ELBO on the task of approximating an eggbox distribution \citep{pmlr-v9-murray10a}, a two-dimensional mixture distribution consisting of four equally weighted Gaussian components (``eggs'' $=4$). 

Specifically, we optimize the parameters (mean and covariance) of a single Gaussian distribution $q$ to best approximate this target distribution. The primary objective of this experiment is to evaluate the mass-covering property of the IW-ELBO methods ($K=5$). A Gaussian approximation that distributes its mass across all components of the target can be particularly valuable in downstream applications, such as serving as a proposal distribution for importance sampling or MCMC.

We evaluate the fitted Gaussian moments against the analytical mean $m^\star$ and covariance $\Sigma^\star$ of the target mixture, computed from the component parameters listed in Appendix \ref{sec:Experimental Details for Eggbox}. All methods use the same initialization, $m_0=[6.0,12.0]^\top$ and $\Sigma_0=5I_2$, and are run for $1000$ iterations. The BW covariance-stability safeguard is given in \Cref{sec:Additional Experimental Details}, and additional implementation details are given in Appendix \ref{sec:Experimental Details for Eggbox}.

Figure~\ref{fig:eggbox} shows the evolution of the Gaussian approximations. Under BW geometry, BW-IW-ELBO expands to cover all four modes, whereas FB-GVI concentrates on only part of the target mass. The same pattern appears under the Euclidean parameterization: Euclidean IW-ELBO learns a broad Gaussian spanning the four modes, while Euclidean ELBO collapses onto a single mode.

\begin{figure}
    \centering
    \includegraphics[width=0.9\linewidth]{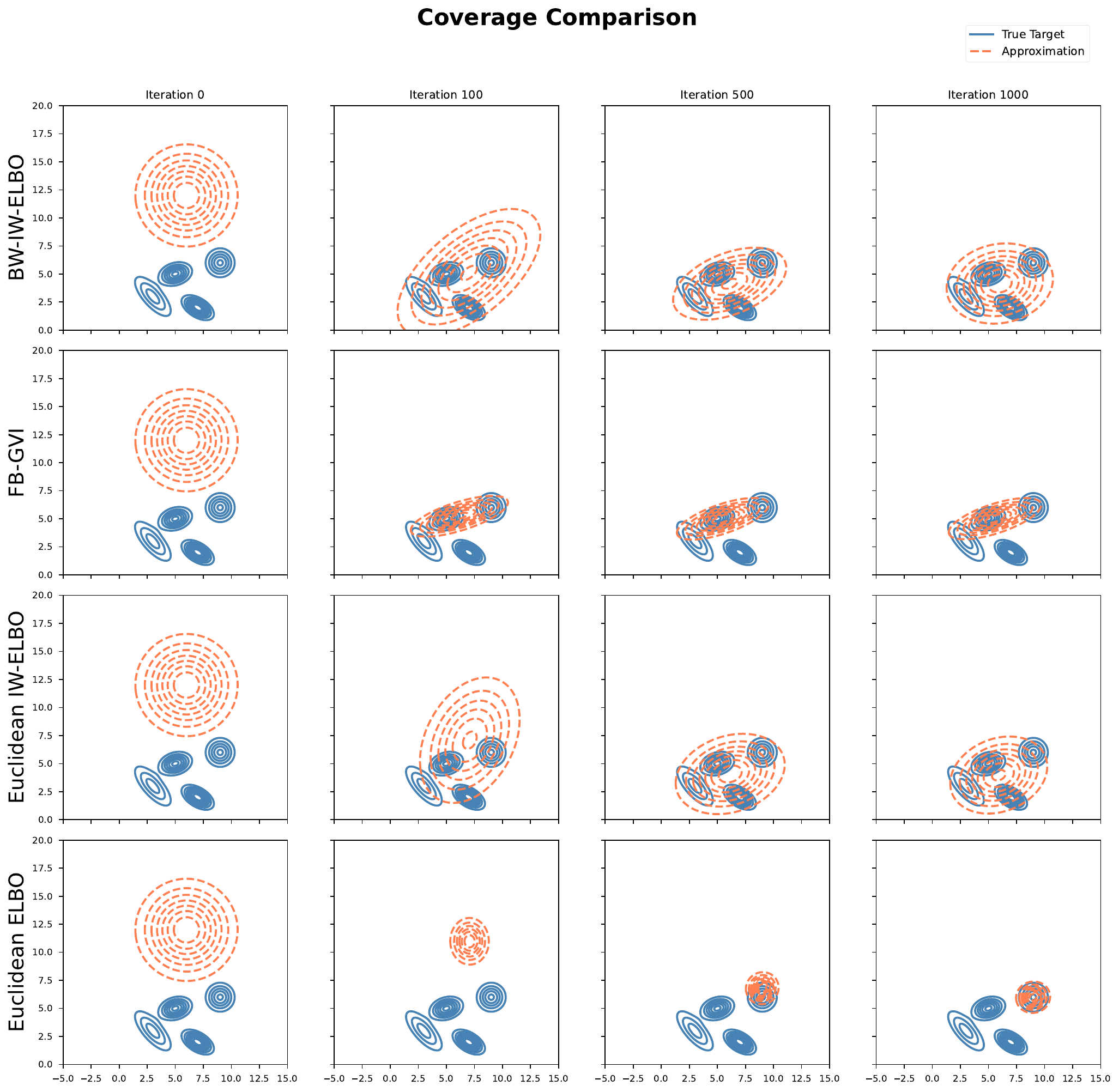}
    \caption{Evolution of the variational approximation for the four-peaked eggbox target.}
    \label{fig:eggbox}
\end{figure}

Table \ref{tab:egg_box_quant} reports how closely the fitted Gaussian approximation matches the analytical moments of the target distribution. We report the squared Euclidean error of the fitted Gaussian mean $\|m - m^\star\|^2$, the squared Frobenius error of the fitted Gaussian covariance $\|\Sigma - \Sigma^\star\|_F^2$,  a Monte Carlo estimate of the
forward KL divergence $\mathrm{KL}(p\,\|\,q) = \mathbb{E}_p[\log p - \log q]$, and the IW-ELBO value (with $K = 5$). With BW geometry fixed, BW-IW-ELBO substantially improves over FB-GVI across all metrics, reducing the forward KL from $3.86$ to $0.78$. The Euclidean comparison shows the same qualitative pattern. These results suggest that, in this experiment, the mass-covering behaviour is mainly due to the IW-ELBO objective rather than the optimization geometry alone.

\begin{table}[H]
\centering
\caption{Performance of the fitted Gaussian approximations on the eggbox target, averaged over $10$ independent replications. Lower values indicate better performance for the first three columns, while higher values are better for the IW-ELBO. Best results within each geometry block are shown in \textbf{bold}.}
\label{tab:egg_box_quant}
\small

\begin{subtable}{\linewidth}
    \centering
    \caption{BW Geometry Methods}
    \begin{tabular}{lcccc}
    \toprule
    Method
    & $\|m-m^\star\|^2\downarrow$
    & $\|\Sigma-\Sigma^\star\|_F^2\downarrow$
    & $\mathrm{KL}(p\,\|\,q)\downarrow$
    & IW-ELBO $\uparrow$ \\
    \midrule
    BW-IW-ELBO        & $\mathbf{0.02\pm 0.03}$ & $\mathbf{1.64\pm 1.29}$ & $\mathbf{0.78\pm0.05}$ & $\mathbf{-0.26\pm 0.03}$\\
    FB-GVI            & $1.24\pm 0.03$ & $7.06\pm 1.09$          & $3.86\pm0.27$ & $-0.54\pm 0.03$\\
    \bottomrule
    \end{tabular}
\end{subtable}

\vspace{1em} 

\begin{subtable}{\linewidth}
    \centering
    \caption{Euclidean Geometry Methods}
    \begin{tabular}{lcccc}
    \toprule
    Method
    & $\|m-m^\star\|^2\downarrow$
    & $\|\Sigma-\Sigma^\star\|_F^2\downarrow$
    & $\mathrm{KL}(p\,\|\,q)\downarrow$
    & IW-ELBO $\uparrow$ \\
    \midrule
    Euclidean IW-ELBO & $\mathbf{0.02\pm 0.02}$ & $\mathbf{2.12\pm 2.41}$          & $\mathbf{0.77\pm0.03}$ & $\mathbf{-0.25\pm 0.04}$\\
    Euclidean ELBO    & $12.85 \pm 0.05$ & $35.83 \pm 0.09$      & $17.80 \pm 0.34$ & $-1.37\pm 0.02$\\
    \bottomrule
    \end{tabular}
\end{subtable}
\end{table}

\subsection{Simulation Study: Performance of Gradient Estimators}
\label{sec:Simulation Study: Performance of gradient estimators}
This section has two goals. First, using a Bayesian logistic regression model, we numerically evaluate the signal-to-noise ratio (SNR) of the Wasserstein gradient estimator, the BW gradient estimator, and the Euclidean gradient estimator of the IW-ELBO. Second, we assess the optimization efficiency of the BW and Euclidean gradient estimators by comparing the convergence behaviour of the BW-IW-ELBO and Euclidean IW-ELBO methods.

The Bayesian logistic regression model assumes a Bernoulli distribution for the binary outcome $y_i \in \{0, 1\}$ given a feature vector $X_i$:
\[
p(y_i \mid X_i, z) = \mathrm{Bernoulli}(\sigma(X_i^\top z)),
\]
where $\sigma(\cdot)$ is the sigmoid function and $z$ is the coefficient vector. Given data $\mathcal{D} = \{(X_i, y_i)\}_{i=1}^n$, the inference task is to approximate the posterior $p(z \mid \mathcal{D})$ by a Gaussian distribution with mean $m$ and full-covariance matrix $\Sigma$.

\textbf{SNR of the Gradient Estimators}

We empirically study the SNR scaling of the gradient estimators in Section \ref{eq:Euclidean vs Wasserstein} on a synthetic Bayesian logistic regression problem with prior $z\sim\mathcal N(0,I_d)$ and fixed variational distribution $q=\mathcal N(0,I_d)$. We vary the number of importance samples $K$, the number of Monte Carlo replicates $M$, and the latent dimension $d\in \{20,50,80\}$, rescaling the design matrix by $\sqrt{c/d}$ ($c$ is a constant) to keep the logit scale comparable across dimensions (Appendix \ref{sec:Experimental Details for BLR experiment}).

We compare the Wasserstein, BW, and Euclidean reparameterized gradient estimators, whose predicted SNR rates are respectively $\Omega(\sqrt{MK})$, $\Omega(\sqrt M)$, and $O(\sqrt M/\sqrt K)$ according to Theorem~\ref{the: Wass grad SNR}, Proposition~\ref{pro: BW grad SNR}, and \citet{rainforth2018tighter}. Since these results provide asymptotic one-sided bounds rather than exact rates, our objective is not to verify the theoretical exponents quantitatively, but rather to assess whether the empirical log--log slopes exhibit the predicted qualitative behavior: positive for Wasserstein (increasing with $K$), approximately zero for BW (essentially independent of $K$), and negative for Euclidean (decreasing with $K$).

Because the three estimators are defined under different geometries, their SNR magnitudes are not directly comparable. We therefore compare their scaling through the slopes of $\log\mathrm{SNR}$ against $\log K$ and $\log M$. We vary $K\in\{10,100,200,500,1{,}000,2{,}000,4{,}000,8{,}000,10{,}000\}$ at fixed $M=1$, and vary $M\in\{1,2,4,8,16\}$ at fixed $K=100$, for each $d\in\{20,50,80\}$. The SNR is computed coordinatewise and then averaged across coordinates, with the BW estimator using the combined coordinates of $(\hat a_*,\operatorname{diag}\hat S_*)$. Further simulation details are given in Appendix \ref{sec:Experimental Details for BLR experiment}.

Figure \ref{fig:snr_scaling} and Table \ref{tab:snr_scaling} summarize the results. For $K$-scaling at fixed $M=1$, the Wasserstein SNR increases, the BW SNR remains nearly flat, and the Euclidean SNR decreases, with little dependence on $d$. The fitted $\log K$ slopes are $0.53$-$0.56$ for Wasserstein, $-0.01$-$-0.03$ for BW, and around $-0.35$ for Euclidean. These empirical results are typically consistent with the theory. 
For $M$-scaling at fixed $K=100$, all fitted slopes lie in $[0.48,0.53]$, confirming the common $\sqrt M$ dependence across estimators and dimensions.

\begin{figure}[t]
  \centering
  \begin{subfigure}{0.48\textwidth}
    \centering
    \includegraphics[width=\linewidth]{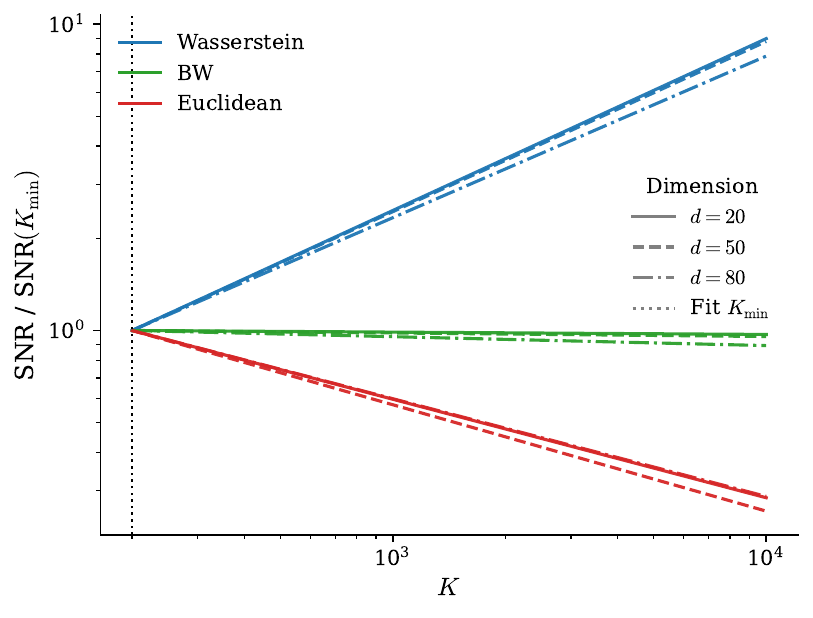}
    \caption{$K$-scaling at fixed $M = 1$.}
    \label{fig:snr_K}
  \end{subfigure}
  \hfill
  \begin{subfigure}{0.48\textwidth}
    \centering
    \includegraphics[width=\linewidth]{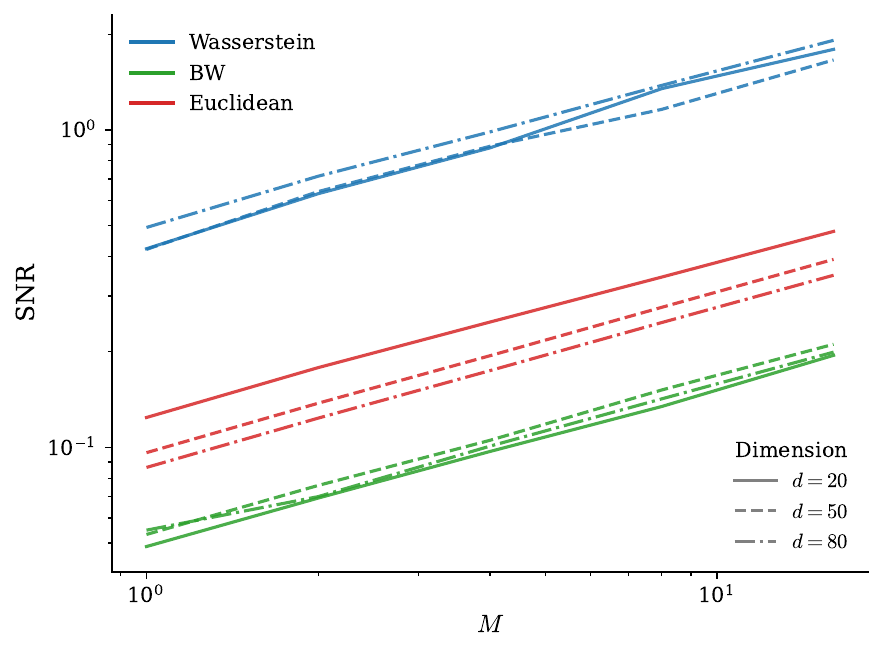}
    \caption{$M$-scaling at fixed $K = 100$.}
    \label{fig:snr_M}
  \end{subfigure}
  \caption{SNR scaling of the three gradient estimators across latent dimensions $d \in \{20, 50, 80\}$ on the Bayesian logistic regression model. Colour denotes the estimator and linestyle denotes the latent dimension. \textbf{(a)} SNR versus the number of importance samples $K$ at fixed $M = 1$; the dotted vertical line marks $K_{\min}=200$, the lower edge of the slope-fitting window. \textbf{(b)} SNR versus the number of Monte Carlo replicates $M$ at fixed $K = 100$.}
  \label{fig:snr_scaling}
\end{figure}

\begin{table}[t]
  \centering
  \caption{Fitted log-SNR slopes across latent dimension $d$, with leave-one-seed-out jackknife standard errors. The $\log K$ slopes are
 fitted at $M = 1$ over $K \ge 200$, and the $\log M$ slopes are fitted at
 $K = 100$. The qualitative predictions are that the Wasserstein SNR
 increases with $K$, the BW SNR is essentially independent of $K$, and the
 Euclidean SNR decreases with $K$, while all three estimators scale as
 $\sqrt M$ in the number of Monte Carlo replicates.}
  \label{tab:snr_scaling}
  \begin{tabular}{c ccc ccc}
    \toprule
    & \multicolumn{3}{c}{$\log K$ slope ($M = 1$)}
    & \multicolumn{3}{c}{$\log M$ slope ($K = 100$)} \\
    \cmidrule(lr){2-4}\cmidrule(lr){5-7}
    $d$ & Wasserstein & BW & Euclidean
        & Wasserstein & BW & Euclidean \\
    \midrule
    $20$ & $+0.56 \pm 0.01$ & $-0.01 \pm 0.03$ & $-0.32 \pm 0.05$
         & $+0.53 \pm 0.03$ & $+0.50 \pm 0.02$ & $+0.48 \pm 0.01$ \\
    $50$ & $+0.56 \pm 0.01$ & $-0.01 \pm 0.03$ & $-0.35 \pm 0.03$
         & $+0.48 \pm 0.03$ & $+0.50 \pm 0.02$ & $+0.51 \pm 0.01$ \\
    $80$ & $+0.53 \pm 0.01$ & $-0.03 \pm 0.05$ & $-0.32 \pm 0.03$
         & $+0.49 \pm 0.01$ & $+0.48 \pm 0.02$ & $+0.50 \pm 0.02$ \\
    \bottomrule
  \end{tabular}
\end{table}

\textbf{Convergence of BW-IW-ELBO and Euclidean IW-ELBO}

We next compare BW-IW-ELBO and Euclidean IW-ELBO, using various $K\in\{10,50,100,200\}$, for approximating the posterior in the Bayesian logistic regression model. Both methods are initialized at $m_0 = 0$, $\Sigma_0 = 5I$, use $M = 100$ $K$-tuples per iteration, and are run over $10$ evaluation seeds. 
Implementation details including step sizes and convergence stopping rule are given in \Cref{sec:Experimental Details for BLR experiment}.

Table~\ref{tab:time-to-target} and Figure~\ref{fig:boxplots}(a) show that BW-IW-ELBO consistently converges in fewer iterations than the Euclidean baseline across all tested values of $K$. The median iteration count is $22$ for BW-IW-ELBO versus $115$ for the Euclidean IW-ELBO at $K = 10$, and $70$ versus $578$ at $K = 50$. At $K = 200$ the BW method remains faster in the median ($1{,}204$ versus $2{,}009$ iterations), although its convergence times are more variable across seeds, with interquartile range $[530, 1{,}453]$. This variability is associated with the large selected BW step ($\eta = 4.0$).

The same pattern is observed in wall-clock time, as shown in Table~\ref{tab:time-to-target} and Figure~\ref{fig:boxplots}(b). Compared with the Euclidean baseline, BW-IW-ELBO reaches convergence faster for every tested value of \(K\), with median speed-ups ranging from \(2.6\times\) at \(K=200\) (\(88.6\) vs.\ \(226.2\) seconds) to \(12\times\) at \(K=50\) (\(4.1\) vs.\ \(49.5\) seconds). For the intermediate values \(K=10\) and \(K=100\), the median speed-ups are \(6.7\times\) and \(3.7\times\), respectively.
BW-IW-ELBO is also computationally cheaper on a per-iteration basis (see Table~\ref{tab:per_iter_cost} in the Appendix), primarily because it avoids backpropagation through the full \(M\times K\) computation graph required by the Euclidean reparameterization gradient estimator.

\begin{figure}[htbp]
    \centering
    \begin{subfigure}[b]{0.48\textwidth}
        \centering
        \includegraphics[width=\textwidth]{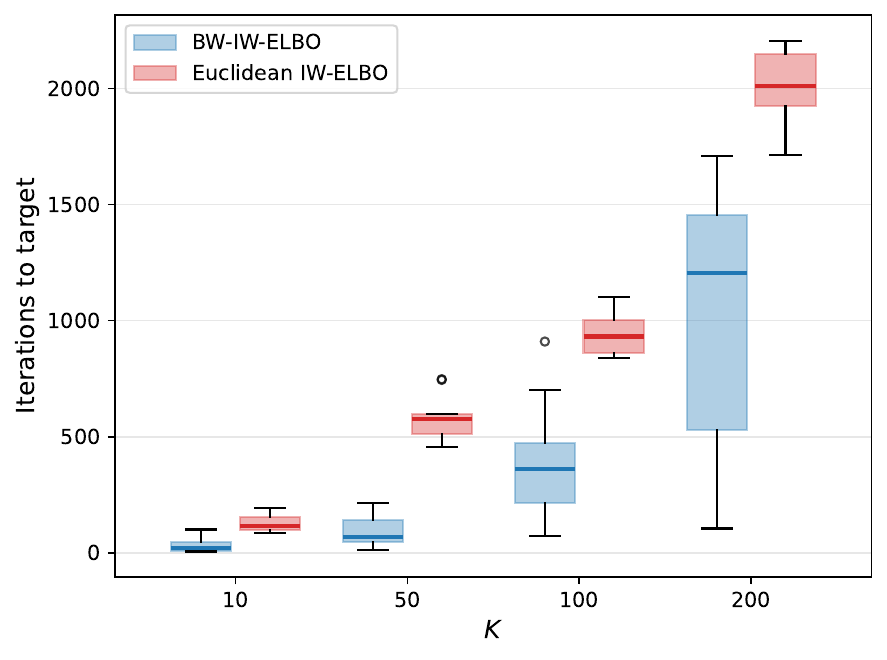}
        \caption{Iterations to Target}
    \end{subfigure}
    \hfill 
    \begin{subfigure}[b]{0.48\textwidth}
        \centering
        \includegraphics[width=\textwidth]{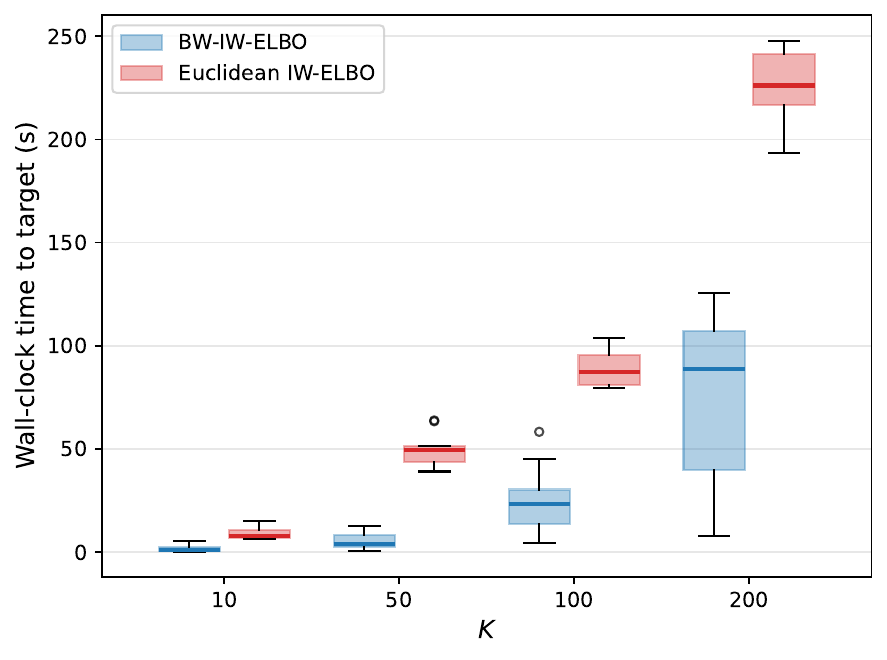}
        \caption{Wall-clock Time to Target}
    \end{subfigure}    
    \caption{Iterations and wall-clock time to the common threshold $-473.72$, shown as box plots over $10$ random seeds for each $K$. Wall-clock time excludes evaluation overhead. Per-iteration costs and implementation details are reported in Table \ref{tab:per_iter_cost} in \cref{sec:Experimental Details for BLR experiment}.}
    \label{fig:boxplots}
\end{figure}

\begin{table}[htbp]
    \centering
    \caption{The table reports the median, first quartile (Q1), and third quartile (Q3) of the number of iterations and CPU time (in seconds) until convergence, computed over 10 random seeds. The last column reports the median final standard ELBO, averaged over the last 100 evaluations.}
    \begin{tabular}{cl|ccc}
        \toprule
        $K$ & Method & iter med. [Q1,Q3] & time med. [Q1,Q3] & final ELBO med. \\
        \midrule
        10  & BW-IW-ELBO        & 22 [10, 49]     & 1.2 [0.5, 2.5]     & -472.72 \\
          & Euclidean IW-ELBO & 115 [99, 156]   & 8.0 [7.0, 10.5]    & -473.11 \\
        \midrule
        50  & BW-IW-ELBO        & 70 [46, 141]    & 4.1 [2.7, 8.3]     & -472.72 \\
          & Euclidean IW-ELBO & 578 [514, 599]  & 49.5 [43.9, 51.5]  & -473.14 \\
        \midrule
        100 & BW-IW-ELBO        & 362 [216, 474]  & 23.5 [13.8, 30.4]  & -472.72 \\
         & Euclidean IW-ELBO & 932 [861, 1004] & 87.5 [80.9, 95.4]  & -473.26 \\
        \midrule
        200 & BW-IW-ELBO        & 1204 [530, 1453] & 88.6 [39.7, 107.1]  & -472.72 \\
         & Euclidean IW-ELBO & 2009 [1924, 2147] & 226.2 [216.9, 241.4] & -473.12 \\
        \bottomrule
    \end{tabular}
    \label{tab:time-to-target}
\end{table}

Figure \ref{fig:elbo_tracking} shows the fixed-seed standard ELBO trajectories near the optimum. BW-IW-ELBO reaches the plateau earlier and achieves a higher standard ELBO value than the Euclidean baseline for every tested $K$, consistent with Table~\ref{tab:time-to-target}.

\begin{figure}
    \centering
    \includegraphics[width=1.0\linewidth]{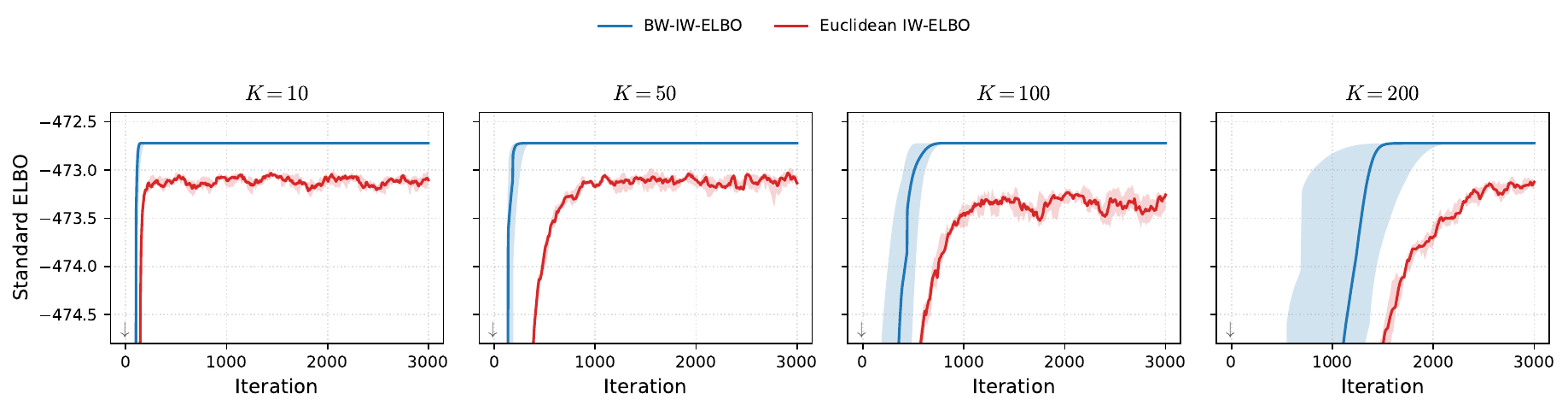}
    \caption{Standard ELBO tracking near the optimum for each $K$. Curves show the pointwise median over $10$ seeds of trajectories smoothed with a window-$100$ moving average; shaded bands show the interquartile range.}
    \label{fig:elbo_tracking}
\end{figure}

\subsection{Application: Census Binary Data}
\label{sec:Application: Census Binary Data}
Consider the Census dataset from the UCI Machine Learning Repository, where the task is to predict whether an individual's income exceeds \$50{,}000 based on their demographic and socio-economic attributes. After correlation-based screening, we use eight covariates and an intercept, giving a $d=9$ with $N=45{,}221$ observations. We compare all four methods in Table~\ref{tab:design} against an MCMC posterior reference using Bayesian logistic regression; implementation details are given in \Cref{sec:Empirical Details for census Data}.

Figure~\ref{fig:postcontour} visually compares the posterior distributions, estimated by BW-IW-ELBO, Euclidean IW-ELBO and MCMC. The two variational inference methods fix the Gaussian variational family and IW-ELBO objective while varying only the optimization geometry. The marginal densities and pairwise contours show that the BW-IW-ELBO estimates closely match the MCMC posteriors, whereas Euclidean IW-ELBO exhibits substantial discrepancies. Further comparisons across all four variational inference methods are provided in Figure \ref{fig:marginal_post} in the Appendix.

Table~\ref{tab:blr_results} provides a quantitative comparison of the variational methods listed in Table~\ref{tab:design}. We report the final standard ELBO and the normalized effective sample size (nESS). Given samples $z_1,\ldots,z_R$ from the fitted Gaussian approximation and importance weights $w_i=\frac{p(\mathcal{D},z_i)}{q(z_i)}$, the effective sample size is
$$
\mathrm{ESS}
=
\frac{\big(\sum_{i=1}^R w_i\big)^2}{\sum_{i=1}^R w_i^2}.$$
We report $\mathrm{nESS}=\mathrm{ESS}/R$ as a percentage, with larger values indicating a more effective importance proposal.

Holding the geometry fixed to BW, replacing the standard ELBO with the IW-ELBO (i.e., comparing FB-GVI with BW-IW-ELBO) increases the nESS from $86.5\%$ to $99.8\%$ and improves the ELBO from $-15754.12$ to $-15753.98$. Holding the objective fixed to IW-ELBO, switching from BW to Euclidean geometry reduces the nESS dramatically from $99.8\%$ to $10.5\%$, while also worsening the ELBO from $-15753.98$ to $-15758.61$.

At first glance, the two Euclidean rows of Table~\ref{tab:blr_results} appear counterintuitive: under Euclidean geometry, optimizing the tighter IW-ELBO yields a worse Gaussian approximation than optimizing the standard ELBO. This phenomenon can be explained by the SNR pathology identified by \citet{rainforth2018tighter}: as the number of importance samples \(K\) increases, the SNR of the Euclidean reparameterization gradient deteriorates, hindering effective optimization of the IW-ELBO objective. In contrast, Proposition~\ref{pro: BW grad SNR} shows that the BW gradient estimator maintains an SNR that does not deteriorate with \(K\), explaining why BW-IW-ELBO retains the expected advantage of the tighter IW-ELBO objective in this experiment.

\begin{table}[htbp]
\centering
\caption{Comparison of variational methods with $d=9$ on the Census dataset. We report the final ELBO and normalized effective sample size (nESS). All entries are averaged over 10 independent training seeds; uncertainties are seed-to-seed standard deviations.}
\label{tab:blr_results}
\begin{tabular}{lcc}
\toprule
\textbf{Method} & \textbf{ELBO} 
& \textbf{nESS} \\
\midrule
BW-IW-ELBO      & $\mathbf{-15753.98 \pm 0.00}$ 
& $\mathbf{99.8\% \pm 0.0\%}$ \\
FB-GVI          & $-15754.12 \pm 0.02$
& $86.5\% \pm 0.7\%$ \\
Euclidean IW-ELBO & $-15758.61 \pm 0.43$ 
& $10.5\% \pm1.6\%$ \\
Euclidean ELBO  & $-15754.42 \pm 0.06$ 
& $60.3\% \pm 3.2\%$ \\
\bottomrule
\end{tabular}
\end{table}

 \begin{figure}[h!]
    \centering
    \includegraphics[width=\textwidth]{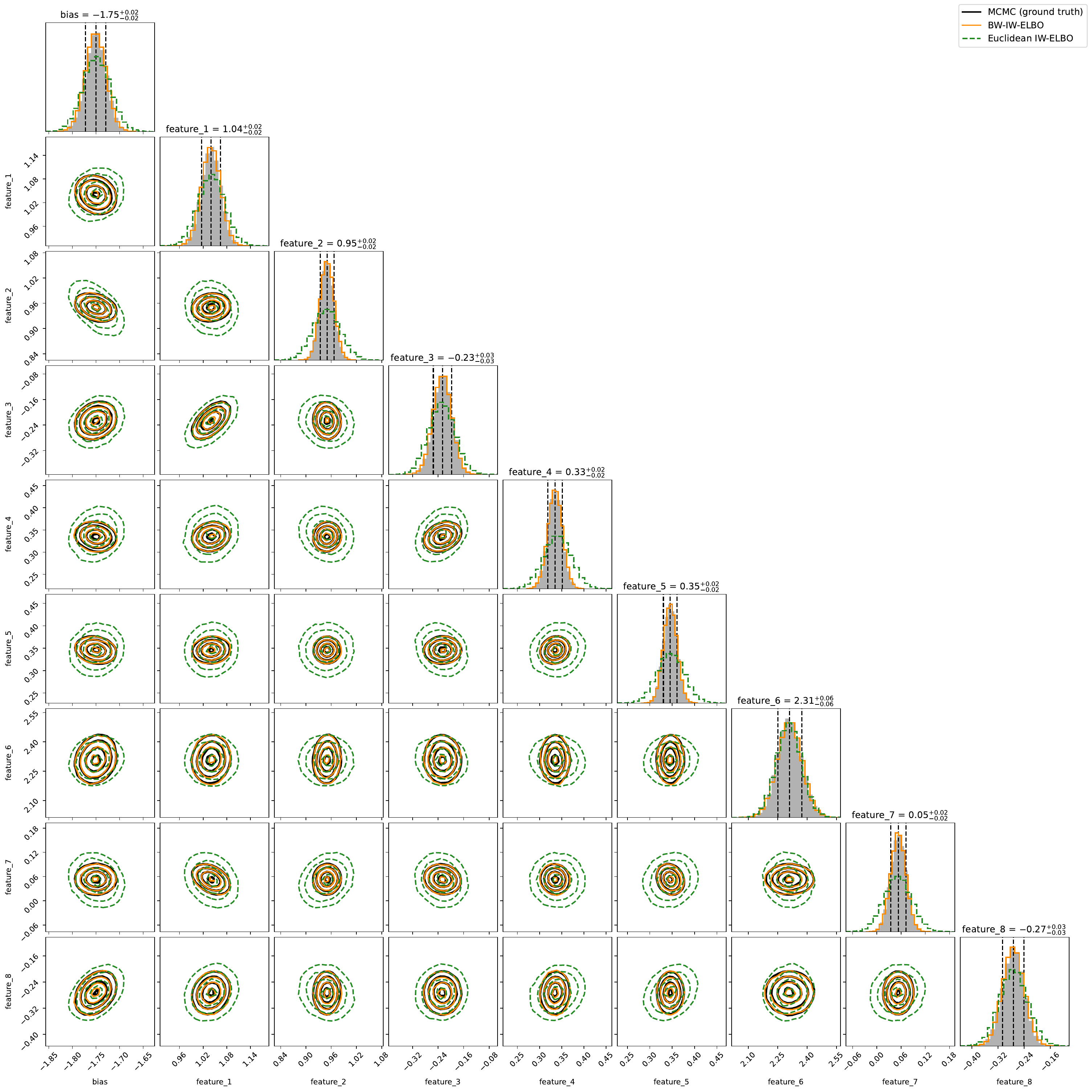}
    \caption{Pairwise-dimensional and marginal posteriors, estimated by the three methods: MCMC, BW-IW-ELBO and Euclidean IW-ELBO}
    \label{fig:postcontour}
\end{figure}

\textbf{Scaling to higher dimensions}

We repeat the Census experiment using the full feature set. After removing exact rank deficiencies by column-pivoted QR (Appendix \ref{sec:Empirical Details for census Data}), the design matrix has $d=96$ columns and a condition number of $8.0$. Since the posterior is sharply concentrated, with marginal standard deviations in $[0.0123,0.0645]$, we use a Laplace approximation as the reference posterior.

Table~\ref{tab:census-highdim} shows that the qualitative pattern persists at $d=96$. Both BW methods substantially outperform the Euclidean baselines, with much higher nESS (about $55$-$65\%$ versus essentially zero) and much smaller KL divergence relative to the Laplace reference. BW-IW-ELBO and FB-GVI are comparable in this near-Gaussian setting, suggesting that the dominant improvement here comes from the Bures-\-Wasserstein geometry. The complementary benefit of IW-ELBO is seen more clearly in the multimodal eggbox experiment (\Cref{sec:Simulation Study: Mass-covering Property}).

\begin{table}[ht]
\centering
\caption{Comparison of variational methods with $d=96$ on the Census dataset. We report the final ELBO, normalized effective sample size (nESS), and Gaussian divergence $\mathrm{KL}(p\,\|\,q)$, where $p$ is the Laplace reference and $q$ is the fitted Gaussian approximation. Entries are means over 10 training seeds, with seed-to-seed standard deviations.}
\label{tab:census-highdim}
\small
\setlength{\tabcolsep}{4pt}
\begin{tabular}{lccc}
\toprule
\textbf{Method} & \textbf{ELBO} $\uparrow$ 
& \textbf{nESS} $\uparrow$
& \textbf{KL($\mathbf{p\,\|\,q}$)} $\downarrow$ \\
\midrule
BW-IW-ELBO
  & $-15386.13 \pm 0.01$
  & $64.77\% \pm 7.7\%$
  & $0.38 \pm 0.01$ \\
FB-GVI
  & $-15386.11 \pm 0.00$
  & $55.04\% \pm 12.5\%$
  & $0.36 \pm 0.01$ \\
Euclidean IW-ELBO
  & $-15897.05 \pm 5.39$
  & $0.00\% \pm 0.0\%$
  & $75.59 \pm 2.34$ \\
Euclidean ELBO
  & $-15486.72 \pm 1.62$
  & $0.00\% \pm 0.1\%$
  & $32.32 \pm 0.83$ \\
\bottomrule
\end{tabular}
\end{table}

\section{Extension to Bures-Wasserstein VR-IWAE} \label{sec:VR-IWAE}

This section explores the connections between the BW-IW-ELBO framework and the VR-IWAE bound.
The VR-IWAE bound represents another emerging objective function for VI, sharing key properties with the IW-ELBO.
The majority of the theoretical analysis presented in \Cref{eq:BW-IW-ELBO} readily extend to the VR-IWAE bound.
Since the primary focus of this work is investigating the BW-IW-ELBO, our empirical evaluation for the VR-IWAE bound is restricted to a proof-of-concept.
Yet, it offers valuable insights into the potential for generalizing the BW-IW-ELBO in sequel.

\subsection{Variational Rényi Importance-Weighted Autoencoder Bound}

First, we briefly review the formulation of the VR-IWAE bound proposed in 
\cite{daudel2023alpha}. 
It combines the Variational Rényi (VR) bound of \cite{Li2016} with the IW-ELBO bound.
Given a power hyperparameter $\alpha\in[0,1)$, the VR-IWAE bound is defined as
\begin{align}
	\operatorname{VR-IWAE}^{(\alpha)}_K(q_1, \dots, q_K) = \frac{1}{1-\alpha}\mathop{\E}_{z_1\sim q_1, \dots, z_K \sim q_K} \left[ \log \left( \frac{1}{K} \sum_{i=1}^{K} \frac{p(x,z_i)^{1-\alpha}}{q_i(z_i)^{1-\alpha}} \right) \right] ,
\end{align}
where each $q_i$ is set to a common variational distribution $q$.
When $\alpha=0$, the VR-IWAE bound coincides with the IW-ELBO. 
When $\alpha>0$, it introduces a highly flexible bias-variance trade-off, improving the efficiency of VI in complex applications.

In standard parametric VI, the variational distribution $q$ is chosen from a parametric family $q_\psi$ parameterized by a vector $\psi$.
Unlike the VR bound, a major advantage of the VR-IWAE bound is that it admits an unbiased Euclidean gradient estimator with respect to $\psi$ \citep{daudel2023alpha}.
Furthermore, the SNR of the gradient estimator scales as $O(\sqrt{K})$, unlike that of the IW-ELBO.

\subsection{Wasserstein Gradient of the VR-IWAE Bound}

In what follows, we extend our analysis of the Wasserstein gradient of the IW-ELBO to the VR-IWAE bound.
As in \Cref{sec:WG_IW_ELBO}, we began by viewing the VR-IWAE bound as a functional over $K$ probability distributions $q_1, \dots, q_K$.
For an arbitrary index $n$, we fix the $K-1$ arguments $\{ q_i \}_{i \ne n}$ to the common distribution $q$ and define a functional $\G_n(q_n)$ on $\mathbb{W}_2(\R^d)$ by
\begin{equation}
	\G_n(q_n) := \frac{1}{1-\alpha} \mathop{\E}_{z_n \sim q_n} \left[\mathop{\E}_{ \{ z_1, \dots, z_K \} \setminus z_n \overset{i.i.d.}{\sim} q } \log \left( \frac{1}{K} \sum_{i = 1}^{K} \frac{p(x,z_i)^{1-\alpha}}{q(z_i)^{1-\alpha}} \right) \right] ,
\end{equation}
This expression represents the VR-IWAE bound viewed as a functional of the $n$-th argument $q_n$.
As in \Cref{sec:WG_IW_ELBO}, our aim is to derive the coordinatewise Wasserstein gradient of the VR-IWAE bound with respect to $q_n$, demonstrating that the form of this gradient remains invariant to the index $n$.

\begin{proposition}\label{pro: proposition VR-IWAE}
Suppose standard conditions that allow interchanging derivative and expectation. 
The coordinatewise Wasserstein gradient of the VR-IWAE bound $\G_n(q_n)$ at $q_n = q$ is given by
\begin{multline}\label{eq:Wass grad VR-IWAE}
	\nabla^W \left[ \G_n(q) \right](z_n) = \\
    \mathop{\E}_{ \{ z_1, \dots, z_K \} \setminus z_n \overset{i.i.d.}{\sim} q } \lrsb{\lr{\alpha\frac{ w(z_n)^{1-\alpha} }{ \sum_{i=1}^{K} w(z_i)^{1-\alpha}} + (1-\alpha) \bigg(\frac{  w(z_n)^{1-\alpha}}{ \sum_{i=1}^{K} w(z_i)^{1-\alpha} } \bigg)^2} \nabla_{z_n} \log w(z_n)} \in \R^d .
\end{multline}
\end{proposition}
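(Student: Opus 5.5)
The plan is to repeat the argument behind \Cref{pro: proposition1}: compute the first variation of $\G_n$ at $q_n=q$ in the sense of \eqref{eq:first_variation}, then take its Euclidean gradient in $z_n$. The key point is that $q_n$ enters $\G_n$ in two places. It is the sampling law of $z_n$, and it is also the density inside the $n$-th weight $(p(x,z_n)/q_n(z_n))^{1-\alpha}$. I read the $n$-th summand as depending on $q_n$, exactly as in $\F_n$; the displayed $q(z_i)$ in the definition of $\G_n$ is evaluated at $q_n=q$ only after differentiation.

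\textbf{Notation.} Write $u(z):=w(z)^{1-\alpha}$ and $S:=\sum_{i=1}^K u(z_i)$, and let $\E'$ denote expectation over $\{z_i\}_{i\neq n}\overset{i.i.d.}{\sim} q$. Then
\begin{align}
\G_n(q_n)=\frac{1}{1-\alpha}\int q_n(z_n)\,\E'\Big[\log\Big(\tfrac1K\Big(\textstyle\sum_{i\ne n}u(z_i)+\big(p(x,z_n)/q_n(z_n)\big)^{1-\alpha}\Big)\Big)\Big]\,dz_n .
\end{align}

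\textbf{Step 1: first variation.} Perturb $q_n=q+\epsilon\nu$ and differentiate at $\epsilon=0$. Differentiating the outer density gives $\frac{1}{1-\alpha}\E'[\log(S/K)]$ evaluated at $z_n=z$. For the inner density, use
\begin{align}
\partial_{q}\,(p/q)^{1-\alpha}=-(1-\alpha)\,u/q .
\end{align}
The resulting factor $1/q(z_n)$ cancels against the outer $q(z_n)$, and the $(1-\alpha)$ cancels the prefactor. This gives
\begin{align}
\frac{\delta\G_n(q)}{\delta q_n}(z)=\frac{1}{1-\alpha}\,\E'\big[\log(S/K)\big]-\E'\big[u(z_n)/S\big],\qquad z_n=z .
\end{align}
Interchanging $d/d\epsilon$ with the integrals is covered by the assumed regularity conditions.

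\textbf{Step 2: gradient in $z_n$.} Use $\nabla_{z_n}\log u(z_n)=(1-\alpha)\nabla_{z_n}\log w(z_n)$ and $\nabla_{z_n}S=\nabla_{z_n}u(z_n)$. The first term then gives $\E'[(u_n/S)\nabla\log w(z_n)]$. For the second term,
\begin{align}
\nabla_{z_n}(u_n/S)=(1-\alpha)\big(u_n/S-(u_n/S)^2\big)\nabla\log w(z_n),
\end{align}
so that term contributes $-(1-\alpha)\,\E'\big[(u_n/S-(u_n/S)^2)\nabla\log w(z_n)\big]$. Adding the two, the linear coefficient is $1-(1-\alpha)=\alpha$ and the quadratic coefficient is $1-\alpha$, which is exactly \eqref{eq:Wass grad VR-IWAE}. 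As a consistency check, setting $\alpha=0$ recovers \eqref{eq:Wasserstein grad IW-ELBO}.

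\textbf{Step 3: index invariance.} This follows as in \Cref{rm: invariance}: the other $K-1$ samples are i.i.d.\ and $w$ does not depend on $n$.

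\textbf{Main obstacle.} The only delicate point is the bookkeeping in Step 1. The dependence of the $n$-th weight on $q_n$ must be retained, because it produces the $-\E'[u_n/S]$ term and hence the $\alpha$-linear part. Dropping it would wrongly give only the linear self-normalized weight $u_n/S$ with coefficient $1$.
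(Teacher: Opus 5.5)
Your proposal is correct and follows essentially the same route as the paper: you perturb $q_n=q+\epsilon\nu$ both in the sampling density and in the $n$-th weight, obtain the first variation $\frac{1}{1-\alpha}\log(S/K)-u_n/S$ (the paper's $g$ divided by $1-\alpha$), and take its gradient in $z_n$ before averaging over the other $K-1$ samples. The only cosmetic difference is that you carry the $\frac{1}{1-\alpha}$ prefactor throughout. This makes explicit the cancellation of the extra $(1-\alpha)$ factor that the paper's expression for $\nabla_{z_n} g$ leaves implicit.
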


The proof is deferred to the Appendix.
By exactly the same argument in \Cref{sec:WG_IW_ELBO}, this expression for the coordinatewise Wasserstein gradient \eqref{eq:Wass grad VR-IWAE} is invariant with respect to the index $n$.
Therefore, without loss of generality, we can focus on the case $n = K$ in what follows.

The coordinatewise Wasserstein gradient \eqref{eq:Wass grad VR-IWAE} shares a similar form with the doubly-reparameterized gradient of the parametric VR-IWAE bound.
\cite{daudel2024learning} showed that the SNR of the doubly-reparameterized gradient estimator scales as $O(\sqrt{K})$.
A Monte Carlo estimator of the Wasserstein gradient \eqref{eq:Wass grad VR-IWAE} for the VR-IWAE bound exhibits the same SNR scaling rate.
Let $g^{r}_{\alpha,M,K}(z)$ denote the $r$-th coordinate of the Monte Carlo estimator of the Wasserstein gradient \eqref{eq:Wass grad VR-IWAE} evaluated $z$, using $M$ independent samples of the $K-1$ variables $z_1, \dots, z_{K-1}$. 

\begin{theorem}\label{the: SNR of VR-IWAE} Let $\alpha\in[0,1)$ and assume that $\E_{Z\sim q}(w(Z)^{4(1-\alpha)})<\infty$ and $\E_{Z\sim q}(w(Z)^{-12(1-\alpha)})<\infty$. Then, for any $z_n$ with $w(z_n)>0$,
\begin{equation}
\text{SNR}\big(g^r_{\alpha,M,K}(z_n)\big) = \Omega(\sqrt{MK}),\;\;r=1,...,d.
\end{equation}
\end{theorem}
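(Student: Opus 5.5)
Fix the evaluation point $z_n$ (take $n=K$) and a coordinate $r$, and set $u_i := w(z_i)^{1-\alpha}$ for $i\le K-1$ and $c := w(z_n)^{1-\alpha}>0$. Write $v := \partial_{z_r}\log w(z_n)$, a deterministic constant because $z_n$ is fixed. A single Monte Carlo replicate is then
\[
X_K := v\left(\alpha \frac{c}{c+S_{K-1}} + (1-\alpha)\frac{c^2}{(c+S_{K-1})^2}\right), \qquad S_{K-1}:=\sum_{i=1}^{K-1} u_i .
\]
The estimator $g^r_{\alpha,M,K}$ is the average of $M$ i.i.d.\ copies of $X_K$, so
\[
\mathrm{SNR}=\sqrt{M}\,\frac{|\E X_K|}{\sqrt{\V X_K}} .
\]
The factor $|v|$ cancels; if $v=0$ the gradient coordinate vanishes and we exclude that degenerate case. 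It therefore suffices to show $|\E X_K|/\sqrt{\V X_K}=\Omega(\sqrt K)$. The proof mirrors that of Theorem~\ref{the: Wass grad SNR}: the case $\alpha=0$ is exactly that theorem, and the moment conditions are the same after replacing $w$ by $w^{1-\alpha}$. So I would follow the same argument with $u_i$ in place of $w(z_i)$.

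\textbf{Step 1: the mean.} Let $\mu:=\E u=\E_{Z\sim q} w(Z)^{1-\alpha}$, which is positive and finite by the fourth-moment assumption. Write $\bar S := S_{K-1}/K$, so $c/(c+S_{K-1}) = (1/K)\cdot c/(c/K+\bar S)$. Define $h(s):=c/(c/K+s)$; it is smooth, positive and decreasing on $s>0$. By the law of large numbers together with uniform integrability, $\E[h(\bar S)]\to c/\mu$. The required domination comes from the negative-moment assumption: by Jensen, $\bar S^{-p}\le \frac{1}{K-1}\sum_i u_i^{-p}\cdot\big((K-1)/K\big)^{-p}$, so $\E \bar S^{-p}$ is bounded uniformly in $K$ for $p\le 12$. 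Hence the first term of $\E X_K$ equals $\alpha v\,c/(K\mu)\,(1+o(1))$. The squared term equals $(1-\alpha)v\,c^2/(K^2\mu^2)\,(1+o(1))$, which is of lower order when $\alpha>0$ and is the leading term when $\alpha=0$. In either case $|\E X_K|\ge C/K^{\beta}$ with $\beta=1$ if $\alpha>0$ and $\beta=2$ if $\alpha=0$, and both nonzero contributions carry the same sign, namely the sign of $v$, so no cancellation can occur.

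\textbf{Step 2: the variance.} Set $\phi(s):=\alpha h(s)/K+(1-\alpha)h(s)^2/K^2$, so that $X_K=v\,\phi(\bar S)$. I would bound $\V\phi(\bar S)$ by the mean value theorem: $\V\phi(\bar S)\le \E[(\phi(\bar S)-\phi(\mu))^2]\le \E[\phi'(\xi)^2(\bar S-\mu)^2]$ for some $\xi$ between $\bar S$ and $\mu$. Cauchy–Schwarz then gives
\[
\V\phi(\bar S)\le \sqrt{\E\phi'(\xi)^4}\,\sqrt{\E(\bar S-\mu)^4}.
\]
The fourth-moment condition $\E u^4<\infty$ gives $\E(\bar S-\mu)^4=O(K^{-2})$ by the standard bound for centred i.i.d.\ sums (the $1/K$ shift between $\bar S$ and the true sample mean is negligible). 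Next, $|\phi'(s)|\lesssim K^{-\beta}\big(s^{-2}+s^{-3}\big)$, and $\xi^{-1}\le \bar S^{-1}+\mu^{-1}$. Bounding $\E\phi'(\xi)^4$ therefore needs $\E\bar S^{-12}<\infty$ uniformly in $K$, which holds by the Jensen argument of Step 1 under $\E u^{-12}<\infty$; this is exactly where the exponent $-12(1-\alpha)$ in the assumption is used. Together these give $\V X_K = O(K^{-2\beta}\cdot K^{-1})$.

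\textbf{Step 3: conclusion and main obstacle.} Combining the two steps, $|\E X_K|/\sqrt{\V X_K}\ge C K^{-\beta}/(C' K^{-\beta-1/2})=\Omega(\sqrt K)$, and therefore $\mathrm{SNR}=\Omega(\sqrt{MK})$ for every $r$. The main obstacle is the bookkeeping of Step 2 when $\alpha>0$: the mixed term $\alpha h/K$ must be handled so that the variance bound is of order $K^{-3}$ rather than only $K^{-2}$. The delta-method bound above delivers this only because $\phi$ is differentiated around $\mu$ and the fluctuation $(\bar S-\mu)^4$ supplies the extra factor of $K^{-2}$. The remaining care goes into verifying the uniform negative-moment control of $\bar S$, which requires $K\ge 2$, and in checking that the $c/K$ shift inside $h$ does not change the leading constants.
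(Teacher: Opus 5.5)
Your proposal is correct and follows essentially the route the paper intends. The paper omits this proof as ``essentially the same'' as that of Theorem~\ref{the: Wass grad SNR}, and you transpose that argument with $w^{1-\alpha}$ in place of $w$: a mean-value linearization of the reciprocal of the averaged weights, uniform control of negative moments via convexity of $x\mapsto x^{-p}$, and Cauchy--Schwarz against the fourth central moment.

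The differences are minor but useful. You obtain the mean from the law of large numbers plus uniform integrability, rather than by bounding the remainder $\E(\eta_K\bar\xi_K)$. By packaging both terms into a single $\phi$, you also make explicit what the paper leaves implicit. For $\alpha>0$, the extra first-order term $\alpha c/(c+S_{K-1})$ shifts the mean from order $K^{-2}$ to $K^{-1}$ and the standard deviation from $K^{-5/2}$ to $K^{-3/2}$, which leaves the $\Omega(\sqrt{MK})$ rate intact.
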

The proof is essentially same as that of Theorem \ref{the: Wass grad SNR} and hence omitted.

\subsection{Optimization of the VR-IWAE Bound on the BW Space}

To formulate the gradient descent scheme for maximizing the VR-IWAE bound, we first derive the BW gradient of the VR-IWAE bound.
For notational convenience, let $G(z)$ denote the coordinatewise Wasserstein gradient \eqref{eq:Wass grad VR-IWAE} evaluated at $z$ with $n = K$.
Following the same derivation as in \Cref{sec:algorithm}, the BW gradient of the VR-IWAE bound $\G_K(q)$ at a Gaussian density $q = \mathcal{N}(m,\Sigma)$ is given by 
\begin{align}
    \nabla^\mathrm{BW}[ \G_K(q) ](z ) = a_* + S_* (z - m) ,
\end{align}
where the vector $a_*$ and matrix $S_*$ are defined by, respectively,
\begin{align}
    a_*  = \E_{z_k \sim q}[ G(z_K) ], \quad \text{and} \quad S_*  = \E_{z_K \sim q}[ \nabla G(z_K) ] . \label{eq:BW_terms_VR_IWAE}
\end{align}
The explicit expressions for $a_*$ and $S_*$, along with their Monte Carlo estimates, can be found in \Cref{sec:BW_derivation_VE_IWAE}.
Applying essentially the same proof as in \Cref{pro: BW grad SNR}, we can show that the SNR of this BW gradient scales as $\Omega(1)$.
Algorithm \ref{alg:wgb}, presented in \Cref{sec:algorithm}, is readily adapted to this BW gradient by substituting the newly derived expressions for the vector $a_*$ and matrix $S_*$.

We evaluate the efficacy of this VI algorithm using a unimodal but challenging target, commonly known as the banana-shaped distribution \citep{Heikki2001}.
Let $q$ denote the density function of a $d$-dimensional Gaussian distribution $\mathcal{N}(0,\Sigma)$ with covariance matrix $\Sigma=\mathrm{diag}(100,1,\dots,1)$.
For a given hyperparameter $b > 0$, the target density is defined as $\pi(x) = q(\phi(x))$, where
\begin{align}
    \phi(x)=(x_1, x_2 + b x_1^2 - 100 b, x_3, ..., x_d) \in \R^d .
\end{align}
Despite its unimodality, this target poses a significant challenge for VI methods due to its highly non-Gaussian geometry characterized by extended, narrow tails. Varying the hyperparameter $b$ adjusts the curvature and, consequently, the degree of non-Gaussianity \citep{Heikki2001}; we set $b=0.03$ in our experiments. 
We approximate this target with a single multivariate Gaussian, comparing the four algorithms of our factorial design together with the BW-VR-IWAE method at $\alpha = 0.1$ and $\alpha = 0.9$, all initialized identically. Figure \ref{fig:banana} overlays the learned Gaussian contours on samples from the true target and reports the estimated marginal variances against the true values $\text{Var}(x_1) = 100$ and $\text{Var}(x_2) = 19$. The bent coordinate $x_2$ is the discriminating one: methods that collapse onto the high-density ridge recover its variance poorly. The standard-ELBO methods collapse most strongly - Euclidean ELBO underestimates both directions ($\widehat{\text{Var}}(x_1) = 14.3$, $\widehat{\text{Var}}(x_2) = 1.0$), and FB-GVI captures $x_1$ ($87.8$) but collapses $x_2$ to $1.0$. Switching to the IW-ELBO objective markedly improves the $x_2$ dispersion (Euclidean IW-ELBO: $7.1$; BW-IW-ELBO: $10.9$), and within the IW-ELBO objective the BW 
geometry gives the closest match to the true $x_2$ variance. The BW-VR-IWAE method follows the same trend, with the smaller power $\alpha = 0.1$ ($\widehat{\text{Var}}(x_2) = 3.6$) clearly outperforming $\alpha = 0.9$ ($1.1$). Overall, BW-IW-ELBO best recovers the target's dispersion, confirming that both the IW-ELBO objective and the BW geometry contribute to mass coverage. As expected, no method recovers the banana exactly, since its curved geometry lies outside the Gaussian variational family. 

We emphasize that this section should be viewed as a proof of concept rather than a comprehensive evaluation of BW-VR-IWAE. Its purpose is to demonstrate that the proposed Bures--Wasserstein geometry is not tied specifically to the IW-ELBO objective, but instead provides a more general geometric framework that can be incorporated into other importance-weighted variational objectives. A thorough study of BW-VR-IWAE would require addressing a distinct set of theoretical and empirical questions, including extensive evaluations on challenging models such as those considered by \citet{daudel2026importance}. Such an investigation would considerably broaden the scope of the current paper and is therefore left for future work.

\begin{figure}[h!]
    \centering
    \includegraphics[width=\textwidth]{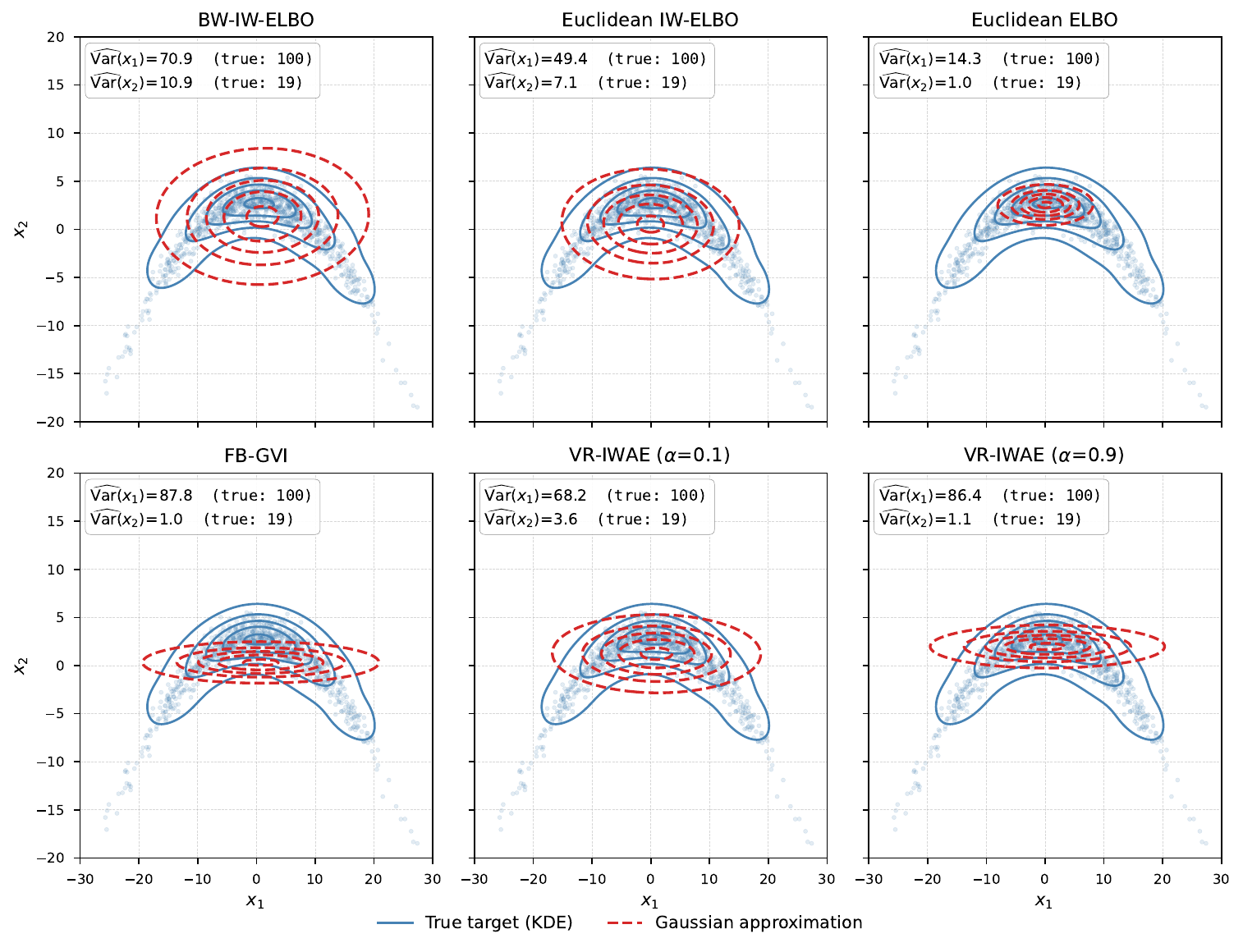}
    \caption{A comparison of different VI algorithms for banana-shaped distribution}
    \label{fig:banana}
\end{figure}

\section{Conclusion} \label{sec:conclusion}

This work bridges the gap between two distinct paradigms, IW-ELBO in importance weighted VI and WGF in optimal transport, by casting the optimization of IW-ELBO as a problem defined on the Wasserstein and BW space.
This formulation reveals the advantageous SNR scaling properties of the Wasserstein and BW gradients for the IW-ELBO.
Consequently, this elucidates the theoretical mechanisms that render optimization in this geometric framework more stable and efficient than conventional Euclidean methods.
Building upon this foundation, we introduce the BW-IW-ELBO method, enabling tractable IW-ELBO optimization for the Gaussian family under the BW geometry.
We evaluated the resulting Gaussian approximations via extensive numerical experiments.
Our empirical assessment demonstrated that the BW-IW-ELBO VI exhibits both mass-covering behavior and high approximation accuracy, particularly in complex multimodal settings where standard Gaussian variational methods typically underperform.

From a practical perspective, using a VI method requires 
the choice of three main ingredients: the variational family, the variational objective, and the optimization method.
The interplay between these components can substantially affect the final approximation accuracy, posterior uncertainty quantification, and predictive performance.
Focus on the setting where the variational family is restricted to Gaussian approximations, we find that coupling the IW-ELBO objective with the BW gradient optimizer consistently improves the quality of the variational approximation, yielding better posterior coverage and higher-quality importance weights across the experiments considered.
Therefore, for practitioners using Gaussian variational families, our results suggest that the IW-ELBO objective combined with the BW optimization geometry provides an effective and practically robust approach to VI.

Several promising avenues for future research remain.
Establishing a rigorous theoretical analysis of the convergence properties of the BW-IW-ELBO VI would further bolster the methodological foundation of our approach.
Furthermore, extending IW-ELBO optimization to broader geometric spaces beyond the BW manifold may yield even more robust variational frameworks and a deeper understanding of the interplay between optimal transport geometry and importance-weighted VI.
From the practical viewpoint, evaluating Wasserstein-based VI algorithms on a comprehensive and diverse suite of problems \citep[e.g.,][]{Magnusson2025} is a valuable avenue of dedicated future work.

\bibliographystyle{apalike}
\bibliography{bibliography}	

\clearpage
\appendix

\begin{center}
	\LARGE \textbf{Appendix}
\end{center}

\vspace{40pt}

This appendix contains additional details and deferred proofs of the theoretical results presented in the main text.
\Cref{apx:proofs} provides the proof of the theoretical results.
\Cref{sec:BW_derivation_VE_IWAE} shows the detail derivation of the BW gradient of the VR-IWAE bound.
\Cref{sec:Additional Experimental Details} contains the additional experimental details.

\section{Proofs} \label{apx:proofs}

\subsection{Proof of Proposition \ref{pro: proposition1}}

\begin{proof}[Proof of Proposition \ref{pro: proposition1}]
	By exchanging the order of the expectation and decomposing the summation, the objective $\mathcal{F}_n(q_n)$ can be expressed as
	\begin{align}
		\F_n(q_n) = \mathop{\E}_{ \{ z_1, \dots, z_K \} \setminus z_n \overset{i.i.d.}{\sim} q } \Bigg[ \underbrace{ \mathop{\E}_{z_n \sim q_n} \left[ \log \left( \frac{1}{K} \frac{p(x, z_n)}{q_n(z_n)} + \frac{1}{K} \sum_{i \ne n} w(z_i) \right) \right] }_{ =: \mathcal{F}_n^*(q_n) } \Bigg],
	\end{align}
    where the dependency of $\mathcal{F}_n^*$ on the set of the variables $\{ z_1, \dots, z_K \} \setminus z_n$ is made implicit.
    Provided that the derivative and the expectation are interchangeable, the Wasserstein gradient of $\F_n$ can be derived as the Wasserstein gradient of $\mathcal{F}_n^*$ averaged over the set of the variables $\{ z_1, \dots, z_K \} \setminus z_n$. 
    To this end, we derive the Wasserstein gradient of $\mathcal{F}_n^*$.
    Recall the derivation of the Wasserstein gradient recapped in \Cref{sec:bures_wasserstein_geometry}.
    The first variation of $\mathcal{F}_n^*$ at $q_n = q$ can be written as
	\begin{align}
		\frac{d}{d \epsilon} \mathcal{F}_n^*(q + \epsilon \nu) \Big|_{\epsilon = 0} = \frac{d}{d \epsilon} \int_{\Z} \underbrace{ \log \left( \frac{1}{K} \frac{ p(x, z_n) }{ q(z_n) + \epsilon \nu(z_n) } + \frac{1}{K} \sum_{i \ne n} w(z_i) \right) }_{ = (*_1)} \underbrace{\vphantom{\log \left( \frac{1}{K} \frac{ p(x, z_n) }{ q(z_n) } + \frac{1}{K} \sum_{i \ne n} w(z_i) \right)} ( q(z_n) + \epsilon \nu(z_n) ) }_{ = (*_2)} d z_n \Bigg|_{\epsilon = 0}, 
	\end{align}
	where the right derivative in \eqref{eq:first_variation} equals the standard derivative at $\epsilon = 0$ under sufficient regularity. 
    By exchanging the order of the integral and derivative, it further follows that
	\begin{align}
		\frac{d}{d \epsilon} \mathcal{F}_n^*(q_n + \epsilon \nu) \Big|_{\epsilon = 0} & =  \int_{\Z} \left( \left( \frac{d}{d \epsilon} (*_1) \right) \times (*_2) + (*_1) \times \left( \frac{d}{d \epsilon} (*_2) \right) \right) \Bigg|_{\epsilon = 0} d z_n .
	\end{align}
    By standard calculation of derivatives, we have
	\begin{align}
    	(*_1) \Big|_{\epsilon = 0} & = \log \left( \frac{1}{K} \sum_{i=1}^{K} w(z_i) \right) , \\
		(*_2) \Big|_{\epsilon = 0} & = q(z_n) , \\
		\frac{d}{d \epsilon} (*_1) \Big|_{\epsilon = 0} & = -\frac{ w(z_n) }{ \sum_{i=1}^{K} w(z_i) } \frac{ \nu(z_n) }{ q(z_n) } , \\
		\frac{d}{d \epsilon} (*_2) \Big|_{\epsilon = 0} & = \nu(z_n).
	\end{align}
	Therefore, the first variation of the functional $\mathcal{F}_n^*(q_n)$ at $q_n = q$ results in the following form
	\begin{align}
		\delta \mathcal{F}_n^*(z_n) = \log \left( \frac{1}{K} \sum_{i=1}^{K} w(z_i) \right) - \frac{ w(z_n) }{ \sum_{i=1}^{K} w(z_i) } =: f(z_n), 
	\end{align}
    where the dependency of the function $f$ on the set of the variables $\{ z_1, \dots, z_K \} \setminus z_n$ remains made implicit.
    As the Wasserstein gradient of $\mathcal{F}_n^*$ is the gradient of the function $f$, we have that
	\begin{align}
		\nabla_{z_n} f(z_n) & = \frac{\nabla_{z_n} w(z_n) }{ \sum_{i=1}^{K} w(z_i) }-\frac{\nabla_{z_n} w(z_n) ( \sum_{i=1}^{K} w(z_i) ) - w(z_n) ( \nabla_{z_n} \sum_{i=1}^{K} w(z_i) )}{ ( \sum_{i=1}^{K} w(z_i) )^2 }  \\
		& = \frac{\nabla_{z_n} w(z_n) }{ \sum_{i=1}^{K} w(z_i) } - \frac{\nabla_{z_n} w(z_n) }{ \sum_{i=1}^{K} w(z_i) } + \frac{w(z_n) }{ \sum_{i=1}^{K} w(z_i) }\frac{\nabla_{z_n} w(z_n) }{ \sum_{i=1}^{K} w(z_i) } \\
		& = \left( \frac{ w(z_n) }{ \sum_{i=1}^{K} w(z_i)} \right)^2 \nabla_{z_n} \log w(z_n) .
	\end{align}
    Hence, taking the expectation of this Wasserstein gradient $\nabla_{z_n} f(z_n)$ over the implicit dependent variables $\{ z_1, \dots, z_K \} \setminus z_n$ yields the intended form of the Wasserstein gradient of $\F_n(q_n)$.
\end{proof}

\subsection{Proof of \Cref{pro: proposition2}}

\begin{proof}[Proof of \Cref{pro: proposition2}]
    The result is a direct implication of \eqref{eq:BW gradient derivation} and \eqref{eq:Wasserstein grad IW-ELBO}.
\end{proof}

\subsection{Proof of \Cref{the: Wass grad SNR}}
\begin{proof}[Proof of \Cref{the: Wass grad SNR}]
    First, as the $M$ sets $\{z_{m,2:K}\}_{m=1}^M$ are i.i.d copies of $z_{2:K}$, it is clear that $\text{SNR}\big(g^{z_r}_{M,K}(z)\big)=\sqrt{M}\times\text{SNR}\big(g^{z_r}_{1,K}(z)\big)$.
    Also, we will evaluate $\text{SNR}\big(g^{z_r}_{1,K+1}(z)\big)$ rather than $\text{SNR}\big(g^{z_r}_{1,K}(z)\big)$, as both have the same magnitude as $K\to\infty$. Write the former derivative as    
\begin{align}
    g^{z_r}_{1,K+1}(z)& =\left( \frac{ w(z) }{w(z)+\sum_{i=1}^{K} w_i(z_{i})} \right)^2 \frac{\partial}{\partial z_r}\log w(z)\\
    & =\left( \frac{ w(z)/K}{w(z)/K+p(x)+\frac1K\sum_{i=1}^{K}\big(w_i(z_{i})-p(x)\big)} \right)^2 \frac{\partial}{\partial z_r}\log w(z)\\
    & =\left( \frac{ w(z)/K}{c_K+\frac1K\sum_{i=1}^{K}\xi_i} \right)^2 \frac{\partial}{\partial z_r}\log w(z),
\end{align}    
where $c_K:=w(z)/K+p(x)$ and $\xi_i:=w_i(z_{i})-p(x)$. We note that $\xi_i$ are i.i.d. with $\E(\xi_i)=\int p(x,z)dz-p(x)=0$.
Denote $X_K=1/\big(c_K+\frac1K\sum_{i=1}^{K}\xi_i\big)^2=1/\big(c_K+\bar\xi_K\big)^2$ with $\bar\xi_K=\frac1K\sum_{i=1}^{K}\xi_i$, we have
\[\text{SNR}\big(g^{z_r}_{1,K+1}(z)\big)=\text{SNR}(X_K).\]
Simple calculation gives
\begin{equation}\label{eq:helper eq 1}
\E(\bar{\xi}_K^2)=\frac{1}{K}\E(\xi_1^2),\;\;\;\E(\bar\xi_K^4)=\frac{1}{K^3}\E(\xi_1^4)+\frac{6(K-1)}{K^3}\big(\E(\xi_1^2)\big)^2.
\end{equation}
Applying the Mean Value Theorem to function $f(x)=1/(c_K+x)^2$,
\[f(x)=f(0)+f'(\zeta)x=c_K^{-2}-2(c_K+\zeta)^{-3}x,\]
where $\zeta$ is a value between $0$ and $x$, we have
\[X_K=c_K^{-2}-\eta_K\bar\xi_K\]
with $\eta_K=2(c_K+\zeta_K)^{-3}$ for some $\zeta_K$ between 0 and $\bar\xi_K$.
We have $c_K^{-2}=p(x)^{-2}+O(1/K)$, and will show that $\E(\eta_K\bar\xi_K)=O(1/\sqrt{K})$.
As $\zeta_K$ is between 0 and $\bar\xi_K$, $p(x)+\zeta_K$ is between $p(x)$ and $\frac{1}{K}\sum_iw_i(z_i)$, 
\begin{align}
    \eta_K^2=\frac{4}{\Big(w(z)/K+p(x)+\zeta_K\Big)^6}&\leq\frac{4}{\Big(p(x)+\zeta_K\Big)^6}\\
    &\leq 4\max\Big\{\frac{1}{p(x)^6},\frac{1}{\Big(\frac{1}{K}\sum_i w_i(z_i)\Big)^6}\Big\}\\
    &\leq 4\max\Big\{\frac{1}{p(x)^6},\frac{1}{K}\sum_i\frac{1}{w_i(z_i)^6}\Big\}, 
\end{align}
the last inequality is because $x\mapsto x^{-6}$ is convex in $(0,\infty)$.
This gives
\begin{equation}
    \E(\eta_K^2)\leq 4\max\Big\{\frac{1}{p(x)^6},\E(w(Z)^{-6})\Big\}:=C<\infty
\end{equation}
because of Assumption (ii). By the Cauchy-Schwarz theorem, and then using Assumption (i) to bound $\E(\xi_1^2)$,
\begin{equation}\label{eq: term eta K xi K}
    |\E(\eta_K\bar\xi_K)|\leq \Big(\E(\eta_K^2)\E(\bar\xi_K^2)\Big)^{1/2}\leq \Big(\frac{C}{K}\E(\xi_1^2)\Big)^{1/2}=O(\frac{1}{\sqrt{K}}).
\end{equation}
Now,
\begin{equation}
    |\E(X_K)| = \big|c_K^{-2}-\E(\eta_K\bar\xi_K)\big|\geq \big|c_K^{-2}-|\E(\eta_K\bar\xi_K)|\big|.
\end{equation}
As $c_K^{-2} >0$ and $|\E(\eta_K\bar\xi_K)|\to0$, $c_K^{-2}-|\E(\eta_K\bar\xi_K)|>0$ as $K$ is large enough, then
\begin{equation}
    |\E(X_K)| \geq c_K^{-2}-|\E(\eta_K\bar\xi_K)| =  1/p(x)^2 + o(1),  
\end{equation}
which implies 
\begin{equation}\label{eq: order of mean X}
    |\E(X_K)| = \Omega(1).
\end{equation}
For the variance term
\begin{align}
    \V(X_K) =\E(X_K^2)-\E(X_K)^2 =\E(\eta_K^2\bar\xi_K^2)-\big(\E(\eta_K\bar\xi_K)\big)^2\leq \E(\eta_K^2\bar\xi_K^2).
\end{align}
Similar to the above, we can show that
\begin{equation}
    \E(\eta_K^4)\leq 16 \max\Big\{\frac{1}{p(x)^{12}},\E(w(Z)^{-12})\Big\}<\infty.
\end{equation}
Hence, by \eqref{eq:helper eq 1},
\begin{equation}  
\V(X_K)\leq \E(\eta_K^2\bar\xi_K^2)\leq\Big(\E(\eta_K^4)\E(\bar\xi_K^4)\Big)^{1/2}=O(\frac{1}{K}),
\end{equation}
which implies 
\begin{equation}\label{eq: order of var X}
    \frac{1}{\V(X_K)} = \Omega({K}).
\end{equation}
Combining \eqref{eq: order of mean X}-\eqref{eq: order of var X} together, we obtain
\[\text{SNR}(X_K)=\frac{|\E(X_K)|}{\sqrt{\V(X_K)}}=\Omega(\sqrt{K}).\]
This completes the proof.
\end{proof}

\subsection{Proof of \Cref{pro: BW grad SNR}}
\begin{proof}[Proof of \Cref{pro: BW grad SNR}]
Let $\widehat{a}_*^{(M,K)}$ and $\widehat{S}_*^{(M,K)}$ denote the Monte Carlo estimator of $a_*$ and $S_*$, where the former is $d$-dimensional vector and the latter is $d \times d$ matrix. 
Thus, $g_{M,K}$ denotes either the $r$-th entry of $\widehat{a}_*^{(M,K)}$ or the $(i,j)$-th entry of $\widehat{S}_*^{(M,K)}$ for arbitrary entry index.
Since the estimator $g_{M,K}$ is an average based on $M$ i.i.d. samples from the $K$ variables $z_{1:K}$, it is clear that $\text{SNR}\big(g_{M,K}\big)=\sqrt{M}\times\text{SNR}\big(g_{1,K}\big)$.
Hence, it suffices to consider $M=1$. 
For notational convenience, we evaluate $\text{SNR}\big(g_{1,K+1}\big)$ rather than $\text{SNR}\big(g_{1,K}\big)$ with no loss of generality.

Let $z_{1:K}\sim \mathcal{N}(m,\Sigma)$ denote the first $K$ samples.
Let $z\sim \mathcal{N}(m,\Sigma)$ denote the last $K+1$-th sample.
The BW gradient $\mathcal{F}_{K+1}(q)$, based on the $K+1$ samples, at a density $q = \mathcal{N}(m, \Sigma)$ is given by
\begin{align}
    \nabla^{\text{BW}}[\mathcal{F}_{K+1}(q) ](u) = a_* + S_* (u - m) ,
\end{align}	
with
\begin{align}
    a_* & = \mathop{\E}_{z,z_1, \dots, z_K \overset{i.i.d.}{\sim} \mathcal{N}(m,\Sigma)} \left[ \left( \frac{ w(z) }{ w(z)+\sum_{i=1}^{K} w(z_i)} \right)^2 \nabla_{z} \log w(z) \right]\\
    S_* & = \mathop{\E}_{z,z_1, \dots, z_K \overset{i.i.d.}{\sim} \mathcal{N}(m,\Sigma)} \left[ \nabla_{z}\left\{ \left(\frac{ w(z) }{ w(z)+\sum_{i=1}^{K} w(z_i)} \right)^2 \nabla_{z} \log w(z)\right\} \right] .
\end{align}	
First, we consider the case where $g_{1,K+1}$ is a component of $\widehat{a}_*^{(1,K+1)}$.
This can be written as
\begin{align}
    g_{1,K+1}& =\left( \frac{ w(z) }{w(z)+\sum_{i=1}^{K} w(z_{i})} \right)^2 \partial_{z_r}\log w(z)
\end{align}    
where $z,z_{1:K}\sim \mathcal{N}(m,\Sigma)$, and $\partial_{z_r}\log w(z)$ is the partial derivative of $\log w(z)$ w.r.t some component $z_r$ of vector $z$.
We express this term as
\begin{align}
    g_{1,K+1}& =\frac{1}{K^2}\underbrace{w(z)^2\partial_{z_r}\log w(z)}_{f(z)}\underbrace{\left(\frac{1}{c_K(z)+\frac1K\sum_{i=1}^{K}\xi_i}\right)^2}_{X_K}=\frac{1}{K^2}f(z)X_K,
\end{align}    
where, as in the proof of \Cref{the: Wass grad SNR}, we set $c_K(z)=w(z)/K+p(x)$, $\xi_i=w(z_{i})-p(x)$, and 
\[X_K=c_K(z)^{-2}-\eta_K\bar\xi_K\]
with $\eta_K=2(c_K(z)+\zeta_K)^{-3}$ for some $\zeta_K$ between 0 and $\bar\xi_K=\frac{1}{K}\sum\xi_i$.

We have
\[\E(g_{1,K+1})=\frac{1}{K^2}\E\Big(f(z)\E\big(X_K|z\big)\Big).\]
From the proof of \Cref{the: Wass grad SNR},
\begin{align}
    \eta_K^2
    &\leq 4\max\Big\{\frac{1}{p(x)^6},\frac{1}{K}\sum_i\frac{1}{w(z_i)^6}\Big\},\;\;\forall z 
\end{align}
hence $\E(\eta_K^2|z)\leq C$, for some constant $C$, for all $z$. Note that $\bar\xi_K$ is independent of $z$,
\[|\E(\eta_K\bar{\xi}_K|z)|\leq \Big(\E(\eta_K^2|z)\E(\bar\xi_K^2|z)\Big)^{1/2}\leq \Big(C\E(\bar\xi_K^2)\Big)^{1/2}=O(\frac{1}{\sqrt{K}}).\]
Then,
\begin{equation}\label{eq:evaluate E(X_K|z)}
    \E(X_K|z)=c_K(z)^{-2}+O(\frac{1}{\sqrt{K}})=\frac{1}{p(x)^2}+O(\frac{w(z)}{K})+O(\frac{1}{\sqrt{K}}).
\end{equation}
Hence, by the assumption that $\E(f(z))<\infty$ and $\E(w(z)f(z))<\infty$, we have
\begin{equation}\label{eq:mean term of g_{1,K+1}}
    \E(g_{1,K+1})=\frac{1}{K^2}\frac{\E(f(z))}{p(x)^2}+o(\frac{1}{K^2}).
\end{equation}
For the variance term,
\begin{equation}
    \V(g_{1,K+1})=\frac{1}{K^4}\Big[\E\big(f(z)^2\V(X_K|z)\big)+\V\big(f(z)\E(X_K|z)\big)\Big].
\end{equation}
Again, from the proof of \Cref{the: Wass grad SNR}, $\V(X_K|z)=O(1/K)$, hence
\begin{equation}\label{eq:variance term of g_{1,K+1}}
    \V(g_{1,K+1})=\frac{1}{K^4} \frac{1}{p(x)^4}\V(f(z)) + o( \frac{1}{K^4} ).
\end{equation}
The SNR is the ratio of \eqref{eq:mean term of g_{1,K+1}} and the square root of \eqref{eq:variance term of g_{1,K+1}}, yielding
\begin{align}
    \text{SNR}(g_{1,K+1}) = \frac{\Big|\frac{1}{K^2}\frac{1}{p(x)^2}\E(f(z))+o(\frac{1}{K^2})\Big|}{\sqrt{\frac{1}{K^4} \frac{1}{p(x)^4}\V(f(z))+o( \frac{1}{K^4} ) )}} \ge \frac{ \Omega\Big( \frac{1}{K^2} \Big| \E(f(z)) \Big| \Big) }{ \frac{1}{K^2} \sqrt{ \V(f(z)) } } = \Omega(1) , 
\end{align}
where the last equality holds because $\E(f(z))$ is non-zero by the assumption (ii) and $\V(f(z))$ is positive.

Next, we consider the case where $g_{1,K+1}$ is a component of the matrix $\widehat{S}_*^{(1,K+1)}$, i.e., the $(i,j)$-th entry of the matrix for arbitrary $i,j$.
For notational convenience, define
\[\overline{W}_K := \frac{w(z)}{w(z)+\sum_{i=1}^Kw(z_i)} . \]
Standard algebraic operation yields
\begin{align}
    S_* = \mathop{\E}_{z,z_{1:K}} \left[\overline{W}_K^2\Big(\nabla^2_z\log w(z)+2\nabla_z\log w(z)(\nabla_z\log w(z))^\top\Big)-2\overline{W}_K^3\nabla_z\log w(z)(\nabla_z\log w(z))^\top\right] .
\end{align}
For $i$ and $j$ components $z_i,z_j$ of $z$, the component $g_{1,K+1}$ can be written as
\begin{align}
    g_{1,K+1} = \underbrace{\overline{W}_K^2\Big(\partial^2_{z_iz_j}\log w(z)+2\partial_{z_i}\log w(z)\partial_{z_j}\log w(z)\Big)}_{(*)}-2\underbrace{\overline{W}_K^3\partial_{z_i}\log w(z)\partial_{z_j}\log w(z)}_{(**)} ,
\end{align}	
where $z,z_{1:K}\sim \mathcal{N}(m,\Sigma)$.
Observe that $\overline{W}_K^2=\big(w(z)/K\big)^2X_K$, which yields
\[(*)=\frac{1}{K^2}\underbrace{w(z)^2\big(\partial^2_{z_iz_j}\log w(z)+2\partial_{z_i}\log w(z)\partial_{z_j}\log w(z)\big)}_{f_1(z)}X_K=\frac{1}{K^2}f_1(z)X_K.\]
Thus, we have the following form of the expectation of $(*)$:
\begin{equation}\label{eq:term (*) mean}
    \E(*) = \frac{1}{K^2}\E\big(f_1(z)\E(X_K|z)\big) = \frac{1}{K^2}\frac{\E(f_1(z))}{p(x)^2}+o(\frac{1}{K^2})    
\end{equation}
where we have used \eqref{eq:evaluate E(X_K|z)} and the assumptions that $\E(f_1(z))<\infty$ and $\E(w(z)f_1(z))<\infty$.
We move on to the evaluation of term $(**)$.
We express the cubed term $\overline{W}_K^3$ as follows:
\[\overline{W}_K^3=\Big(\frac{w(z)}{K}\Big)^3\frac{1}{\big(c_K(z)+\bar\xi_K\big)^3}=\Big(\frac{w(z)}{K}\Big)^3Y_K, \quad\text{with}\quad Y_K:=\frac{1}{\big(c_K(z)+\bar\xi_K\big)^3}.\]
Using Taylor's expansion for the function $x:\mapsto (c_K(z)+x)^{-3}$, we obtain
\[Y_K=c_K(z)^{-3}-\kappa_K\bar\xi_K\]
with $\kappa_K=3(c_K(z)+\zeta_K)^{-4}$ for some $\zeta_K$ between 0 and $\bar\xi_K$.
Similar to the proof of \Cref{the: Wass grad SNR}, $\E(\kappa_K^2|z)$ and $\E(\kappa_K^4|z)$ are bounded independent of $z$, hence
\begin{equation}\label{eq:bounded result 1}
    |\E(\kappa_K\bar\xi_K|z)|\leq \Big(\E(\kappa_K^2|z)\E(\bar\xi_K^2)\Big)^{1/2}=O(\frac{1}{K}), \quad |\E(\kappa_K^2\bar\xi_K^2|z)|\leq \Big(\E(\kappa_K^4)\E(\bar\xi_K^4|z)\Big)^{1/2}=O(\frac{1}{K}) .
\end{equation}    
Leveraging this rate, it is straightforward to verify 
\begin{equation}
    \E(Y_K|z)=\frac{1}{p(x)^3}+o(1),\quad\text{and}\quad \V(Y_K|z)=O(\frac{1}{K}).
\end{equation}
We further express term $(**)$ as
\[(**) = \frac{1}{K^3}\underbrace{w(z)^3\partial_{z_i}\log w(z)\partial_{z_j}\log w(z)}_{f_2(z)}Y_K=\frac{1}{K^3}f_2(z)Y_K , \]
which in turn yields
\begin{equation}\label{eq:term (**) mean}
    \E(**) = \frac{1}{K^3}\E\big(f_2(z)\E(Y_K|z)\big) = \frac{1}{K^3}\frac{\E(f_2(z))}{p(x)^3}+o\big(\frac{1}{K^3}\big) .
\end{equation}
Combining \eqref{eq:term (*) mean}-\eqref{eq:term (**) mean} gives,
\begin{equation}\label{eq:mean of g_{1,K+1}} 
    \E(g_{1,K+1}) = \frac{1}{K^2}\frac{\E(f_1(z))}{p(x)^2} + o(\frac{1}{K^2}) .
\end{equation}
For the variance of term $(*)$, we have
\begin{align}
\V(*) & =\frac{1}{K^4}\Big(\E\big(f_1(z)^2\V(X_K|z)\big)+\V(f_1(z)\E(X_K|z))\Big) \\
& =\frac{1}{K^4}\Big(O(\frac{1}{K})+\frac{1}{p(x)^4}\V(f_1(z))+o(1)\Big) \\
& = \frac{1}{K^4} \frac{1}{p(x)^4} \V(f_1(z)) + o(\frac{1}{K^4}).
\end{align}
A similar derivation provides
\[\V(**)=O(\frac{1}{K^6}).\]
The Cauchy-Schwarz inequality implies 
\begin{equation}\label{eq:var of g_{1,K+1}}
    \V(g_{1,K+1})= \frac{1}{K^4} \frac{1}{p(x)^4} \V(f_1(z)) + o(\frac{1}{K^4}) .
\end{equation}
From \eqref{eq:mean of g_{1,K+1}} and \eqref{eq:var of g_{1,K+1}}, we conclude that 
\[\text{SNR}(g_{1,K+1}) = \frac{|\E(g_{1,K+1})|}{\sqrt{\V(g_{1,K+1})}} = \frac{| \frac{1}{K^2}\frac{\E(f_1(z))}{p(x)^2} + o(\frac{1}{K^2}) |}{ \sqrt{\frac{1}{K^4} \frac{1}{p(x)^4} \V(f_1(z)) + o(\frac{1}{K^4})} } \ge \frac{ \Omega\Big( \frac{1}{K^2} \Big| \E(f_1(z)) \Big| \Big) }{ \frac{1}{K^2} \sqrt{ \V(f_1(z)) } } = \Omega(1) , \]
where the last equality holds because $\E(f_1(z))$ is non-zero by the assumption (iii) and $\V(f_1(z))$ is positive.
This completes the proof.
\end{proof}

\subsection{Proof of \Cref{pro: proposition VR-IWAE}}

\begin{proof}[Proof of \Cref{pro: proposition VR-IWAE}]
    By exchanging the order of the expectation, $\G_n(q_n)$ can be expressed as
    \begin{align}
		\G_n(q_n) & = \frac{1}{1-\alpha} \mathop{\E}_{z_n \sim q_n} \Bigg[ \underbrace{ \mathop{\E}_{\{z_1, \dots, z_K\} \setminus z_n \overset{i.i.d.}{\sim} q} \bigg[ \log \bigg( \frac{1}{K} \frac{p(x, z_n)^{1-\alpha}}{q_n(z_n)^{1-\alpha}} + \frac{1}{K} \sum_{i \ne n} w(z_i)^{1-\alpha} \bigg) \bigg] }_{=: \G_n^*(q_n) } \Bigg] 
	\end{align}
    where the dependency of $\mathcal{G}_n^*$ on the set of the variables $\{ z_1, \dots, z_K \} \setminus z_n$ is made implicit.
    Provided that the derivative and the expectation are interchangeable, the Wasserstein gradient of $\G_n$ can be derived as the Wasserstein gradient of $\mathcal{G}_n^*$ averaged over the set of the variables $\{ z_1, \dots, z_K \} \setminus z_n$. 
    To this end, we derive the Wasserstein gradient of $\mathcal{G}_n^*$.
    Recall the derivation of the Wasserstein gradient recapped in \Cref{sec:bures_wasserstein_geometry}.
    The first variation of $\mathcal{G}_n^*$ at $q_n = q$ can be written as
	\begin{align}
		\frac{d}{d \epsilon} \G_n^*(q + \epsilon \nu) \Big|_{\epsilon = 0} = \frac{d}{d \epsilon} \int_{\Z} \underbrace{ \log \Bigg( \frac{1}{K} \frac{ p(x, z_n)^{1-\alpha} }{ ( q(z_n) + \epsilon \nu(z_n) )^{1-\alpha} } + \frac{1}{K} \sum_{i \ne n} w(z_i)^{1-\alpha} \Bigg) }_{ = (*_1)} \underbrace{\vphantom{\log \Bigg( \frac{1}{K} \sum_{i \ne n}} ( q(z_n) + \epsilon \nu(z_n) ) }_{ = (*_2)} d z_n \Bigg|_{\epsilon = 0}  
	\end{align}
    where the right derivative in \eqref{eq:first_variation} equals the standard derivative at $\epsilon = 0$ under sufficient regularity. 
    By exchanging the order of the integral and derivative, it further follows that
	\begin{align}
		\frac{d}{d \epsilon} \G_n^*(q + \epsilon \nu) \Big|_{\epsilon = 0} & = \int_{\Z} \left( \left( \frac{d}{d \epsilon} (*_1) \right) \times (*_2) + (*_1) \times \left( \frac{d}{d \epsilon} (*_2) \right) \right) \Bigg|_{\epsilon = 0} \times d z_n .
	\end{align}
    By standard calculation of derivatives, we have
    \begin{align}
        (*_1) |_{\epsilon = 0} & = \log \left( \frac{1}{K} \sum_{i=1}^{K} w(z_i)^{1-\alpha}\right) , \\
		(*_2) |_{\epsilon = 0} & = q(z_n) , \\
        \frac{d}{d \epsilon} (*_1) \Big|_{\epsilon = 0} & = -(1-\alpha) \frac{ w(z_n)^{1-{\alpha}} }{ \sum_{i=1}^{K} w(z_i)^{1-\alpha} } \frac{ \nu(z_n) }{ q(z_n) } , \\
		\frac{d}{d \epsilon} (*_2) \Big|_{\epsilon = 0} & = \nu(z_n).
	\end{align}
    Therefore, the first variation of the functional $\G_n^*$ at $q_n = q$ results in the following form
	\begin{align}
		\delta \G_n^*(z_n) = \log \left( \frac{1}{K} \sum_{i=1}^{K} w_i(z_i)^{1-\alpha} \right)-(1-\alpha) \frac{ w_n(z_n)^{1-\alpha}}{ \sum_{i=1}^{K} w_i(z_i)^{1-\alpha} } =: g(z_n)
	\end{align}
    where the dependency of the function $g$ on the set of the variables $\{ z_1, \dots, z_K \} \setminus z_n$ remains made implicit.
    As the Wasserstein gradient of $\mathcal{G}_n^*$ is the gradient of the function $g$, we have that
	\begin{align}
		\nabla_{z_n} g(z_n) & = (1 - \alpha) \lr{ \alpha \frac{ w(z_n)^{1-\alpha} }{ \sum_{i=1}^{K} w(z_i)^{1-\alpha}} + (1-\alpha) \bigg ( \frac{  w(z_n)^{1-\alpha}}{ \sum_{i=1}^{k} w(z_i)^{1-\alpha} } \bigg)^2 } \nabla_{z_n} \log w(z_n) . 
	\end{align}
    Hence, taking the expectation of this Wasserstein gradient $\nabla_{z_n} g(z_n)$ over the implicit dependent variables $\{ z_1, \dots, z_K \} \setminus z_n$ yields the intended form of the Wasserstein gradient of $\G_n(q_n)$.
\end{proof}

\section{Derivations of BW gradient for VR-IWAE bound} \label{sec:BW_derivation_VE_IWAE}

This section presents the explicit expression of the BW gradient of the VR-IWAE bound, together with the detailed derivation.
Recall that it suffices to consider the Wasserstein gradient of the VR-IWAE bound at the index $n = K$ as it remains invariant to the choice of the index $n$.
For notational convenience, let $G$ denotes the Wasserstein gradient of the VR-IWAE bound at the common variational distribution $q$, and let $g(z_1, \dots, z_K) := w(z_K)^{1-\alpha} / \sum_{i=1}^{K} w(z_i)^{1-\alpha}$.

We derive the two terms $a_*$ and $S_*$ characterizing the BW gradient of the negative VR-IWAE bound in \eqref{eq:BW_terms_VR_IWAE}.
It follows from \eqref{eq:Wass grad VR-IWAE} and the derivation of the BW gradient recapped in \Cref{sec:bures_wasserstein_geometry} that
\begin{align}
    a_* & = - \mathop{\E}_{z_K \sim q}[ G(z_K) ] \\
    & = - \mathop{\E}_{z_1, \dots, z_K \overset{i.i.d.}{\sim} q} \left[ \left( \alpha g(z_1, \dots, z_K) + (1-\alpha) g(z_1, \dots, z_K)^2 \right) \nabla_{z_K} \log w(z_K) \right].
\end{align}
To establish $S_*$, we derive the gradient of $g(z_1, \dots, z_K)$ with respect to $z_K$:
\begin{align}
    \nabla_{z_K} g(z_1, \dots, z_K) & = \frac{\nabla_{z_K} w(z_K)^{1-\alpha}}{\sum_{i=1}^{K} w(z_i)^{1-\alpha}} - \frac{w(z_K)^{1-\alpha} \nabla_{z_K} w(z_K)^{1-\alpha}}{(\sum_{i=1}^{K} w(z_i)^{1-\alpha})^2} \\
    & = (1-\alpha) \left( \frac{w(z_K)^{1-\alpha}}{\sum_{i=1}^{K} w(z_i)^{1-\alpha}} - (\frac{w(z_K)^{1-\alpha}}{\sum_{i=1}^{K} w(z_i)^{1-\alpha}})^2 \right) \nabla_{z_K} \log w(z_K) \\
    & = (1-\alpha) ( g(z_1, \dots, z_K) - g(z_1, \dots, z_K)^2 ) \nabla_{z_K} \log w(z_K) .
\end{align}
This leads to the explicit expression of the remaining term $S_*$:
\begin{align}
    S_* & = - \mathop{\E}_{z_K \sim q_K}\left[ \nabla_{z_K} G(z_K) \right] \\
    & = - \mathop{\E}_{z_1, \dots, z_K \overset{i.i.d.}{\sim} q}\bigg[ \nabla_{z_K} \big( \underbrace{ ( \alpha g(z_1, \dots, z_K) + (1-\alpha) g(z_1, \dots, z_K)^2 ) }_{ =: (*_1)} \underbrace{ \nabla_{z_K} \log w(z_K) }_{ =: (*_2)} \big) \bigg] \\
    & = - \mathop{\E}_{z_1, \dots, z_K \overset{i.i.d.}{\sim} q}\left[ ( \nabla_K (*_1) ) (*_2) + (*_1) ( \nabla_K (*_2) ) \right] .
\end{align}
Here, using the expression of $\nabla_{z_K} g(z_1, \dots, z_K)$, we have
\begin{align}
    \nabla_K (*_1) & = \alpha \nabla_{z_K} g(z_1, \dots, z_K) + 2 (1 - \alpha) g(z_1, \dots, z_K) \nabla_{z_K} g(z_1, \dots, z_K) \\
    & = \Big( (1 - \alpha) ( g(z_1, \dots, z_K) - g(z_1, \dots, z_K)^2 ) (\alpha + 2 (1 - \alpha) g(z_1, \dots, z_K) ) \Big) \nabla_{z_K} \log w(z_K) .
\end{align}
Plugging $\nabla_K (*_1)$ in together with $\nabla_K (*_2) = \nabla_{z_K}^2 \log w(z_K)$ completes the derivation.

\section{Additional Experimental Details}\label{sec:Additional Experimental Details}

The covariance update
$\Sigma \gets (I + \eta \widehat{S}_*)\, \Sigma\, (I + \eta \widehat{S}_*)$
has the form $H \Sigma H$ with $H = I + \eta \widehat{S}_*$. Since $H$ is symmetric, if
$\Sigma$ is positive definite and $H$ is nonsingular, the updated covariance remains
positive definite. Numerical instability can occur when $H$ becomes nearly singular or has very
large eigenvalues, causing excessive shrinkage or expansion of $\Sigma$ along some
directions. We therefore apply eigenvalue clipping to $H$.

Before applying the update we
\begin{enumerate}
    \item symmetrise the Monte Carlo estimate, $\widehat{S}_* \gets \tfrac{1}{2}(\widehat{S}_* +
    \widehat{S}_*^{\top})$. The matrix $\widehat{S}_*$ is symmetric in exact
    arithmetic but may accumulate small asymmetric floating-point errors;
    \item form $H = I + \eta \widehat{S}_*$ and compute its eigendecomposition
    $H = V \Lambda V^{\top}$;
    \item clamp the eigenvalues of $H$ to $[\lambda_{\mathrm{low}},
    \lambda_{\mathrm{high}}]=[0.1,1.5]$ and reconstruct $H \gets V\, \Lambda_{\mathrm{clipped}}\,
    V^{\top}$.
\end{enumerate}
After clipping, $H$ is invertible with condition number at most
$\lambda_{\mathrm{high}}/\lambda_{\mathrm{low}}$, and the resulting iterate
$H \Sigma H^{\top}$ is positive definite in exact arithmetic.

The clipping thresholds have a direct interpretation as per-step multiplicative bounds on directional variances along eigenvectors of \(H\): along such a direction, a single update cannot divide the variance by more than \(100\) or multiply it by more than \(2.25\). Thus the clipping is used as a numerical safeguard against singular or explosive covariance updates, rather than as a primary tuning parameter.

All our experiments share the same basic optimization setting as shown in Table \ref{tab:hyper}.

\begin{table}[htbp]
\centering
\caption{Optimization settings for the experiment (Monte Carlo setting is only for \cref{sec:Simulation Study: Mass-covering Property} and \cref{sec:Application: Census Binary Data}).}
\label{tab:hyper}
\begin{tabular}{lcc}
\toprule
\textbf{Method} & \textbf{Monte Carlo setting} & \textbf{Additional safeguard} \\
\midrule
BW-IW-ELBO
& $M=100,\ K=5$
& $H$ eigenvalue clip $[0.1,1.5]$ \\
FB-GVI
& $M=500, K=1$
& - \\
Euclidean IW-ELBO
& $M=100,\ K=5$
& gradient clipping at norm $1.0$ \\
Euclidean ELBO
& $M=500, K=1$
& gradient clipping at norm $1.0$ \\
\bottomrule
\end{tabular}
\end{table}

\subsection{Empirical Details for \Cref{sec:Simulation Study: Mass-covering Property}}\label{sec:Experimental Details for Eggbox}

We use a four-component eggbox target with equally weighted Gaussian components ($w_k=1/4$). The component means and covariances are listed in Table \ref{tab:eggbox-target}. The analytical target moments are
\begin{equation}
m^\star = \frac{1}{4}\sum_{k=1}^{4}m_k,\qquad
\Sigma^\star =
\frac{1}{4}\sum_{k=1}^{4}\Sigma_k
+
\frac{1}{4}\sum_{k=1}^{4}
(m_k-m^\star)(m_k-m^\star)^\top.
\end{equation}

Step sizes were selected by a grid search, choosing the value with the highest mean final objective across seeds. Resulted step size $\eta$ for BW-IW-ELBO is 0.5, for FB-GVI is 0.1, and learning rate $\mathrm{lr}=0.05/0.01$ for Euclidean IW-ELBO and Euclidean ELBO, respectively. All methods were initialized at $m_0=[6.0,12.0]^\top$ and $\Sigma_0=5I_2$, run for $1000$ iterations, and used $500$ target evaluations per iteration. Results are averaged over $10$ random seeds.

The eigenvalue clipping for BW-IW-ELBO is rarely activated: it occurred around $0.0\%-0.1\%$ of the total iterations across the \(10\) evaluation seeds.

\begin{table}[t]
\centering
\small
\caption{Target eggbox: four equally weighted Gaussian components.}
\label{tab:eggbox-target}
\begin{tabular}{ccc}
\toprule
$k$ & $m_k$ & $\Sigma_k$ \\
\midrule
$1$ &
$[3,\,3]^\top$ &
$\begin{bmatrix}
1.0 & -0.8\\
-0.8 & 1.2
\end{bmatrix}$ \\[4pt]

$2$ &
$[5,\,5]^\top$ &
$\begin{bmatrix}
0.6 & 0.1\\
0.1 & 0.3
\end{bmatrix}$ \\[4pt]

$3$ &
$[7,\,2]^\top$ &
$\begin{bmatrix}
0.5 & -0.2\\
-0.2 & 0.3
\end{bmatrix}$ \\[4pt]

$4$ &
$[9,\,6]^\top$ &
$0.5\,I_2$ \\
\bottomrule
\end{tabular}
\end{table}

\subsection{Empirical Details for \Cref{sec:Simulation Study: Performance of gradient estimators}}\label{sec:Experimental Details for BLR experiment}

\textbf{SNR-scaling experiment}

We use a synthetic Bayesian logistic regression problem to evaluate the SNR
scaling of the three gradient estimators. For each latent dimension
$d\in\{20,50,80\}$, we generate a fixed dataset with $n=10$ observations.
Specifically, we draw $z_{\mathrm{true}}\sim\mathcal N(0,I_d)$ and
$\tilde X_i\sim\mathcal N(0,I_d)$, then set
$$
X_i = \sqrt{c/d}\,\tilde X_i,
\qquad
y_i \sim \mathrm{Bernoulli}(\sigma(X_i^\top z_{\mathrm{true}})),
$$
where $\sigma(t)=(1+\exp(-t))^{-1}$. We rescale the features with a fixed constant $c = 8$, chosen so that the prior-predictive logit variance $\mathrm{Var}(x_i^\top z_{\mathrm{true}}) = c$
is held constant across all $d$. This keeps the
logit scale comparable across dimensions. The prior is
$z\sim\mathcal N(0,I_d)$, and throughout this experiment the
variational distribution is fixed to
$
q(z)=\mathcal N(0,I_d)
$.
No optimization is performed in this SNR-scaling experiment; its goal is to isolate the Monte Carlo
behaviour of the gradient estimators.

Here, $X$ is the fixed design matrix and $y$ is the observations. Let
$
\ell(z) = \log p(y,z)-\log q(z) =\log w
$
denote the log importance weight. For the Wasserstein estimator, we evaluate
the gradient at the fixed point $z_0=0$. Given auxiliary samples
$z_1,\ldots,z_{K-1}\sim q$, the implemented estimator is
$$
\widehat g_{\mathrm W}(z_0)
=
\alpha_0^2 \nabla_z \ell(z_0),
\qquad
\alpha_0
=
\frac{\exp{\ell(z_0)}}
{\exp{\ell(z_0)}+\sum_{k=1}^{K-1}\exp{\ell(z_k)}} .
$$
For the BW-projected estimator, we draw $z_1,\ldots,z_K\sim q$ and compute
the affine components in~\eqref{eq:bw_iw_elbo1}--\eqref{eq:bw_iw_elbo2}.
The reported BW vector is formed by concatenating
$\widehat a_*$ and $\operatorname{diag}(\widehat S_*)$, so it has dimension $2d$. For the Euclidean estimator, the proposal is the parameter-dependent
Gaussian $q_m=\mathcal N(m,I_d)$ (covariance fixed to $I_d$), with the
corresponding log importance weight
$$
\ell_m(z)=\log p(y,z)-\log q_m(z).
$$
We use the reparameterized mean-gradient estimator (the IW-ELBO gradient of \cite{rainforth2018tighter}),
$$
\widehat g_{\mathrm E}
=
\nabla_m
\left[
\operatorname*{logsumexp}_{1\le k\le K}\,\ell_m(m+\epsilon_k)
-\log K
\right]\bigg|_{m=0},
\qquad
\epsilon_k\sim\mathcal N(0,I_d).
$$
For a given estimator, let $h_K$ denote one independent single-replicate
estimate. The $M$-replicate estimator is
$g_{M,K}=\frac{1}{M}\sum_{m=1}^M h_K^{(m)}$,
where the $h_K^{(m)}$ are independent. For the $K$-scaling experiment we fix
$M=1$ and use
$K\in\{10,100,200,500,1{,}000,2{,}000,4{,}000,8{,}000,10{,}000\}$.
For each pair $(K,d)$, each estimator, and each of $10$ random seeds, we draw
$200$ independent realisations, giving $2000$ pooled realisations for the
reported SNR. For the $M$-scaling experiment we fix $K=100$ and use $M\in\{1,2,4,8,16\}$.
For each $M$, estimator, dimension, and seed, we form $500$ independent
realisations of $g_{M,K}$, again pooling over $10$ seeds.

The SNR is computed coordinatewise. Given pooled realisations
$g^{(1)},\ldots,g^{(R)}\in\mathbb R^p$, where $p=d$ for the Wasserstein and
Euclidean estimators and $p=2d$ for the BW estimator, we compute
$$
\mathrm{SNR}_j
=
\frac{|\bar g_j|}{s_j},
\qquad
\bar g_j=\frac{1}{R}\sum_{r=1}^R g_j^{(r)},
$$
where $s_j$ is the sample standard deviation of the $j$th coordinate across
the pooled realisations. The scalar SNR reported in
Figure~\ref{fig:snr_scaling} and Table~\ref{tab:snr_scaling} is the average
of $\mathrm{SNR}_j$ over coordinates. Since the three estimators are defined
under different geometries, these SNR magnitudes are used only to estimate
scaling slopes, not to compare absolute estimator quality.

The $\log K$ slopes are obtained by ordinary least squares regression of
$\log\mathrm{SNR}$ on $\log K$ over the fitting window $K\ge 200$ at
$M=1$. The $\log M$ slopes are obtained by regressing $\log\mathrm{SNR}$ on
$\log M$ at fixed $K=100$. Standard errors are computed by a leave-one-seed-out
jackknife. If $\widehat z_{(-s)}$ is the fitted slope after omitting seed
$s$ and there are $S=10$ seeds, the reported standard error is
$$
\mathrm{SE}_{\mathrm{jack}}
=
\left[
\frac{S-1}{S}
\sum_{s=1}^S
\left(\widehat z_{(-s)}
-
\frac{1}{S}\sum_{r=1}^S \widehat z_{(-r)}
\right)^2
\right]^{1/2}.
$$

As a diagnostic for the finite-$K$ regime, we also compute the effective
sample size of the importance weights at $K=10{,}000$,
$$
\mathrm{ESS}
=
\frac{1}{\sum_{k=1}^K \bar w_k^2},
\qquad
\bar w_k
=
\frac{\exp{\ell(z_k)}}
{\sum_{j=1}^K \exp{\ell(z_j)}}.
$$
Averaging over $80$ independent draws gives ESS values
$229.6$, $210.0$, and $158.9$ for $d=20$, $50$, and $80$, respectively.
These values indicate that, even at $K=10{,}000$, the Euclidean estimator remains
in a pre-asymptotic importance-weight regime.

\textbf{Convergence }

The convergence study uses a low-dimensional synthetic Bayesian
logistic regression model, separate from the SNR-scaling experiment above.
We draw $n=1000$ observations with $d=3$ features (an intercept together with
two covariates $X_{i,j}\sim\mathcal N(0,1)$), generate labels
$y_i\sim\mathrm{Bernoulli}\!\left(\sigma(X_i^\top z^\star)\right)$ with
$z^\star=[1.04,\,0.61,\,-1.25]^\top$, and place a Gaussian prior
$z\sim\mathcal N(0,10\,I_3)$ on the coefficient $z$. The variational family is a full-covariance
Gaussian $q(z)=\mathcal N(m,\Sigma)$.

The eigenvalue clipping for BW-IW-ELBO is rarely activated: across the \(10\) evaluation seeds, it occurred in \(0.2\%\), \(1.7\%\), \(2.9\%\), and \(8.5\%\) of iterations at \(K=10,50,100,200\), respectively. The Euclidean baseline uses ADAM with tuning weights $(\beta_1, \beta_2) = (0.9, 0.999)$ and gradient-norm clipping at $1.0$; this is analogous to the BW eigenvalue clip in serving as a numerical safeguard, although it acts on the Euclidean parameter gradient rather than on the multiplicative covariance update.
 
For each method and each \(K\), step sizes are selected by a grid search over three held-out seeds \(\{9001,9002,9003\}\), disjoint from the evaluation seeds, with each configuration run for the full $3{,}000$ iterations. The grids are $\eta \in \{0.05, 0.1, 0.5, 1.0, 2.0, 4.0\}$ for BW and $\mathrm{lr} \in \{0.001, 0.005, 0.01, 0.02, 0.05, \\0.1, 0.5\}$ for the Euclidean baseline. A step is declared \emph{viable} if its mean final standard ELBO over the selection seeds lies within \(0.5\) nats of the best mean grid value and its worst selection-seed final ELBO is no more than \(50\) nats below that best mean value. Among viable steps we select the largest, which favors fast convergence at matched final quality. The selected values are $\eta = \{0.1, 1.0, 2.0, 4.0\}$ and $\mathrm{lr} = \{0.05, 0.02, 0.02, 0.01\}$ for $K = \{10, 50, 100, 200\}$. At \(K=200\), a post-selection probe at \(\eta=8.0\) diverged on two of the three selection seeds; in those runs, eigenvalue clipping was activated in \(47\%\)--\(48\%\) of iterations. The remaining seed converged with no clipping, placing the selected value \(\eta=4.0\) near the empirical stability boundary. Figure \ref{fig:heatmaps} reports the step-size selection landscape, measured by the mean final standard ELBO over the three selection seeds. The two methods show opposite trends as $K$ increases: the selected Euclidean learning rate decreases from $0.05$ to $0.01$, whereas the selected BW step increases from $0.1$ to $4.0$. This pattern is consistent with the BW gradient formula: increasing $K$ reduces the update magnitude through the squared normalized weights, so larger steps are needed, while the non-degenerate SNR keeps the estimator sufficiently stable.
 
Progress is measured by the standard ELBO of the current iterate \((m_t,\Sigma_t)\), estimated using \(n_{\mathrm{eval}}=500\) Monte Carlo samples with a fixed random seed; evaluation time is excluded from all timing measurements. The \emph{final ELBO} is the average of the last \(100\) raw evaluations. The \emph{common convergence threshold} is defined as $1$ nat below the best median final ELBO across both methods, giving \(-473.72\). A run is declared converged at the first iteration \(t\) such that the raw evaluation remains at or above this threshold for \(100\) consecutive iterations. All reported convergence statistics are medians and quartiles over $10$ evaluation seeds; every run of both methods reached the threshold.

Wall-clock time to threshold is the cumulative optimization time at the convergence iteration, excluding evaluation overhead. \Cref{tab:per_iter_cost} reports per-iteration costs. Both methods have forward costs that grow with \(M\times K\), but the Euclidean baseline backpropagates through the full \(M\times K\)-sample graph. The BW update avoids this backward pass, using the \(K\) samples for importance weights and evaluating one score and one closed-form Hessian per outer replicate.

\begin{table}[htbp]
    \centering
    \caption{Per-iteration cost (ms/iteration), mean $\pm$ standard
    deviation over the $10$ evaluation seeds, excluding ELBO-evaluation overhead.}
    \begin{tabular}{cccc}
        \toprule
        K & BW-IW-ELBO & Euclidean IW-ELBO & BW/Euc \\
        \midrule
        10  & $51.8 \pm 2.1$ & $70.4 \pm 3.5$  & 0.74x \\
        50  & $58.8 \pm 0.1$ & $85.5 \pm 0.3$  & 0.69x \\
        100 & $64.1 \pm 0.1$ & $94.2 \pm 0.7$  & 0.68x \\
        200 & $73.5 \pm 0.1$ & $112.4 \pm 0.2$ & 0.65x \\
        \bottomrule
    \end{tabular}
    \label{tab:per_iter_cost}
\end{table}

\begin{figure}[htbp]
    \centering
    \begin{subfigure}[b]{0.48\textwidth}
        \centering
        \includegraphics[width=\textwidth]{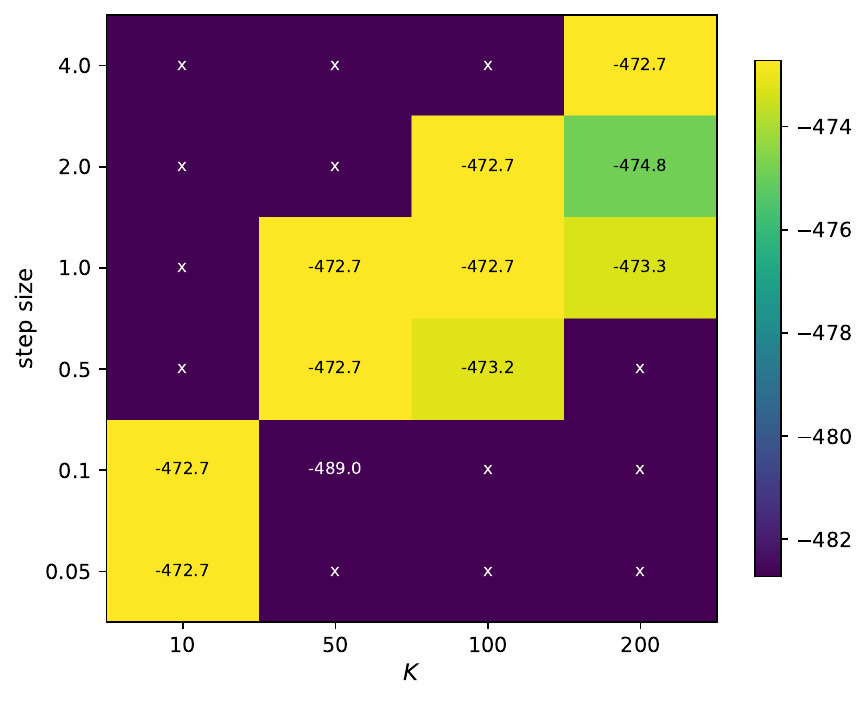}
        \caption{BW-IW-ELBO}
    \end{subfigure}
    \hfill 
    \begin{subfigure}[b]{0.48\textwidth}
        \centering
        \includegraphics[width=\textwidth]{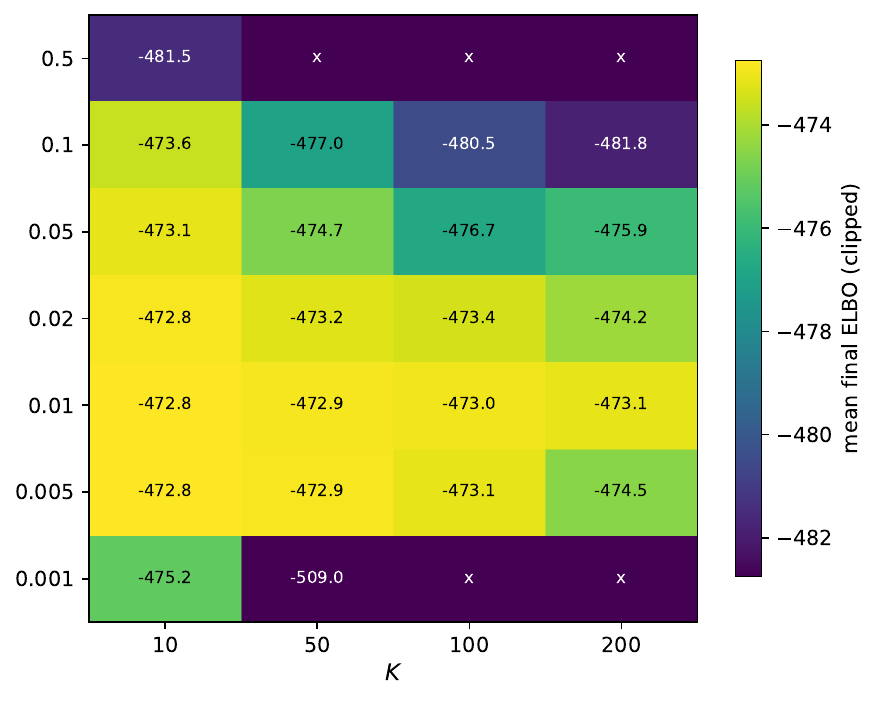}
        \caption{Euclidean IW-ELBO}
    \end{subfigure}    
    \caption{Step-size selection landscape: mean final standard ELBO over three selection seeds for every $(K, \text{step size})$ combination. The symbol \(\times\) marks settings that fail the stability guard, either because the run is non-finite or because the worst selection seed falls more than \(50\) nats below the best grid value; the color scale is clipped below best minus \(10\) nats.}
    \label{fig:heatmaps}
\end{figure}

\subsection{Empirical Details for \Cref{sec:Application: Census Binary Data}}\label{sec:Empirical Details for census Data}

\textbf{Census Experiment: $\mathbf{d=9}$}

Across the 10 BW-IW-ELBO runs, the $H$ eigenvalue clipping safeguard was activated on average in $12.9\%$ of iterations, with a range from $4.7\%$ to $41.9\%$. The median activation rate was $9.75\%$. The final minimum eigenvalue of $\Sigma$ was approximately $7.50 \times 10^{-5}$ in all runs, while the smallest minimum eigenvalue encountered during training across the ten runs ranged from $1.37 \times 10^{-6}$ to $5.36 \times 10^{-6}$.

The prior is $z \sim\mathcal{N}(0,10I_{9})$. All variational methods are initialized at
$m_0 = 0 \in \mathbb{R}^{9}$ and
$\Sigma_0 = 5I_{9}$. All methods are run for $1000$ iterations and averaged over $10$ independent training seeds. Step sizes were selected in a preliminary grid search using the highest exponentially smoothed moving-average objective over the same $1000$-iteration horizon, with smoothing window size $50$. The selected step sizes are $\eta=10^{-3}$ for BW-IW-ELBO, $\eta=10^{-4}$ for FB-GVI, $\mathrm{lr}=10^{-2}$ for Euclidean IW-ELBO, and $\mathrm{lr}=10^{-2}$ for Euclidean ELBO. For IW-ELBO methods, the moving-average criterion is applied to the IW-ELBO. For standard-ELBO methods, it is applied to the standard ELBO.

For the MCMC reference, we use a random-walk Metropolis-Hastings sampler with proposal
$$z' \sim \mathcal{N}\!\left(z_t,\ c\,(X^\top X)^{-1}\right), \qquad c = 3.0$$
The first chain is initialized at
$
\widehat{z}_{\mathrm{OLS}}
=
(X^{\top}X)^{-1}X^{\top}y,
$
and the remaining chains are initialized at perturbed versions of this point. We run four independent chains for $250{,}000$ iterations each, discarding the first $50{,}000$ iterations as burn-in and retaining $200{,}000$ samples per chain. The observed acceptance rates for the four chains are $48.34\%$, $48.81\%$, $48.42\%$, and $48.73\%$, respectively. Convergence is assessed using the Gelman--Rubin diagnostic, with $\widehat{R}_{\max} = 1.0004$. Trace plots and autocorrelations are reported in Figures \ref{fig:mcmc_trace} and Figure \ref{fig:mcmc_auto}.

\begin{figure}
    \centering
    \includegraphics[width=0.7\linewidth]{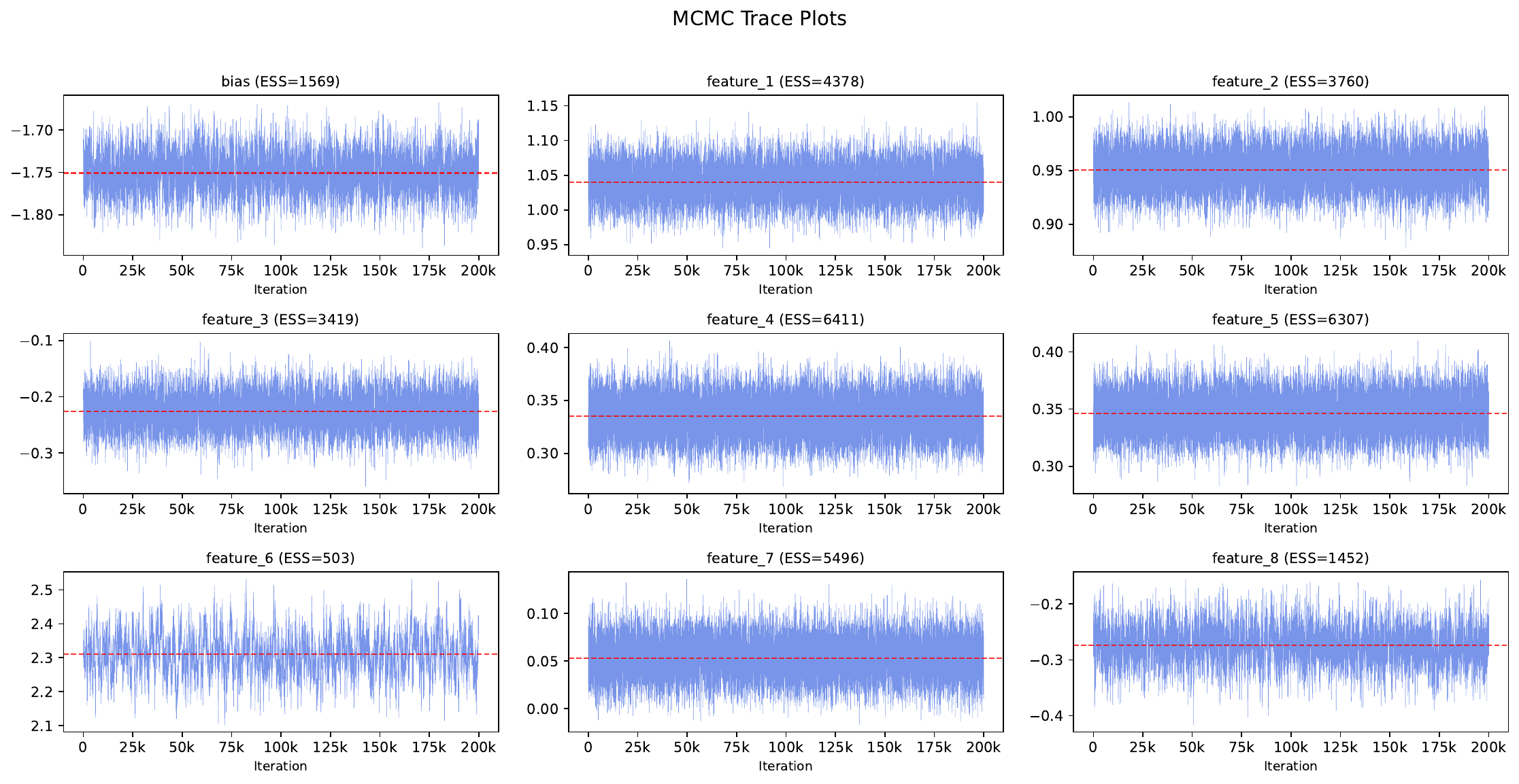}
    \caption{MCMC trace plots, which visually confirms that the MCMC chain has converged.}
    \label{fig:mcmc_trace}
\end{figure}

\begin{figure}
    \centering
    \includegraphics[width=0.7\linewidth]{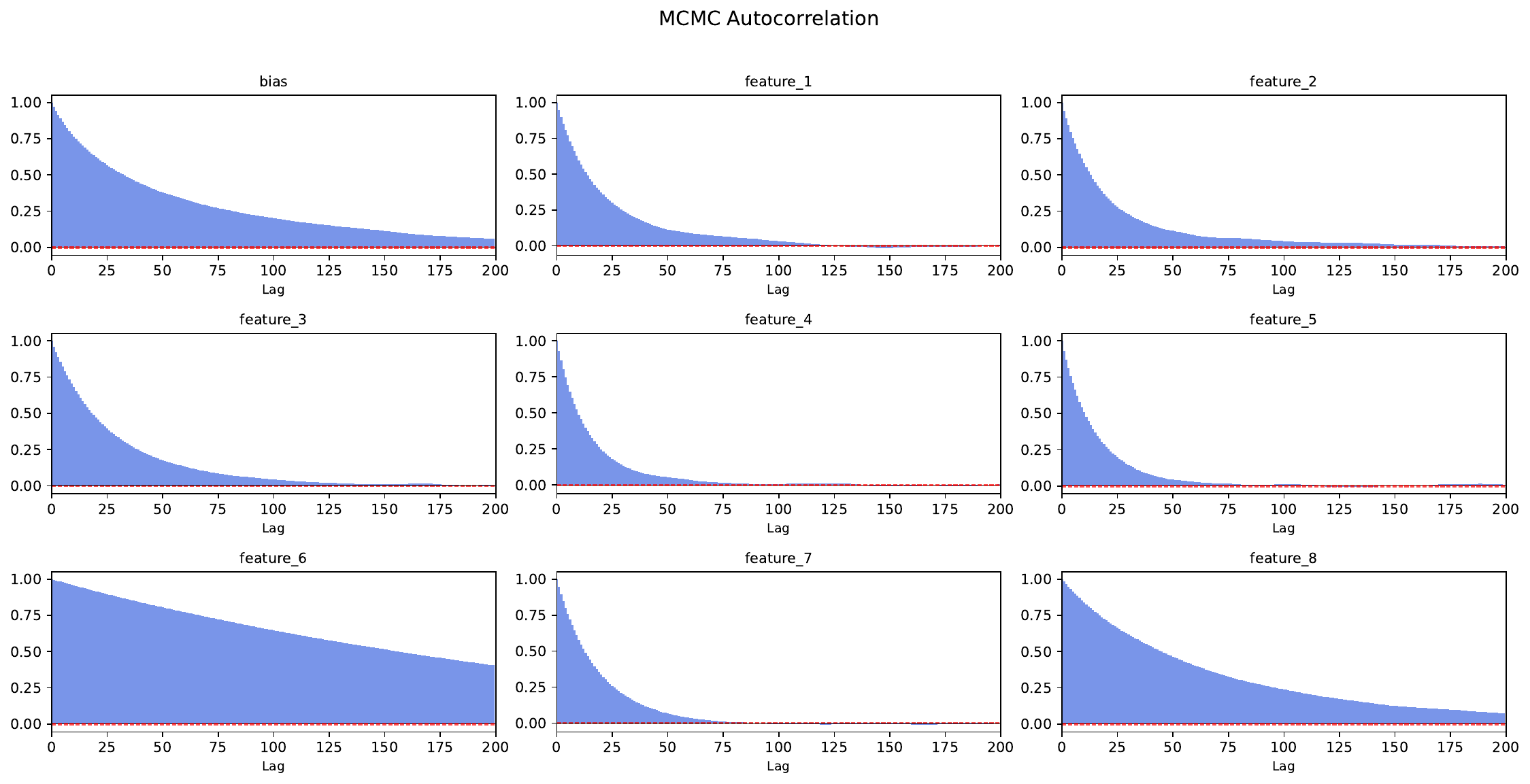}
    \caption{MCMC Autocorrelation}
    \label{fig:mcmc_auto}
\end{figure}

\begin{figure}
    \centering
    \includegraphics[width=1\linewidth]{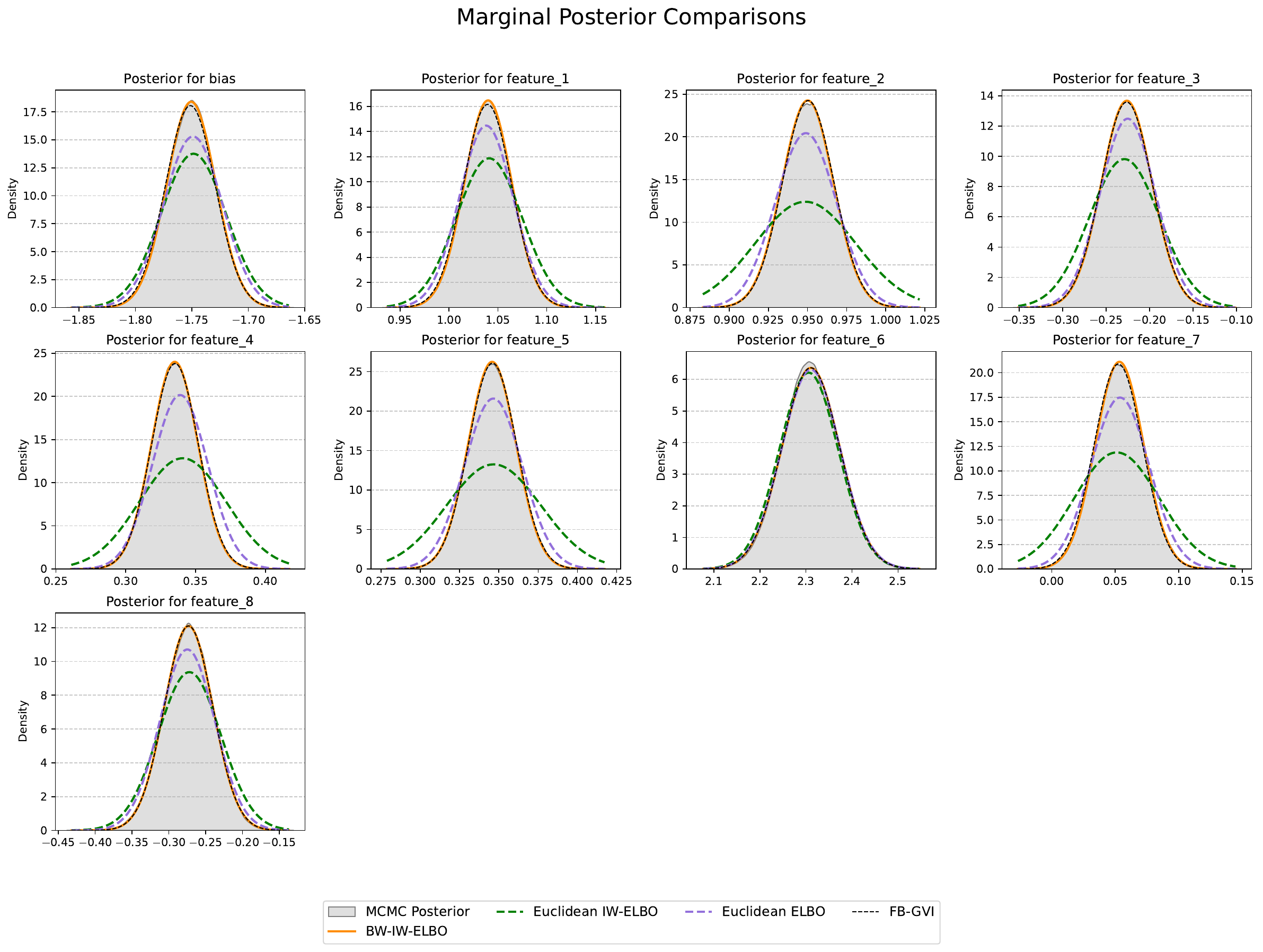}
    \caption{Marginal posteriors estimated by various VI methods and MCMC. BW-IW-ELBO estimates closely match those of MCMC.}
    \label{fig:marginal_post}
\end{figure}

\textbf{High-Dimensional Census Experiment: $\mathbf{d=96}$}

We repeat the Census experiment using the full feature set. The original design matrix contains $104$ columns, including the intercept. All non-intercept columns are standardized using $z$-scores, while the intercept is left unscaled. Since the one-hot encoded design contains exact linear dependencies, we apply column-pivoted QR after standardization and keep a maximal linearly independent subset of columns, forcing the intercept to be retained. This gives a full-rank design matrix with $d=96$ and condition number $8.0$.

The prior is $z \sim\mathcal{N}(0,10I_{96})$. All variational methods are initialized at
$m_0 = 0 \in \mathbb{R}^{96}$ and
$\Sigma_0 = 5I_{96}$. All methods are run for $1000$ iterations and averaged over $10$ independent training seeds, with smoothing window size $50$. The selected step sizes are $\eta=10^{-3}$ for BW-IW-ELBO, $\eta=5\times10^{-5}$ for FB-GVI, $\mathrm{lr}=10^{-2}$ for Euclidean IW-ELBO and Euclidean ELBO. Across the ten BW-IW-ELBO runs, eigenvalue clipping was activated in $18.7\%$ of iterations on average, ranging from $13.6\%$ to $26.9\%$. 

Because the $d=96$ posterior is sharply concentrated, we use a Laplace approximation as the reference posterior.  The Laplace approximation is appropriate here because the Bayesian logistic-regression log posterior is strictly concave, and the large ratio $N/d \approx 470$ places the posterior in a near-Gaussian large-sample regime. The MAP is computed by Newton's method on the analytic gradient and Hessian of
the log posterior. Writing $p_z = \sigma(Xz)$ and using a $\mathcal{N}(0,\sigma^2 I)$
prior with $\sigma^2 = 10$, the posterior precision at $z$ is
$$
\Lambda(z) = X^\top \operatorname{diag}\!\big(p_z \odot (1-p_z)\big) X
+ \sigma^{-2} I_{96},
$$
and the Laplace reference is
$p_{\mathrm{Lap}} = \mathcal{N}\!\big(\widehat z_{\mathrm{MAP}},\,
\Lambda(\widehat z_{\mathrm{MAP}})^{-1}\big)$.
Initialized at $z^{(0)} = 0$ and capped at $100$ iterations, Newton's method
converged after $9$ iterations---stopping when the maximum absolute Newton
step fell below $10^{-10}$---with a final maximum absolute gradient component of $1.22\times 10^{-12}$,
confirming an essentially exact MAP. The marginal posterior standard deviations
range from $0.0123$ to $0.0645$, and the Laplace covariance has condition
number $74.09$.

For each fitted Gaussian approximation $q=\mathcal{N}(m,\Sigma)$, we compute the final ELBO, normalized effective sample size and forward KL. The KL values in Table~\ref{tab:census-highdim} are Gaussian--Gaussian divergences relative to the Laplace reference. Specifically, for
$p=\mathcal{N}(m_p,\Sigma_p)$ and $q=\mathcal{N}(m_q,\Sigma_q)$,
$$
\mathrm{KL}(p\,\|\,q)
=
\frac{1}{2}
\left[
\operatorname{tr}(\Sigma_q^{-1}\Sigma_p)
+
(m_q-m_p)^\top \Sigma_q^{-1}(m_q-m_p)
- d
+
\log\frac{\det \Sigma_q}{\det \Sigma_p}
\right]
$$
Here $p$ denotes the Laplace reference and $q$ denotes the fitted variational Gaussian.

\end{document}